\title{Quantum computational logic with mixed states}
\author{\sc Hector
Freytes\thanks{Consejo Nacional de Investigaciones Cient\'ificas y
T\'ecnicas: Instituto Argentino de Matem\'atica (IAM), Saavedra 15 -
3er Piso - 1083  Buenos Aires, Argentina.} \thanks{Corresponding author Email: hfreytes@gmail.com}  $^{,1}$ and  Graciela
Domenech\thanks{Consejo Nacional de Investigaciones Cient\'ificas y
T\'ecnicas.} $^{,2}$ }
\date{{\small
1. Universit\`{a} di Cagliari, Dipartimento di Scienze Pedagogiche e Filosofiche, \\ Via Is Mirrionis 1,09123, Cagliari-Italia \\ 2. Instituto de Astronom\'ia  y F\'isica del Espacio (IAFE)\\
Casilla de Correo 67, Sucursal 28, 1428 Buenos Aires, Argentina}}
\begin{document}

\bibliographystyle{plain}

\maketitle

\begin{abstract}

\noindent Using an algebraic framework we solve a problem posed
in \cite{GIU1} and \cite{DGG} about the axiomatizability of a type
quantum computational logic related to fuzzy logic. A Hilbert-style
calculus is developed obtaining an algebraic strong completeness
theorem.

\end{abstract}

\begin{small}

{\em Keywords: quantum computational logic, fuzzy logic,
PMV-algebras.}

{\em Mathematics Subject Classification 2000: 06B99, 03B52, 06D35.}

\end{small}

\bibliography{pom}

\newtheorem{theo}{Theorem}[section]

\newtheorem{definition}[theo]{Definition}

\newtheorem{post}[theo]{Postulate}

\newtheorem{lem}[theo]{Lemma}

\newtheorem{prop}[theo]{Proposition}

\newtheorem{coro}[theo]{Corollary}

\newtheorem{exam}[theo]{Example}

\newtheorem{rema}[theo]{Remark}{\hspace*{4mm}}

\newtheorem{example}[theo]{Example}

\newcommand{\proof}{\noindent {\em Proof:\/}{\hspace*{4mm}}}

\newcommand{\qed}{\hfill$\Box$}

\section*{Introduction}

In the 1980s, Richard Feynman suggested that a quantum computer
based on quantum logic would evenly simulate quantum
mechanical systems. His ideas have spawned an active area of
research in physics which gave rise, in turn, to different logical
approaches to quantum computation. Quantum systems can simulate all computations which can be done by classical systems; however, one of the main advantages of quantum computation and quantum algorithms is that they can speed up computations \cite{KL}.

Standard quantum computing is
based on quantum systems with finite dimensional Hilbert spaces,
specially ${\mathbb{C}}^2$, the two-dimensional state space of a
{\it qbit}. A qbit state (the quantum counterpart of the classical
bit) is represented by a unit vector in ${\mathbb{C}}^2$ and,
generalizing for a positive integer $n$, {\it $n$-qbits} are pure
states represented by unit vectors in ${\mathbb{C}}^{2^n}$. They
conform the information units in quantum computation. These state
spaces only concerned with the ``static'' part of quantum computing
and possible logical systems can be founded in the Birkhoff and von
Neumann quantum logic based on the Hilbert lattices ${\cal
L}({\mathbb{C}}^{2^n})$  \cite{DUNN}. Similarly to the classical
computing case, we can introduce and study the behavior of a number
of {\it quantum logical gates} (hereafter quantum gates for short)
operating on qbits, giving rise to ``new forms" of quantum logic. These
gates are mathematically represented by {\it unitary operators} on
the appropriate Hilbert spaces of qbits. In other words, standard
quantum computation is mathematically founded on ``{\it
qbits-unitary operators}'' and only takes into account reversible
processes. This framework can be generalized to a powerful
mathematical representation of quantum computation  in which the
qbit states are replaced by {\it density operators} over Hilbert
spaces and unitary operators by linear operators acting over
endomorphisms of Hilbert spaces called  {\it quantum operations}.
The new model ``{\it density operators-quantum operations}'' also
called ``{\it quantum computation with mixed states}'' (\cite{AKN,
TA}) is equivalent in computational power to the standard one but
gives a place to irreversible processes as measurements in the
middle of the computation.

In \cite{GIU1} and \cite{DGG}, a quantum gate system called {\it
Poincar\'{e} irreversible quantum computational system} (for short
${\cal IP}$-system) was developed. Recently it was proved that the
mentioned quantum gates system can be seen in the framework given by
``density operators - quantum operations'' \cite{FLS}. The ${\cal
IP}$-system is an interesting quantum gates system specially for two
reasons: it is related to continuous $t$-norms and subsequent
generalizations allow to connect this system with sequential effect
algebras \cite{GU}, introduced to study the sequential action of
quantum effects which are unsharp versions of quantum events
\cite{GU1, GU2}.

Our study is motivated by the ${\cal IP}$-system, and mainly by the
following question proposed by the authors in \cite{GIU1} and
\cite{DGG}: \begin{center} {\it ``The axiomatizability of quantum
computational logic is an open problem.''} \end{center}

More precisely, in this paper we study the ${\cal IP}$-system from a
logic-algebraic perspective. A Hilbert-style calculus is develop
obtaining a strong completeness theorem respect to probabilistic
semantics associated with the ${\cal IP}$-system. The paper is
structured as follows: Section 1 contains generalities on universal
algebra and algebraic structures associated with fuzzy logic. In
Section 2, we briefly resume basic physical notions of  mathematical
approaches to quantum computation, with emphases in the approach of
``density operators - quantum operations''. This formalism allows to
build probabilistic models for quantum computational logics with
mixed states and provides the formal connection between the ${\cal
IP}$-system and  fuzzy logics based on continuous $t$-norms. In
Section 3, algebraic structures associated to quantum computation
are introduced. Specifically, we give an expansion of the equational
class known as square root quasi $MV$-algebras, expansion that we
call ``square root quasi $PMV$-algebra'' (or $\sqrt {qPMV}$-algebra
for short). In Section 4 we study a subvariety of $\sqrt
{qPMV}$-algebras called {\it Irreversible Poincar\'{e} Alegebras}.
They constitute the algebraic framework for the ${\cal IP}$-system.
Finally, in Section 5  we give a Hilbert-style calculus based on
probabilistic models related to the ${\cal IP}$-system and we
develop a ``non standard'' method of algebrization of this calculus
which allows to obtain an algebraic strong completeness theorem.

\section{Basic notions}

We freely use all basic notions of universal algebra that can be
found in {\rm \cite{BS}}. Let $\sigma$ be a type of algebras and let
${\cal A}$ be a class of algebras of type $\sigma$. For all algebras
$A, B$ in ${\cal A}$, $[A,B]_{\cal A}$ will denote the set of all
${\cal A}$-homomorphisms. An algebra $A$ in ${\cal A}$ is {\it
injective} iff for every monomorphism  $f \in [B, C]_{{\cal A}}$ and
every  $g \in [B,A]_{{\cal A}}$ there  exists  $h \in [C, A]_{{\cal
A}}$ such that $hf = g$. We denote by $Term_{{\cal A}}$ the
absolutely free algebra of type $\sigma$ built from the set of
variables $V = \{x_1,x_2, \ldots \}$. Each element of $Term_{{\cal
A}}$ is referred as an ${\cal A}$-term. For $t\in Term_{{\cal A}}$
we often write $t$ as $t(x_1,x_2, \ldots,x_n)$ to indicate that the
variables occurring in $t$ are among $x_1,x_2, \ldots,x_n$. Let $A
\in {\cal A}$. If $t(x_1,x_2, \ldots,x_n)\in Term_{{\cal A}}$ and
$a_1, \ldots a_n \in A$, by $t^A[a_1, \ldots, a_n ]$ we denote the
result of the application of the term operation $t^A$ to the
elements $a_1, \ldots a_n \in A$. A {\it valuation} in $A$ is a map
$v:V\rightarrow A$. Of course, any valuation $v$ in $A$ can be
uniquely extended to an ${\cal A}$-homomorphism $v:Term_{\cal A}
\rightarrow A$ in the usual   way, i.e., if $t_1, \ldots, t_n \in
Term_{\cal A}$ then $v(t(t_1, \ldots, t_n)) = t^A(v(t_1), \ldots,
v(t_n))$. Thus, valuations are identified with ${\cal
A}$-homomorphisms from the absolutely free algebra. If $t,s \in
Term_{\cal A}$, $ A \models t = s$ means that for each valuation $v$
in $A$, $v(t) = v(s)$ and ${\cal A} \models t = s$ means
that for each $A\in {\cal A}$, $A \models t = s$. \\

Now we introduce some basic notions in algebraic structures
associated to fuzzy logic. An {\it MV-algebra} {\rm \cite{CDM}} is an
algebra $ \langle A, \oplus, \neg,  0  \rangle$ of type $ \langle 2,
2, 0 \rangle$ satisfying the following equations:

\begin{enumerate}
\item[MV1]
$ \langle A, \oplus, 0  \rangle$ is an abelian monoid,

\item[MV2]
$\neg \neg x = x$,

\item[MV3]
$x \oplus \neg 0 = \neg 0$,

\item[MV4]
$\neg (\neg x \oplus y) \oplus y =  \neg (\neg y \oplus x) \oplus x$.

\end{enumerate}

\noindent
We denote by ${\cal MV}$ the variety of $MV$-algebras. In agreement with the usual $MV$-algebraic operations we define

\begin{enumerate}
\item[]
$x \odot y = \neg(\neg x \oplus \neg y)$, \hspace{1cm}  $x
\rightarrow y = \neg x \oplus y$,

\item[]
$x\land y = x \odot (x \rightarrow y)$,  \hspace{1cm} $x\lor y = (x
\rightarrow y) \rightarrow y$,

\item[]
$1 = \neg 0$.

\end{enumerate}

\noindent On each $MV$-algebra $A$ we can define an order $x \leq y $
iff $x \rightarrow y = 1$. This order turns $ \langle A, \land,
\lor, 0, 1  \rangle$ in a distributive bounded lattice with $1$ the
greatest element and $0$ the smallest element.

A very important example of $MV$-algebra is $[0,1]_{MV} = \{[0,1], \oplus, \neg, 0 \}$ such that $[0,1]$ is the real unit segment and $\oplus$ and $\neg$ are defined as follows: $$x\oplus y = \min(1, x+y) \hspace{1cm} \neg x = 1-x $$ The derivate operations in $[0,1]_{MV}$ are given by
$x\odot y = \max(0, x+y-1)$ (called \L ukasiewicz t-norm) and $x\rightarrow y = \min (1, 1-x+y)$. Finally the $MV$-lattice structure is the natural order in $[0,1]$.

Let $A$ be an $MV$-algebra. We define for all $a\in A$, $ \bigodot_1 a
= a$ and $\bigodot_{n+1} a = (\bigodot_n a) \odot a$. An element $a$
in $A$ is called {\it nilpotent} iff there exists a natural number
$n$ such that $\bigodot_n a = 0$.

\begin{prop}\label{SIMPMV} {\rm \cite[Theorem 3.5.1]{CDM}}
For every $MV$-algebra $A$, $A$ is simple iff $A$ is no trivial and
for each element $x<1$ in $A$, $x$ is a nilpotent element.

\qed
\end{prop}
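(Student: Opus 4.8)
The plan is to reduce the statement to the standard ideal theory of $MV$-algebras. I would first recall the correspondence, valid in any $MV$-algebra, between congruences and ideals, under which $A$ is simple exactly when $\{0\}$ and $A$ are its only ideals. I would also use the explicit description of the principal ideal generated by an element $a$, namely $\langle a \rangle = \{x \in A : x \leq \bigoplus_n a \text{ for some } n \in \mathbb{N}\}$, where $\bigoplus_n a$ denotes the $n$-fold $\oplus$-sum of $a$ with itself. The bridge between nilpotency and these ideals is the De Morgan identity $\neg(\bigodot_n x) = \bigoplus_n \neg x$, which follows by induction from $\neg(x \odot y) = \neg x \oplus \neg y$ together with $\neg \neg x = x$. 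In particular, since $\neg y = 0$ iff $y = 1$, one gets $\bigodot_n x = 0$ if and only if $\bigoplus_n \neg x = 1$.

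For the forward implication, I would assume $A$ simple, so $A$ is nontrivial by definition. Given $x < 1$ we have $\neg x > 0$, hence $\neg x \neq 0$, so the principal ideal $\langle \neg x \rangle$ is nonzero and therefore equals $A$ by simplicity. Thus $1 \in \langle \neg x \rangle$, which gives $1 = \bigoplus_n \neg x$ for some $n$; the identity above then yields $\bigodot_n x = 0$, so $x$ is nilpotent.

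For the converse, I would assume $A$ nontrivial with every $x < 1$ nilpotent, and take an arbitrary nonzero ideal $I$. Choosing $a \in I$ with $a \neq 0$, we have $\neg a < 1$, so $\neg a$ is nilpotent and $\bigodot_n \neg a = 0$ for some $n$. The De Morgan identity then gives $\bigoplus_n a = 1$, and since $I$ is closed under $\oplus$ this forces $1 \in I$, whence $I = A$. Consequently $\{0\}$ and $A$ are the only ideals and $A$ is simple.

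I expect the only delicate point to be the invocation of the ideal–congruence correspondence and the explicit form of the principal ideal; both are standard facts available from \cite{CDM}, so the genuine content reduces to the short negation-duality computation linking nilpotency of $x$ to the equality $\bigoplus_n \neg x = 1$.
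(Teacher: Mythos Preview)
The paper does not actually prove this proposition: it is stated with an immediate \qed and a citation to \cite[Theorem 3.5.1]{CDM}, so there is no in-paper argument to compare against. Your proof is correct and is essentially the standard proof one finds in the cited reference, relying on the ideal--congruence bijection for $MV$-algebras, the description of principal ideals, and the De Morgan duality $\neg(\bigodot_n x) = \bigoplus_n \neg x$ to pass between nilpotency of $x$ and $\bigoplus_n \neg x = 1$.
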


A {\it product $MV$-algebra} {\rm \cite{MONT, MONT2, MR}} (for short: $PMV$-algebra) is
an algebra $ \langle A, \oplus, \bullet , \neg, 0  \rangle$ of type
$ \langle 2, 2, 1, 0 \rangle$ satisfying the following:

\begin{enumerate}
\item[1]
$ \langle A, \oplus, \neg,  0  \rangle$ is an $MV$-algebra,

\item[2]
$ \langle A, \bullet, 1 \rangle$ is an abelian monoid,

\item[3]
$x \bullet (y \odot \neg z) = (x \bullet y) \odot \neg (x \bullet z) $.
\end{enumerate}

We denote by ${\cal PMV}$ the variety of $PMV$-algebras. In each
$PMV$-algebra $A$ we also define for all $a\in A$, $a^1 = a$ and
$a^{n+1}= a^n\bullet a$. Important example of $PMV$-algebra is
$[0,1]_{MV}$ equipped with the usual multiplication. This algebra is
denoted by $[0,1]_{PMV}$. The following are almost immediate
consequences of the definition of $PMV$-algebras:

\begin{lem}
In each $PMV$-algebra we have

\begin{enumerate}
\item
$0\bullet x = 0$,

\item
If $a\leq b$ then $a\bullet x \leq b \bullet x$,

\item
$x\odot y \leq x\bullet y \leq x\land y$.
\end{enumerate}
\qed
\end{lem}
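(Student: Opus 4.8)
The plan is to prove the three items in order, using each to bootstrap the next, with the single substitution rule coming from the third defining equation of $PMV$-algebras, $x \bullet (u \odot \neg w) = (x \bullet u) \odot \neg(x \bullet w)$, as the main engine. I would combine it with the following standard $MV$-algebra facts, all available from Section~1: $y \odot \neg y = 0$, $1 \odot u = u$, the monotonicity of $\odot$, and the characterization $a \leq b$ iff $a \odot \neg b = 0$.

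For item~(1), I would set $u = w = y$ in axiom~(3). The left-hand side collapses to $x \bullet (y \odot \neg y) = x \bullet 0$, while the right-hand side becomes $(x \bullet y) \odot \neg(x \bullet y) = 0$. Hence $x \bullet 0 = 0$, and commutativity of $\bullet$ gives $0 \bullet x = 0$. For item~(2), suppose $a \leq b$, so that $a \odot \neg b = 0$. The instance $(x \bullet a) \odot \neg(x \bullet b) = x \bullet (a \odot \neg b)$ of axiom~(3) then equals $x \bullet 0 = 0$ by item~(1), so $x \bullet a \leq x \bullet b$, and commutativity yields $a \bullet x \leq b \bullet x$.

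For item~(3) I would treat the two inequalities separately. The upper bound $x \bullet y \leq x \land y$ follows from item~(2): since $1$ is the greatest element, $x \leq 1$ and $y \leq 1$, whence $x \bullet y = y \bullet x \leq 1 \bullet x = x$ and $x \bullet y \leq 1 \bullet y = y$; as $x \land y$ is the infimum of $x$ and $y$, we conclude $x \bullet y \leq x \land y$. For the lower bound $x \odot y \leq x \bullet y$, the key is to extract a suitable instance of axiom~(3): setting $u := 1$ and $w := \neg y$ and using $1 \odot y = y$ together with $x \bullet 1 = x$, I obtain the identity $x \bullet y = x \odot \neg(x \bullet \neg y)$. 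Now the already-proven upper bound gives $x \bullet \neg y \leq \neg y$, hence $\neg(x \bullet \neg y) \geq y$, and monotonicity of $\odot$ yields $x \bullet y = x \odot \neg(x \bullet \neg y) \geq x \odot y$.

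The only step that is not pure $MV$-bookkeeping is discovering the identity $x \bullet y = x \odot \neg(x \bullet \neg y)$, which is where axiom~(3) does the real work; once it is in hand the lower bound is immediate. I expect this choice of substitution --- rather than any subsequent estimate --- to be the main obstacle, so I would first confirm it in the standard model $[0,1]_{PMV}$, where it reduces to $x \odot (1 - x(1-y)) = \max(0, xy) = xy = x \bullet y$, before committing to the algebraic derivation.
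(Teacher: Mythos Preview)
Your proof is correct. The paper states this lemma without proof (the text ends in \qed\ immediately after the statement), treating the facts as standard; your derivations of items (1) and (2) are in fact exactly the arguments the paper later spells out in Proposition~\ref{CHIQPMV} for the $\sqrt{qPMV}$ setting, and your handling of item (3) via the identity $x\bullet y = x\odot\neg(x\bullet\neg y)$ is the natural route.
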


Two important subalgebras of $[0,1]_{PMV}$ are  ${\bf 2} = \{0,1\}$
and $G_{[0,1]}(\frac{1}{2})$ the sub $PMV$-algebra  generated by
$\frac{1}{2}$.

\begin{lem} \label{DENS}
$G_{[0,1]}(\frac{1}{2})$ is order dense in the real interval $[0,1]$.
\end{lem}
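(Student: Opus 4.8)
My plan is to prove the lemma by identifying $G_{[0,1]}(\frac{1}{2})$ with a set whose density in $[0,1]$ is already classical: the dyadic rationals. Concretely, I would show that $G_{[0,1]}(\frac{1}{2})$ contains every number of the form $\frac{k}{2^n}$ with $n\geq 1$ and $0\leq k\leq 2^n$; since the dyadic rationals are order dense in $[0,1]$, the density of $G_{[0,1]}(\frac{1}{2})$ follows immediately. Here I use that $G_{[0,1]}(\frac{1}{2})$ is, by definition, the smallest subset of $[0,1]$ that contains $\frac{1}{2}$ and is closed under the basic operations $\oplus$, $\bullet$ and $\neg$ of $[0,1]_{PMV}$ together with the constant $0$ (so it also contains $1=\neg 0$).

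The generation would proceed in two steps. First I would produce the negative powers of two: since $\bullet$ is the ordinary product, $\frac{1}{2^{n+1}}=\frac{1}{2^n}\bullet\frac{1}{2}$, so a routine induction on $n$ gives $\frac{1}{2^n}\in G_{[0,1]}(\frac{1}{2})$ for all $n\geq 1$. Second, I would recover the remaining dyadics by iterating the truncated sum. Fix $n$ and $0\leq k\leq 2^n$; because $\oplus$ is the \L ukasiewicz sum $x\oplus y=\min(1,x+y)$ and every partial sum $\frac{j}{2^n}$ with $j\leq k$ satisfies $\frac{j}{2^n}\leq 1$, the truncation is never triggered, so the $k$-fold sum $\underbrace{\frac{1}{2^n}\oplus\cdots\oplus\frac{1}{2^n}}_{k}$ equals the honest sum $\frac{k}{2^n}$. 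As $\frac{1}{2^n}$ already lies in $G_{[0,1]}(\frac{1}{2})$ and the latter is closed under $\oplus$, this yields $\frac{k}{2^n}\in G_{[0,1]}(\frac{1}{2})$ (the case $k=0$ being handled by the constant $0$), so every dyadic of $[0,1]$ is generated.

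To finish, given reals $a<b$ in $[0,1]$ I would choose $n$ with $\frac{1}{2^n}<b-a$ and pick an integer $k$ with $a<\frac{k}{2^n}<b$; the element $\frac{k}{2^n}$ belongs to $G_{[0,1]}(\frac{1}{2})$ by the previous step, establishing order density. The only point demanding genuine care is the control of the truncation in the $\oplus$-step, since $\oplus$ is not ordinary addition; once that is dispatched, the rest is bookkeeping. I would also remark that the reverse inclusion holds — the operations $\oplus$, $\bullet$, $\neg$ and the constants preserve dyadic rationals, so $G_{[0,1]}(\frac{1}{2})$ is \emph{exactly} the set of dyadics in $[0,1]$ — but this converse is not needed for the density statement and so I would leave it as an aside.
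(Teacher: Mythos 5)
Your proof is correct. It rests on the same underlying observation as the paper's --- that combinations of powers of $\frac{1}{2}$ realize dyadic rationals inside $G_{[0,1]}(\frac{1}{2})$ --- but it packages the argument differently, and more cleanly. The paper argues directly: given $a<b<\frac{1}{2}$ it chooses exponents $n_0,n_1$ so that a specific element of the form $\frac{1}{2^{n_0}}+\bigoplus_n \frac{1}{2^{n_1}}$ lands strictly between $a$ and $b$ (the restriction to the lower half of the interval is what keeps the truncation in $\oplus$ from firing), and then transfers the case $\frac{1}{2}<a<b$ by applying $\neg$. You instead prove the clean structural fact that every dyadic $\frac{k}{2^n}$ lies in $G_{[0,1]}(\frac{1}{2})$ --- negative powers of two via $\bullet$, then $k$-fold $\oplus$-sums, with the correct observation that the truncation in $\min(1,x+y)$ is never triggered while the partial sums stay $\leq 1$ --- and then quote the classical density of the dyadics. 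Your route buys a sharper statement (together with the converse inclusion you mention as an aside, $G_{[0,1]}(\frac{1}{2})$ is exactly the set of dyadics in $[0,1]$, which the paper never makes explicit), avoids the case split and the negation trick, and replaces the paper's somewhat delicate choice of $n_0$ and $n_1$ by elementary bookkeeping. Nothing is lost in the trade: order density is all that is used later (in the discussion of $PMV$-algebras with a fixed point of the negation), and both arguments deliver it.
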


\begin{proof}
Let $a, b \in [0,1]$. We assume that $a < b < \frac{1}{2}$. Let $n_0
\in N$ the first natural such that $\frac{1}{2^{n_0}} \leq a$. Let
$n_1 \in N$ such that $\frac{1}{2^{n_1}}  \leq \frac{1}{4} \min \{b
- a,\hspace{0.2cm} a - \frac{1}{2^{n_0}}\}$. Thus there exists $n
\in N$ such that  $$a < s = \frac{1}{2^{n_0}} + \sum_n
\frac{1}{2^{n_1}}  = \frac{1}{2^{n_0}} + \bigoplus_n
\frac{1}{2^{n_1}} <b $$ and by construction $s \in G_{[0,1]}(\frac{1}{2})$. If
$\frac{1}{2} < a < b$ then $\neg b < \neg a < \frac{1}{2}$ since
$\neg \frac{1}{2} = 1 -\frac{1}{2}$. With the same argument we can
choice $s \in G_{[0,1]}(\frac{1}{2})$ such $\neg b < s < \neg a$. Thus $a < \neg s = 1-s
< b$. Hence $G_{[0,1]}(\frac{1}{2})$ is order dense in the real
interval $[0,1]$.

\qed
\end{proof}

\begin{prop} \label{CON}
\begin{enumerate}
\item
Each $PMV$-algebra is isomorphic to a subdirect product of linearly ordered $PMV$-algebras {\rm \cite[Lemma 2.3]{MONT2}}.

\item
Let $A$ be a $PMV$-algebra and let $B$ be the underlying $MV$-algebra.
Then $A$ and $B$ have the same congruences. Therefore $A$ is a
simple $PMV$-algebra iff $B$ is a simple $MV$-algebra {\rm \cite[Lemma
2.11]{MONT2}}.

\item
$[0,1]_{PMV}$ is injective in ${\cal PMV}$ {\rm\cite[\S3.2]{FRY}}.

\end{enumerate}
\qed
\end{prop}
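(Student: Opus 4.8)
The plan is to establish the three items in the order (2), (1), (3): item (1) rests on the congruence identification of item (2), while (3) is essentially independent and, I expect, the hardest. The recurring tools are the standard $MV$-algebraic correspondence between congruences and ideals through the distance term $d(x,y)=(x\odot\neg y)\oplus(y\odot\neg x)$, Birkhoff's subdirect representation theorem, the fact that subdirectly irreducible $MV$-algebras are totally ordered, and Proposition \ref{SIMPMV}.

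For item (2), let $A$ be a $PMV$-algebra with $MV$-reduct $B$. Every congruence of $A$ is a congruence of $B$, since it is compatible with a larger set of operations; the substance is the converse. Given an $MV$-congruence $\theta$ on $B$, presented as the ideal-congruence of an ideal $I$ (so $a\mathrel{\theta}a'$ iff $d(a,a')\in I$), I would show that $\theta$ is compatible with $\bullet$. The key estimate is $d(x\bullet y,\,x\bullet y')\le d(y,y')$. Using the third $PMV$-axiom $x\bullet(y\odot\neg z)=(x\bullet y)\odot\neg(x\bullet z)$, the two summands of $d(x\bullet y,x\bullet y')$ rewrite as $x\bullet(y\odot\neg y')$ and $x\bullet(y'\odot\neg y)$; each is bounded above by $y\odot\neg y'$, respectively $y'\odot\neg y$, because $x\bullet c\le 1\bullet c=c$ (unit law plus monotonicity, item 2 of the elementary $PMV$-Lemma), and monotonicity of $\oplus$ then yields the claim. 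By commutativity of $\bullet$ the symmetric bound $d(x\bullet y,x'\bullet y)\le d(x,x')$ holds as well, and the triangle inequality for $d$ gives $d(x\bullet y,x'\bullet y')\le d(x,x')\oplus d(y,y')$. Hence $d(x,x'),d(y,y')\in I$ forces $d(x\bullet y,x'\bullet y')\in I$, so $\theta$ respects $\bullet$. This proves ${\rm Con}(A)={\rm Con}(B)$; as simplicity means the congruence lattice has exactly two elements, $A$ is simple iff $B$ is simple.

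Item (1) follows from Birkhoff's theorem: $A$ is a subdirect product of its subdirectly irreducible quotients. Being subdirectly irreducible is a property of the congruence lattice (existence of a monolith), so by (2) a $PMV$-algebra is subdirectly irreducible exactly when its $MV$-reduct is. Since subdirectly irreducible $MV$-algebras are linearly ordered, every subdirect factor of $A$ is a linearly ordered $PMV$-algebra, which is the assertion.

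The main obstacle is item (3), the injectivity of $[0,1]_{PMV}$ in ${\cal PMV}$. I would begin from the injectivity of $[0,1]_{MV}$ in ${\cal MV}$, which holds because $[0,1]_{MV}$ is a complete divisible $MV$-algebra. Given a monomorphism $f\in[B,C]_{\cal PMV}$ and $g\in[B,[0,1]_{PMV}]_{\cal PMV}$, $MV$-injectivity supplies an $MV$-homomorphism $h\colon C\to[0,1]_{MV}$ of the reducts with $hf=g$. The difficulty is twofold: such $h$ need not preserve $\bullet$, and it is far from unique, so product-preservation cannot merely be read off. The real work is to select, among all $MV$-extensions, one that is a $PMV$-homomorphism. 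I expect this to go through by using the order-density of $G_{[0,1]}(\frac{1}{2})$ (Lemma \ref{DENS}) to pin the extension on a product-closed dense subalgebra, together with the decomposition of $C$ from item (1) into chains on which $\bullet$ is order-determined; monotonicity and density should then force $h(x\bullet y)=h(x)\bullet h(y)$ everywhere. Securing this product-preservation step is the crux of the argument and the part demanding the most care.
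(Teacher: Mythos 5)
Your items (2) and (1) are essentially fine. Note first that the paper itself offers no proof of any part of this proposition --- all three items are imported by citation (Montagna for (1)--(2), Freytes for (3)) --- so the only question is whether your arguments stand on their own. For (2), the estimate $d(x\bullet y,\,x\bullet y')\le d(y,y')$ obtained from the axiom $x\bullet(y\odot\neg z)=(x\bullet y)\odot\neg(x\bullet z)$ together with $x\bullet c\le 1\bullet c=c$, followed by symmetry in the other argument and the triangle inequality for $d$, is the standard argument and is correct; (1) then follows from Birkhoff's theorem plus the fact that subdirectly irreducible $MV$-algebras are chains, exactly as you say.

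Item (3) is not proved: you explicitly leave the product-preservation step as an expectation, and that step is the entire mathematical content of the cited reference. Moreover, the mechanism you propose would not close it. The order-density of $G_{[0,1]}(\frac{1}{2})$ (Lemma \ref{DENS}) is irrelevant here, since a general $PMV$-algebra has no constant $\frac{1}{2}$ and need not contain any copy of $G_{[0,1]}(\frac{1}{2})$; and the subdirect decomposition of $C$ into chains does not interact with a single map $h\colon C\rightarrow[0,1]$, which factors through the simple quotient $C/\ker h$ rather than through the subdirect factors. The lemma actually needed is: \emph{every $PMV$-product on a sub-$MV$-algebra of $[0,1]_{MV}$ coincides with the restriction of real multiplication} (equivalently, the unique $MV$-embedding of a simple $PMV$-algebra into $[0,1]_{MV}$ is automatically a ${\cal PMV}$-embedding into $[0,1]_{PMV}$). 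Granted that, your own setup finishes at once and no selection among $MV$-extensions is required: $MV$-injectivity gives $h$ with $hf=g$; $\ker h$ is a maximal $MV$-congruence, hence by your item (2) a $PMV$-congruence; $C/\ker h$ is then a simple $PMV$-algebra, and the lemma says its unique embedding into $[0,1]$ preserves $\bullet$, so $h$ is already a ${\cal PMV}$-homomorphism. That lemma, however, requires a genuine argument (truncated additivity of $x\bullet(\cdot)$ extracted from the distributivity axiom, then a H\"older/Cauchy-type argument using $x\bullet 1=x\oplus 0$ and monotonicity), and until it is supplied your proof of (3) has a hole exactly where you flag it.
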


\section{From physic to fuzzy logic}

\subsection{Quantum states}

The notion of {\it state of a physical system} is familiar from its
use in classical mechanics, where it is linked to the initial
conditions (the initial values of position and momenta) which
determine the solutions of the equation of motion of the system. For
any value of time, the state is represented by a point in the phase
space. In quantum mechanic the  description of the state  becomes
substantially modified. In fact, in quantum mechanics the state
embodies the specific history which preceded the instant to which
the state refers. As a simple description we may say that:

\begin{itemize}
\item[]
{\it A state is the result of a series of physical manipulations on the system which constitute the preparation of the state}
\end{itemize}

Quantum mechanics is founded in a set of simple postulates.
The first postulate gives a mathematical description of the concept of state and sets up the framework
in which quantum mechanics take places: the {\it Hilbert space}. In fact this postulate reads:  \\

\noindent {\bf Postulate}: A {\it closed physical system} is a
system which is totally isolated from the rest of the world.
Associated to any closed physical system is a complex Hilbert space
known as {\it the state space}. The {\it state} of a closed physical system
(or {\it pure state}) is wholly described by a unit vector in the state space.\\

In Dirac notation a pure state is denoted by $\vert \varphi \rangle$. A {\it quantum bit} or {\it
qbit}, the fundamental concept of quantum computation, is a pure
state in the Hilbert space ${\mathbb{C}}^2$. The standard
orthonormal basis $\{ \vert 0 \rangle , \vert 1 \rangle \}$ of
${\mathbb{C}}^2$ where $\vert 0 \rangle = (1,0)$ and $\vert 1 \rangle = (0,1)$ is called the {\it logical basis}. Thus, pure states $\vert \varphi \rangle$ in ${\mathbb{C}}^2$ are coherent superpositions of the
the basis vectors with complex coefficients
$$\vert \psi \rangle = c_0\vert 0 \rangle + c_1 \vert 1 \rangle,
\hspace{0.5cm} {\rm with} \hspace{0.5cm} \vert c_0 \vert^2 + \vert
c_1 \vert^2 = 1 $$

Quantum mechanics reads out the information content of a pure state
via the Born rule. By these means, a probability value is assigned
to a qbit as follows:

\begin{definition}{\rm \cite{DGG}, \cite{GIU1}}
{\rm Let $\vert \psi \rangle = c_0\vert 0 \rangle + c_1 \vert 1
\rangle$ be a qbit. Then its {\it probability value} is $p(\vert
\psi \rangle) = \vert c_1 \vert ^2$}
\end{definition}

The quantum states of interest in quantum computation lie in the
tensor product ${\otimes^n} {\mathbb{C}}^2 = {\mathbb{C}}^2 \otimes
{\mathbb{C}}^2 \otimes \ldots \otimes {\mathbb{C}}^2$ ($n$ times). The space
${\otimes^n}{\mathbb{C}}^2 $ is a $2^n$-dimensional complex space. A
special basis,   called the $2^n$-{\it computational basis}, is
chosen for ${\otimes^n} {\mathbb{C}}^2$. More precisely, it consists
of the $2^n$ orthogonal states $\vert \iota \rangle$, $0 \leq \iota
\leq 2^n$ where $\iota$ is in binary representation and $\vert \iota
\rangle$ can be seen as tensor product of states (Kronecker product) $\vert \iota
\rangle = \vert \iota_1 \rangle \otimes \vert \iota_2 \rangle
\otimes \ldots \otimes \vert \iota_n \rangle$ where $\iota_j \in
\{0,1\}$. A pure state $\vert \psi \rangle \in
{\otimes^n}{\mathbb{C}}^2$ is  a superposition of the basis
vectors $\vert \psi \rangle = \sum_{\iota = 1}^{2^n} c_{\iota}\vert
\iota \rangle$ with $\sum_{\iota = 1}^{2^n} \vert c_{\iota} \vert^2
= 1$.\\

In general, a quantum system is not in a pure state. This may be
caused, for example, by the non complete efficiency in the
preparation procedure and also by manipulations on the system as
measurements over pure states, both of which produce statistical
mixtures. Moreover, there are operations associated with partially
tracing out some degrees of freedom, which give rise to proper
mixtures. Besides, systems cannot be completely isolated from the
environment, undergoing decoherence of their states. Non pure
states, namely {\it mixed states}, are described by {\it density
operators}. A density operator is represented on the
$2^n$-dimensional complex Hilbert space by an Hermitian (i.e
$\rho^\dag = \rho$) positive operator with unit trace, $tr(\rho) = 1$.
In terms of density matrices, a pure state $\vert \psi \rangle$ can
be represented as a matrix product $\rho = \vert \psi \rangle
\langle \psi \vert$, where $\langle \psi \vert = \vert \psi
\rangle^{\dag}$. As a particular case, we may relate to each vector
of the logical basis of ${\mathbb{C}}^2$ one of the very important
density operators $P_0 = \vert 0 \rangle \langle 0 \vert$ and $P_1 =
\vert 1 \rangle \langle 1 \vert$ that represent the falsity-property
and the truth-property respectively. Due to the fact that the Pauli
matrices:

$$ \sigma_0 = I
\hspace{0.5cm} \sigma_x = \left(\begin{array}{cc}
0 & 1 \\
1 & 0
\end{array}\right)
\hspace{0.5cm} \sigma_y = \left(\begin{array}{cc}
0 & -i \\
i & 0
\end{array}\right)
\hspace{0.5cm} \sigma_z = \left(\begin{array}{cc}
1 & 0 \\
0 & -1
\end{array}\right)
$$

\noindent where $I = I^{(2)}$ is the $2\times 2$ identity matrix, are a
basis for the set of operators over ${\mathbb{C}}^2$, an arbitrary
density operator $\rho$ for $n$-qbits may be represented in terms of
tensor products of them in the following way:
$$\rho = \frac{1}{2^n} \sum_{\mu_1 \ldots \mu_n} P_{\mu_1 \ldots
\mu_n} (\sigma_{\mu_1}\otimes \ldots \otimes \sigma_{\mu_n})$$ where
$\mu_i \in \{0,x,y,x\}$ for each $i=1\ldots n$. The real expansion coefficients
$P_{\mu_1 \ldots \mu_n}$ are given by $P_{\mu_1 \ldots \mu_n} =
Tr(\sigma_{\mu_1}\otimes \ldots \otimes \sigma_{\mu_n}\rho)$. Since
the eigenvalues of the Pauli matrices are $\pm 1$, the expansion
coefficients satisfy $\vert P_{\mu_1 \ldots \mu_n} \vert \leq 1 $.

We denote by ${\cal D}({\otimes^n} {\mathbb{C}}^2)$ the set of all
density operators of ${\otimes^n}{\mathbb{C}}^2$, hence the set
${\cal D} = \bigcup_{i\in N} {\cal D}({\otimes^n} {\mathbb{C}}^2)$
will represent the set of all possible density operators. Moreover,
we can identify in each space ${\cal D}({\otimes^n}
{\mathbb{C}}^2)$, the two special operators $P_0^{(n)} =
\frac{1}{Tr(I^{n-1}\otimes P_0)} I^{n-1}\otimes P_0 $ and $P_1^{(n)}
= \frac{1}{Tr(I^{n-1}\otimes P_1)} I^{n-1}\otimes P_1 $ (where $n$
is even and $n \geq 2$) that represent in this framework, the
falsity-property and the truth-property respectively. By applying
the Born rule, the probability to obtain the truth-property
$P_1^{(n)}$ for a system being in the state $\rho$ is given by the
following definition:

\begin{definition}\label{DEFPROB} {\rm \cite{DGG}, \cite{GIU1}}
{\rm Let $\rho \in {\cal D}({\otimes^n} {\mathbb{C}}^2)$. Then its {\it
probability value} is $p(\rho) = Tr(P_1^{(n)} \rho)$.}
\end{definition}

\noindent Note that, in the particular case in which  $\rho = \vert
\psi \rangle \langle \psi \vert$ where $\vert \psi \rangle =
c_0\vert 0 \rangle + c_1 \vert 1 \rangle$,  we obtain that $p(\rho)
= \vert c_1 \vert ^2$. This definition of probability allows to
introduce  a binary relation $\leq_w$ on ${\cal D}$ in the following
way: $$\sigma \leq_w \rho \hspace{0.4cm} iff \hspace{0.4cm}
p(\sigma) \leq p(\rho)$$ One can easily see that $\langle {\cal D},
\leq_w \rangle$ is a preorder and it will play an important
role in the rest of the paper.

\subsection{Probabilistic models for quantum computational logics with mixed states}

In the usual representation of quantum computational processes, a
quantum circuit is identified with an appropriate composition of
{\it quantum gates}, i.e. unitary operators acting on pure states of
a convenient ($n$-fold tensor product) Hilbert space ${\otimes^n} {\mathbb{C}}^2$
\cite{NIC}. Consequently, quantum gates represent time reversible
evolutions of pure states of the system.

But for many reasons this restriction is unduly. On the one hand, it
does not encompass realistic physical states described by mixtures,
as mentioned above. On the other hand, there are interesting
processes that cannot be encoded in unitary evolutions, as
measurements in middle of the process. Several authors {\rm \cite{AKN}, \cite{DJFY}, \cite{GU}, \cite{TA}} have
paid attention to a more general model of quantum computational
processes, where pure states and unitary operators are replaced by
density operators and {\it quantum operations}, respectively. In
this case, time evolution is no longer necessarily reversible.

Let $H$ be a Hilbert space, ${\cal L}(H)$ be the vector space of all
linear operators on $H$ and ${\cal D}(H)$ be the set of density
operators. A {\it quantum operation} \cite{K} is a linear operator
${\cal E}:{\cal L}(H_1)\rightarrow {\cal L}(H_2)$ representable  as
${\cal E}(\rho)=\sum_{i}A_{i}\rho A_{i}^{\dagger }$ where $A_i$ are
operators satisfying $\sum_{i}A_{i}^{\dagger }A_{i}=I$ (Kraus
representation). It can be seen that a quantum operation maps
density operators into density operators. Every  unitary operator
${\cal U}$ on a Hilbert space $H$ gives rise to a quantum operation
${\cal O}_{\cal U}$ such that ${\cal O}_{\cal U}(\sigma) = {\cal
U}\sigma {\cal U}^{\dagger}$ for each $\sigma \in {\cal L}(H)$. Thus
quantum operations are a generalization of the model of quantum
computation based on unitary operators.

Quantum computational logics with mixed states may be presented as a
logic $\langle Term, \models\rangle $, where $Term$ is an absolute
free algebra, whose natural universe of interpretation is ${\cal D}$
and connectives are naturally interpreted as certain quantum
operations. More precisely, {\it canonical interpretations} are
$Term$-homomorphisms $e: Term \rightarrow {\cal D}$. To define a
relation of semantic consequence $\models$ based on the probability
assignment, it is necessary to introduce the notion of canonical
valuations. In fact, {\it canonical valuations} are functions over
the unitary real interval $f:Term \rightarrow [0,1]$ such that $f$
can be factorized in the following way:

\begin{center}
\unitlength=1mm
\begin{picture}(20,20)(0,0)
\put(8,16){\vector(3,0){5}}
\put(2,10){\vector(0,-2){5}}
\put(10,4){\vector(1,1){7}}

\put(2,10){\makebox(13,0){$\equiv$}}

\put(2,16){\makebox(0,0){$Term$}}
\put(20,16){\makebox(0,0){$[0,1]$}}
 \put(2,0){\makebox(0,0){${\cal D}$}} \put(2,20){\makebox(17,0){$f$}} \put(2,8){\makebox(-6,0){$e$}}
\put(18,2){\makebox(-4,3){$p$}}
\end{picture}
\end{center}

\noindent where $p$ is the probability function in the sense of
Definition \ref{DEFPROB}. We will refer to these diagrams as {\it
probabilistic models}. Then the semantical consequence $\models$
related to $\mathcal{D}$ is given by: $$\alpha \models \varphi
\hspace{0.3cm}  iff \hspace{0.3cm} {\cal R}[f(\alpha), f(\varphi)]$$
where ${\cal R} \subseteq [0,1]^2$ provides a relation between  $f(\alpha)$ and
$f(\varphi)$.

\subsection{Connection with fuzzy logic}

As a matter of fact, it can be shown {\rm \cite{DG}} that for some
systems of quantum operations (or quantum gates), this type of
semantics does not need to consider density operators other than
${\cal D}({\mathbb{C}}^2)$ for canonical models. This result smooths
things out to considerable extent for such systems, since density
operators in ${\mathbb{C}}^2$ are amenable to the well-known matrix
representation $$\rho = \frac{1}{2}(I + r_1\sigma_x +r_2\sigma_y +
r_3 \sigma_z)$$ where $r_{1},r_{2},r_{3}$ are real numbers such that
$ r_{1}^{2}+r_{2}^{2}+r_{3}^{2}\leq 1$. When a density operator
$\rho \in {\cal D}({\mathbb{C}}^2)$ represents a pure state, it can
be identified with a point $(r_1, r_2, r_3)$ on the sphere of radius
$1$ (Bloch sphere) and each $\rho \in {\cal D}({\mathbb{C}}^2)$ that
represents a mixed state with a point in the interior of the Bloch
sphere. We denote this identifications as $\rho = (r_1, r_2, r_3)$.
An interesting feature of density operators in ${\cal
D}({\mathbb{C}}^2)$ is the following: any real number $\lambda \in
[0,1]$, uniquely determines a density operator $\rho_\lambda$ given
by $$\rho_\lambda = (1-\lambda)P_0 + \lambda P_1$$

\begin{lem}\label{PROBBLOC1}{\rm \cite[Lemma 6.1]{DG}}
Let $\rho = (r_1,r_2,r_3) \in {\cal D}({\mathbb{C}}^2)$. Then we have:

\begin{enumerate}
\item
$p(\rho) = \frac{1-r_3}{2}$.

\item
If $\rho = \rho_\lambda$ for some $\lambda \in [0,1]$ then $\rho =
(0,0,1-2\lambda)$ and $p(\rho_\lambda) = \lambda$.

\end{enumerate}
\qed
\end{lem}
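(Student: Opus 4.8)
The plan is to reduce both parts to a direct matrix computation in the Bloch representation. First I would write out $\rho=\frac{1}{2}(I+r_1\sigma_x+r_2\sigma_y+r_3\sigma_z)$ as an explicit $2\times 2$ matrix, using the four Pauli matrices displayed above: its diagonal entries come out as $\frac{1+r_3}{2}$ and $\frac{1-r_3}{2}$, while the off-diagonal entries are $\frac{1}{2}(r_1-ir_2)$ and $\frac{1}{2}(r_1+ir_2)$. The normalization $Tr(\rho)=1$ is automatic from this form, so the parametrization $\rho=(r_1,r_2,r_3)$ is well defined.

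For part 1, the key point is to identify the truth-property operator acting on a single qbit. Although Definition \ref{DEFPROB} introduces $P_1^{(n)}$ only for even $n\geq 2$, in ${\cal D}({\mathbb{C}}^2)$ the relevant operator is simply $P_1=\vert 1\rangle\langle 1\vert$, in accordance with the remark following that definition that $p(\rho)=\vert c_1\vert^2$ for a pure state $\vert\psi\rangle=c_0\vert 0\rangle+c_1\vert 1\rangle$. Since $P_1$ is the projection onto the second coordinate, the product $P_1\rho$ has only its lower row nonzero, so reading off its $(2,2)$ entry gives $p(\rho)=Tr(P_1\rho)=\frac{1-r_3}{2}$, as claimed.

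For part 2, I would substitute the explicit matrices $P_0=\vert 0\rangle\langle 0\vert$ and $P_1=\vert 1\rangle\langle 1\vert$ into $\rho_\lambda=(1-\lambda)P_0+\lambda P_1$, obtaining the diagonal matrix with entries $1-\lambda$ and $\lambda$. Comparing this entrywise with the Bloch matrix of the first step forces the off-diagonal entries to vanish, whence $r_1=r_2=0$, while matching the diagonal entry $\frac{1-r_3}{2}=\lambda$ gives $r_3=1-2\lambda$. Thus $\rho_\lambda=(0,0,1-2\lambda)$, and applying part 1 yields $p(\rho_\lambda)=\frac{1-(1-2\lambda)}{2}=\lambda$. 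The computations themselves are routine; the only genuine subtlety I anticipate is the bookkeeping of the truth-property operator in the one-qbit case, since the general definition of $P_1^{(n)}$ is stated for even $n$, so the main care is spent justifying that $P_1=\vert 1\rangle\langle 1\vert$ is the correct operator to use here.
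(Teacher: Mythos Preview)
Your argument is correct: the direct matrix computation in the Bloch representation is exactly the standard route, and both parts follow immediately once the diagonal entries $\frac{1\pm r_3}{2}$ are read off. Note, however, that the paper does not actually supply its own proof of this lemma --- it simply cites \cite[Lemma 6.1]{DG} and closes with a \qed\ --- so there is no in-paper argument to compare against; your computation is precisely what one would expect such a proof to contain, and your care about identifying $P_1=\vert 1\rangle\langle 1\vert$ as the relevant truth-property operator in the single-qbit case is appropriate.
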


The connection  between quantum computational logic with mixed
states and fuzzy logic comes from the election of a  system of
quantum operations (or quantum gates) such that, when interpreted
under probabilistic models, they turn out in some kind of operation
in the real interval $[0,1]$ associated to fuzzy logic as continuous
t-norms \cite{HAJ}, left-continuous t-norms \cite{GE1}, etc.

The systems presented in {\rm \cite{GIU1} and \cite{DGG}}, precisely
those that motivate our study, are of this kind as will become clear
through the rest of the paper. It is not necessary  to consider
density operators other than ${\cal D}({\mathbb{C}}^2)$ for
canonical models (see {\rm \cite{DG}}). This quantum gates system
reduced to ${\cal D}({\mathbb{C}}^2)$ is the following:

\begin{itemize}
\item
$\sigma \oplus \tau = \rho_{p(\sigma) \oplus p(\tau)}$ \hspace
{4.2cm} [\L ukasiewicz gate]

\item
$\sigma \bullet \tau = \rho_{p(\sigma)\cdot p(\tau)}$ \hspace
{4.45cm} [IAND gate]

\item
$\neg \rho = \sigma_x \rho \sigma_x^\dag $ \hspace {5.2cm} [$NOT$
gate]

\item
$  \sqrt {\rho} = \left(\begin{array}{cc}
\frac{1+i}{2} & \frac{1-i}{2} \\
\frac{1-i}{2} & \frac{1+i}{2}
\end{array}\right)
 \rho \left(\begin{array}{cc}
\frac{1+i}{2} & \frac{1-i}{2} \\
\frac{1-i}{2} & \frac{1+i}{2}
\end{array}\right)^\dag
$ \hspace {1.05cm}[$\sqrt{NOT}$ gate]

\item
$P_1$, $P_0$, $\rho_{\frac{1}{2}}$ \hspace {4cm}\hspace {1.45cm}
[Constant gates]

\end{itemize}

\vspace{0.2cm}

We can see that quantum gates $ \bullet$, $ \sqrt{,}$ \hspace{0.1cm}
$\neg$ are quantum operations. The {\L}ukasiewicz quantum gate
$\oplus$ is not a quantum operation but it can be {\it
probabilistically approximated} in a uniform form by means of
quantum operations  \cite{FLS}. Thus we may introduce the following
algebraic system associated with the quantum gates known as {\it the
Poincar\'{e} irreversible quantum computational algebra} (for short
$IP$-algebra):
$$ \langle {\cal D}({\mathbb{C}}^2), \oplus, \bullet,  \neg, \sqrt, \hspace{0.1cm} P_0, \rho_\frac{1}{2}, P_1  \rangle $$

The following lemma provides the main properties of the $IP$-algebra that will be captured in an abstract algebraic framework.

\begin{lem}\label{PROBBLOC2}{\rm \cite[Lemma 6.1]{DG} and \cite[Lemma 3.7]{DF}}
Let $\tau, \sigma \in {\cal D}({\mathbb{C}}^2)$ and let $p$ be the
probability function over ${\cal D}({\mathbb{C}}^2)$. Then we have:

\begin{enumerate}
\item[1.]
$\langle {{\cal D}({\mathbb{C}}^2), \bullet} \rangle$ and $\langle {{\cal
D}({\mathbb{C}}^2), \oplus} \rangle$  are abelian monoids,

\item[2.]
$\tau \bullet P_0 = P_0$,

\item[3.]
$\tau \bullet P_1 = \rho_{p(\tau)}$,

\item[4.]
$p(\tau \bullet \sigma ) = p(\tau) p(\sigma)$,

\item[5.]
$p(\tau \oplus \sigma ) = p(\tau) \oplus p(\sigma)$,

\item[6.]
$\sqrt{\neg \tau} = \neg \sqrt{\tau}$,

\item[7.]
$\sqrt{\sqrt{\tau}} = \neg \tau$.

\end{enumerate}

Moreover if $\sigma = (r_1, r_2, r_3)$ then

\begin{enumerate}

\item[8.]
$\neg \sigma = (r_1, -r_2, -r_3) $ and $\sqrt{\sigma} = (r_1, -r_3,
r_2)$, hence $p(\neg \sigma) = \frac{1+r_3}{2}$ and
$p(\sqrt{\sigma}) = \frac{1-r_2}{2}$,

\item[9.]
$p(\sqrt{\tau \bullet \sigma}) = p(\sqrt{\tau \oplus \sigma}) = \frac{1}{2}$,

\item[10.]
$\frac {p(\sigma)}{4} \oplus \frac{p(\sqrt{ \sigma})}{4} \leq
\frac{1+\sqrt{2}}{4\sqrt{2}}$ \hspace{0.2cm} iff \hspace{0.2cm}
$r_2^2 + r_3^2 \leq 1$,

\item[11.]
$\frac {p(\sigma)}{4} \oplus \frac{1}{8} \leq \frac{3}{8} \leq
\frac{1+\sqrt{2}}{4\sqrt{2}}$.

\end{enumerate}\qed
\end{lem}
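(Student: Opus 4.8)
The plan is to verify each of the eleven items by direct computation, organized into three blocks according to the tools they require, and to lean on the explicit Bloch-sphere parametrization $\rho=(r_1,r_2,r_3)$ together with the probability formula $p(\rho)=\tfrac{1-r_3}{2}$ from Lemma~\ref{PROBBLOC1}. First I would treat the purely probabilistic items 4 and 5: since $\sigma\bullet\tau=\rho_{p(\sigma)\cdot p(\tau)}$ and $\sigma\oplus\tau=\rho_{p(\sigma)\oplus p(\tau)}$ by the very definition of the $IP$-gates, and since $p(\rho_\lambda)=\lambda$ by Lemma~\ref{PROBBLOC1}(2), both identities are immediate unravellings of definitions. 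The monoid laws in item 1 then follow because $\bullet$ and $\oplus$ factor through $p$ into ordinary multiplication and $\L$ukasiewicz sum on $[0,1]$, where commutativity, associativity and unit laws hold; one checks $P_1$ and $P_0$ serve as the respective units using $p(P_1)=1$, $p(P_0)=0$. Items 2 and 3 are again direct: $\tau\bullet P_0=\rho_{p(\tau)\cdot 0}=\rho_0=P_0$, and $\tau\bullet P_1=\rho_{p(\tau)\cdot 1}=\rho_{p(\tau)}$.

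The second block consists of the geometric action of the gates $\neg$ and $\sqrt{\ }$ on the Bloch vector, which is the content of item 8 and underlies items 6 and 7. Here I would compute the conjugation $\sigma_x\rho\sigma_x^\dag$ and the $\sqrt{NOT}$ conjugation explicitly on $\rho=\tfrac12(I+r_1\sigma_x+r_2\sigma_y+r_3\sigma_z)$. Using the commutation/anticommutation relations of the Pauli matrices, conjugation by $\sigma_x$ fixes the $\sigma_x$-component and flips the signs of the $\sigma_y$- and $\sigma_z$-components, yielding $\neg\sigma=(r_1,-r_2,-r_3)$; the $\sqrt{NOT}$ matrix implements the rotation sending $(r_1,r_2,r_3)$ to $(r_1,-r_3,r_2)$. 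Once item 8 is in hand, $p(\neg\sigma)=\tfrac{1+r_3}{2}$ and $p(\sqrt\sigma)=\tfrac{1-r_2}{2}$ read off from $p(\rho)=\tfrac{1-r_3}{2}$, and items 6 and 7 follow by composing the two coordinate maps: applying $\sqrt{\ }$ then $\neg$ (and vice versa) to $(r_1,r_2,r_3)$, and applying $\sqrt{\ }$ twice, give matching coordinate triples. This block is where the genuine linear-algebra work lives, and tracking the signs and coordinate permutations correctly under the $2\times2$ complex conjugations is the part most prone to error, so I would regard it as the main obstacle.

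The third block, items 9--11, combines the formulas from the first two blocks. For item 9, the key observation is that $\sqrt{\tau\bullet\sigma}$ and $\sqrt{\tau\oplus\sigma}$ are applied to states of the form $\rho_\lambda=(0,0,1-2\lambda)$, whose $r_2$-coordinate vanishes; by item 8 we get $p(\sqrt{\rho_\lambda})=\tfrac{1-0}{2}=\tfrac12$ regardless of $\lambda$, giving both equalities at once. For item 10 I would substitute $p(\sigma)=\tfrac{1-r_3}{2}$ and $p(\sqrt\sigma)=\tfrac{1-r_2}{2}$ into the left-hand expression and simplify the $\L$ukasiewicz combination $\tfrac{p(\sigma)}{4}\oplus\tfrac{p(\sqrt\sigma)}{4}$; the claimed inequality should reduce, after clearing the affine terms, to the statement $r_2^2+r_3^2\le 1$, though I expect the precise bound $\tfrac{1+\sqrt2}{4\sqrt2}$ to require care to match and may reflect an extremal value of the linear form subject to the disc constraint rather than a literal algebraic equivalence. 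Finally item 11 is a numerical verification: $\tfrac{p(\sigma)}{4}\le\tfrac14$ forces $\tfrac{p(\sigma)}{4}\oplus\tfrac18\le\tfrac14\oplus\tfrac18=\tfrac38$, and $\tfrac38\le\tfrac{1+\sqrt2}{4\sqrt2}$ follows by squaring or rationalizing, closing the chain of inequalities.
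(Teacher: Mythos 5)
The paper itself offers no proof of this lemma --- it is imported verbatim from the cited references and closed with an end-of-proof box --- so your computations have to stand on their own. Most of them do: items 2--9 and 11 are correctly handled by exactly the direct Bloch-coordinate calculation one would expect, and the conjugation computations giving $\neg\sigma=(r_1,-r_2,-r_3)$ and $\sqrt{\sigma}=(r_1,-r_3,r_2)$ are the genuine core of the argument.

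One step of your item 1, however, fails outright: the claim that $P_1$ and $P_0$ ``serve as the respective units'' for $\bullet$ and $\oplus$. By your own computation in item 3, $\tau\bullet P_1=\rho_{p(\tau)}=(0,0,1-2p(\tau))$, which lies on the $z$-axis of the Bloch ball, so $\tau\bullet P_1\neq\tau$ whenever $r_1\neq 0$ or $r_2\neq 0$; likewise $\tau\oplus P_0=\rho_{p(\tau)}\neq\tau$. The failure of the unit laws $x\oplus 0=x$ and $x\bullet 1=x$ is precisely what forces the paper into the \emph{quasi}-MV framework (axiom Q6, the notion of regular element, Proposition \ref{REG1}), so this is not a cosmetic slip but a misreading of the structure. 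Since no identity element appears in the signatures $\langle{\cal D}({\mathbb{C}}^2),\bullet\rangle$ and $\langle{\cal D}({\mathbb{C}}^2),\oplus\rangle$, item 1 can only be asserting commutativity and associativity; those do follow from your observation that both operations factor through $p$ into operations on $[0,1]$, and that part of the argument survives once the unit claim is deleted.

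A second, smaller gap is item 10. The inequality does not ``reduce to $r_2^2+r_3^2\le1$'': substituting $p(\sigma)=\frac{1-r_3}{2}$ and $p(\sqrt{\sigma})=\frac{1-r_2}{2}$ (the truncated sum is an ordinary sum here, both summands being at most $\frac14$) turns the left-hand condition into $r_2+r_3\ge-\sqrt2$, which is implied by $r_2^2+r_3^2\le1$ via Cauchy--Schwarz but is not pointwise equivalent to it. The ``iff'' only becomes a genuine equivalence when the inequality is imposed simultaneously on $\sigma$ and its images under $\neg$ and $\sqrt{\ }$, which is how the paper uses it in Lemma \ref{LAGRANGE} and Proposition \ref{ALGCIRC}. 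Your hedge in that paragraph points at the right issue, but as written the claim is asserted rather than proved.
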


Recalling that in our case the assignment of probability is done via
a function $p:{\cal D}({\mathbb{C}}^2)\rightarrow [0,1]$, it is
possible to establish the following equivalence relation in  ${\cal
D}({\mathbb{C}}^2)$:
$$ \sigma \equiv \tau \hspace{0.5cm} iff \hspace{0.5cm} p(\sigma) =
p(\tau)$$ It is clear that this equivalence is strongly related to
the  preorder $\leq_w$ previously mentioned. Moreover it is not very hard
to see that $\equiv$ may be equivalently defined as
$$\sigma \equiv \tau \hspace{0.5cm} iff \hspace{0.5cm} \sigma \oplus P_0 = \tau \oplus P_0
\hspace{0.5cm} iff \hspace{0.5cm} \sigma \bullet P_1 = \tau \bullet
P_1$$ If we denote by $[\sigma]$ the equivalence class of $\sigma
\in {\cal D}({\mathbb{C}}^2)$, in view of Lemma \ref{PROBBLOC1} and
Lemma \ref{PROBBLOC2}, we can see that $$[\sigma] = [\sigma \bullet
P_1] = [\sigma \oplus P_0] = [\rho_{p(\sigma)}]$$ Thus, we can
consider the identification $({\cal D}({\mathbb{C}}^2)/_\equiv) =
(\rho_\lambda)_{\lambda \in [0,1]}$ and it may be easily proved that
$\langle ({\cal D}({\mathbb{C}}^2)/_\equiv), \oplus, \bullet, \neg,
[P_0], [P_1]  \rangle$ is a $PMV$-algebra,  ${\cal PMV}$-isomorphic
to $[0,1]_{PMV}$. The ${\cal PMV}$-isomorphism is given by the
assignment $[\rho_\lambda] \mapsto \lambda$. It is not very hard to
see that $\equiv$ is a $(\oplus, \bullet, \neg)$-congruence but not
a $\sqrt{,}$-congruence.

\begin{rema} \label{DIAGGEN}
{\rm It is important to remark that the notion of probability that
seems to be alien to a $IP$-algebra, is indeed represented by terms
of the algebra itself. More precisely, by  $x \oplus P_0$ or $x
\bullet P_1$. }
\end{rema}

Thus, any  algebraic abstract frame of the $IP$-algebra must be a class ${\cal A}$ of algebras
$ \langle A, \oplus, \bullet,  \neg, \sqrt, \hspace{0.1cm}
0, \frac{1}{2}, 1  \rangle $ of type $\langle2,2,1,1,0,0,0\rangle$,
such that it is able  not only to represent in an abstract form the
properties of  Lemma \ref{PROBBLOC2} but also is able to establish a
$(\oplus, \bullet, \neg)$-congruence $\equiv$ such that, $x\equiv y$ iff
$x\oplus 0 = y\oplus 0$ iff $x\bullet 1 = y\bullet 1$ satisfying
that $\langle A/_\equiv, \oplus, \bullet,  \neg, \sqrt, \hspace{0.1cm}
0, \frac{1}{2}, 1  \rangle$ is a $PMV$-algebra.

On the other hand, a logical calculus $\langle Term_{\cal A}, \models
\rangle $ interpreted in these algebraic generalization of the
$IP$-algebra will take into account the following  commutative
diagrams as a generalizations of the probabilistic models

\begin{center}
\unitlength=1mm
\begin{picture}(20,20)(0,0)
\put(8,16){\vector(3,0){5}} \put(2,10){\vector(0,-2){5}}
\put(10,4){\vector(1,1){7}}

\put(2,10){\makebox(13,0){$\equiv$}}

\put(1,16){\makebox(0,0){$Term_{\cal A}$}}
\put(20,16){\makebox(0,0){$A/_\equiv$}}
 \put(2,0){\makebox(0,0){$A$}} \put(2,20){\makebox(17,0){$f$}} \put(2,8){\makebox(-6,0){$e$}}
\put(18,2){\makebox(-4,3){$p$}}
\end{picture}
\end{center}

\noindent where $e$ is a $\langle \oplus, \bullet, \neg,
\sqrt,\hspace{0.1cm} 0, \frac{1}{2}, 1 \rangle$-homomorphism called
{\it interpretation on $A$}, $p$ is the natural $(\oplus, \bullet,
\neg, 0, \frac{1}{2}, 1 )$-homomorphism given by the $(\oplus, \bullet, \neg)$-congruence $\equiv$
(i.e an algebraic representation of the probability assignment) and the composition $f = pe$ is called {\it valuation}.
We will refer to these diagrams as {\it $PMV$-models} in ${\cal A}$. \\

In this paper we develop a logical system whose logical consequence
$\models$ is based on the preservation of the probability value
$p(\sigma) = 1$. More precisely, for each pair $\sigma, \tau \in
{\cal D}({\mathbb{C}}^2)$: $$\sigma \models \tau \hspace{0.3cm} iff
\hspace{0.3cm} p(\sigma) = 1 \Longrightarrow p(\tau) = 1 $$
Consequently, the generalization of the logical consequence
$\models$ in the $PMV$-models becomes: $\alpha \models \beta$ iff $f(\alpha)
= 1$ implies that $f(\beta)=1$ where $\alpha, \beta \in Term_{\cal A}$.

\begin{rema}
{\rm  The fact that the logical consequence of these systems is
related to functions $f$ factorized through the $PMV$-models, does
not allow to use standard methods of algebrization \cite{BlokPig1}
to study  the algebraic completeness of a Hilbert-style calculus.}
\end{rema}

\section{Quantum computational algebras}
The first and more basic algebraic structure associated to the
Poincar\'{e} system was introduced in {\rm \cite{LKPG1}} for the
reduced system $\langle \oplus, \neg, P_0, P_1  \rangle$. This is
the {\it quasi $MV$-algebra} or $qMV$-algebra for short. A
$qMV$-algebra is an algebra $ \langle A, \oplus, \neg, 0, 1
\rangle$ of type $ \langle 2, 1, 0, 0 \rangle$ satisfying the
following equation:

\begin{enumerate}
\item[Q1.]
$x \oplus (y \oplus z) = (x \oplus y) \oplus z $,

\item[Q2.]
$\neg \neg x = x$,

\item[Q3.]
$x \oplus 1 = 1$,

\item[Q4.]
$\neg (\neg x \oplus y) \oplus y = \neg (\neg y \oplus x) \oplus x$,

\item[Q5.]
$\neg(x \oplus 0) = \neg x \oplus 0$,

\item[Q6.]
$(x\oplus y) \oplus 0 = x\oplus y$,

\item[Q7.]
$\neg 0 = 1$.

\end{enumerate}

From an intuitive point of view, a $qMV$-algebra can be seen as
an $MV$-algebra which fails to satisfy the equation $x\oplus 0 = x$.
We denote by $q{\cal MV}$ the variety of $qMV$-algebras. We define the binary operations
$\odot, \lor, \land, \rightarrow$ in the same way as we did for $MV$-algebras.

\begin{lem}\label{PROPQMV}{\rm (\cite[Lemma 6]{LKPG1})}
The following equations are satisfied in each $qMV$-algebra:

\begin{enumerate}
\item[1.]
$x \oplus y = y \oplus x$,  \hspace{1cm} 5. \hspace{0.1cm} $x\oplus 0 = x \land x $,

\item[2.]
$x \oplus \neg x = 1$,  \hspace{1.4cm} 6. \hspace{0.1cm} $x\land y = y \land x $,

\item[3.]
$x \odot \neg x = 0$,  \hspace{1.4cm} 7. \hspace{0.1cm} $x\lor y = y \lor x $,

\item[4.]
$0\oplus 0 = 0$.

\end{enumerate}\qed
\end{lem}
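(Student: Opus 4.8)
The plan is to read these seven identities as the quasi-$MV$ counterparts of well-known $MV$-algebra facts and to carry over the classical derivations, while being careful that the law $x\oplus 0=x$ is \emph{not} available. Every place where the classical argument treats $0$ as a two-sided unit must instead be routed through the two ``regularity'' axioms: Q5, which says that $\neg$ commutes with the map $x\mapsto x\oplus 0$, and Q6, which says that every sum (in particular $x\oplus 0$ and $0\oplus x$) is a fixed point of that map. I would first record the preliminary facts $\neg 1=\neg\neg 0=0$ (from Q2 and Q7) and, from Q6, that $x\oplus 0$, $0\oplus x$ and every $x\oplus y$ are \emph{regular}, i.e.\ left unchanged by $-\oplus 0$. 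Item~7 is then immediate: unfolding the definitions, $x\lor y=(x\to y)\to y=\neg(\neg x\oplus y)\oplus y$, so the identity $x\lor y=y\lor x$ is literally axiom Q4. Next I would extract the two degenerate instances of Q4 that drive everything else: putting $x=1$ and using $\neg 1=0$, Q3 and $0\oplus 1=1$ yields $\neg(0\oplus y)\oplus y=1$; putting $y=0$ and simplifying the left-hand side with Q5 and Q6 yields $\neg(1\oplus x)\oplus x=x\oplus 0$.

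The crux is the commutativity of $\oplus$ (item~1), and this is where I expect the real work to lie. The strategy is to adapt the standard argument that commutativity of $\oplus$ is redundant in the presence of the \L ukasiewicz equation Q4. The obstacle is precisely the left/right asymmetry of $0$: the classical argument repeatedly rewrites $x\oplus 0$ as $x$, which I cannot do here, so each such step has to be replaced by an application of Q5 or Q6. Concretely, I would first establish the missing one-sided absorption law $1\oplus x=1$ and the symmetry $0\oplus x=x\oplus 0$ (both are consequences of Q4 together with the regularity axioms and the two instances just derived), and only then close the general commutativity $x\oplus y=y\oplus x$. I expect this chain of substitutions into Q4 to be the longest and most delicate part of the proof.

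Once commutativity is in hand, the remaining items follow quickly. For item~2, rewrite $0\oplus y$ as $y\oplus 0$ in the instance $\neg(0\oplus y)\oplus y=1$; then Q5 turns $\neg(y\oplus 0)$ into $\neg y\oplus 0$, and commutativity together with Q6 collapses $(\neg y\oplus y)\oplus 0$ to $\neg y\oplus y$, giving $y\oplus\neg y=\neg y\oplus y=1$. Item~3 is the De Morgan dual, $x\odot\neg x=\neg(\neg x\oplus x)=\neg 1=0$. Item~6 follows from item~7 through the De Morgan law $x\land y=\neg(\neg x\lor\neg y)$ (a consequence of $\neg\neg x=x$, the definitions, commutativity of $\oplus$ and the regularity axioms): $x\land y=\neg(\neg x\lor\neg y)=\neg(\neg y\lor\neg x)=y\land x$. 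For item~4, instantiate Q5 at $x=0$ to get $\neg(0\oplus 0)=1\oplus 0=1$ (using $1\oplus x=1$), whence $0\oplus 0=\neg 1=0$. Finally item~5 is a short computation: $x\land x=x\odot(x\to x)=x\odot(\neg x\oplus x)=x\odot 1$ by item~2, and $x\odot 1=\neg(\neg x\oplus \neg 1)=\neg(\neg x\oplus 0)=x\oplus 0$ by $\neg 1=0$, Q5 and Q2.
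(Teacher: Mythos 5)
The paper does not prove this lemma at all: it is quoted verbatim from \cite[Lemma 6]{LKPG1} and closed with an immediate \qed, so there is no internal argument to compare yours against. Judged on its own terms, your proposal is sound in its outer structure --- item 7 really is a literal restatement of Q4, the two instances of Q4 at $x=1$ and $y=0$ are computed correctly, and the reductions of items 2, 3, 5, 6 to commutativity of $\oplus$ (via Q5, Q6, $\neg 1=0$ and the De~Morgan route $x\land y=\neg(\neg x\lor\neg y)$) all check out. Item 4 you can in fact get without commutativity: instance (i) at $x=0$ gives $\neg(0\oplus 0)\oplus 0=1$, Q5 and Q6 turn the left side into $1\oplus 0$, so $1\oplus 0=1$ and hence $0\oplus 0=\neg 1 \oplus 0 \ominus\!\!\!\!\!\;\; $ --- more precisely $\neg(0\oplus 0)=\neg 0\oplus 0=1\oplus 0=1$, whence $0\oplus 0=0$.

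The genuine gap is item 1. Every other nontrivial item in your plan funnels through commutativity of $\oplus$ (or at least through $0\oplus x=x\oplus 0$ and $\neg x\oplus x=1$), and at exactly that point the proposal stops being a proof and becomes a forecast: you say you would ``first establish $1\oplus x=1$ and $0\oplus x=x\oplus 0$'' and ``then close the general commutativity,'' and that you ``expect this chain of substitutions into Q4 to be the longest and most delicate part.'' None of these three steps is carried out, and none is routine. In particular, from the instance $\neg(1\oplus x)\oplus x=x\oplus 0$ one cannot simply read off $1\oplus x=1$ (an equation $c\oplus x=x\oplus 0$ does not force $c=0$ without further structure), and even granting both auxiliary identities it is not indicated how Q4 then yields $x\oplus y=y\oplus x$ for arbitrary $x,y$ --- this derivation is the actual content of \cite[Lemma 6]{LKPG1} and is known to be delicate already in the $MV$ case, where the redundancy of the commutativity axiom is a nontrivial theorem. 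As written, the proposal establishes items 4 and 7 and conditionally reduces 2, 3, 5, 6 to an unproven item 1; to be acceptable it must either supply the Q4-substitution argument for commutativity in full or explicitly invoke \cite[Lemma 6]{LKPG1} for that item, as the paper itself does.
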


In {\rm \cite{GLP}}, an abstract algebraic structure for the quantum gates
system $ \langle \oplus, \neg, \sqrt{,} \hspace{0.1cm}
 P_0, \rho_{\frac{1}{2}}, P_1  \rangle$ was introduced. These
algebras are known as {\it square root quasi $MV$-algebras} or {\it
$\sqrt{qMV}$-algebras} for short. A $\sqrt{qMV}$-algebra is an
algebra $ \langle A, \oplus, \neg, \sqrt{,} \hspace{0.1cm}
0, \frac{1}{2}, 1  \rangle$ of type $ \langle 2,1,1,0,0,0 \rangle$
such that:

\begin{enumerate}
\item[SQ1.]
$ \langle A, \oplus, \neg, 0, \frac{1}{2}, 1  \rangle$ is a $qMV$-algebra,

\item[SQ2.]
$\sqrt{\neg x} = \neg \sqrt x $,

\item[SQ3.]
$\sqrt {\sqrt x} = \neg x$,

\item[SQ4.]
$\sqrt{x\oplus y} \oplus 0 =  \sqrt{\frac{1}{2}} = \frac{1}{2}$.

\end{enumerate}

We denote by $\sqrt {q{\cal MV}}$ the variety of $\sqrt
{qMV}$-algebras. In what follows we will extend the structure of
$\sqrt {qMV}$-algebras considering an algebraic framework for the $IAND$
gate.

\begin{definition}
{\rm A  {\it $\sqrt {qPMV}$-algebra} is an algebra $\langle A,
\oplus, \bullet, \neg, \sqrt{,} \hspace{0.1cm}
 0,\frac{1}{2}, 1 \rangle$ of type $ \langle 2, 2, 1, 1, 0, 0, 0 \rangle$ satisfying the following:

\begin{enumerate}
\item
$ \langle A, \oplus,  \neg,  \sqrt{,} \hspace{0.1cm}  0, \frac{1}{2}, 1  \rangle$ is a $\sqrt {qMV}$-algebra,

\item
$x\bullet y = y \bullet x$,

\item
$x\bullet (y \bullet z) = (x\bullet y) \bullet z$,

\item
$ x \bullet 1 = x\oplus 0  $,

\item
$ x \bullet y = (x\bullet y) \oplus 0 $,

\item
$x \bullet (y \odot \neg z) = (x \bullet y) \odot \neg (x\bullet z)$,

\item
$\sqrt{x\bullet y} \oplus 0 = \frac{1}{2}$.

\end{enumerate}
}
\end{definition}

We denote by $\sqrt {q{\cal PMV}}$ the variety of $\sqrt {qPMV}$-algebras. It
is not very hard to see that the $IP$-algebra is a $\sqrt {qPMV}$-algebra. \\

Let $A$ be a $\sqrt {qPMV}$-algebra. Then we define a binary
relations $\leq$ on $A$: $$a\leq b \hspace{0.3cm} iff \hspace{0.3cm}
1 = a\rightarrow b$$ $$a\equiv b \hspace{0.3cm}  iff  \hspace{0.3cm}
a\leq b \hspace{0.2cm} and \hspace{0.2cm} b\leq a$$ It is clear that
$ \langle A, \leq  \rangle$ is a preorder. One can also easily prove
that $a \leq b$ iff $a\land b = a\oplus 0$ iff $a\lor b = b\oplus
0$. Moreover $a\equiv \hspace{0.1cm} (a\oplus 0)$.

\begin{prop}\label{CHIQPMV}
Let $A$ be a $\sqrt {qPMV}$-algebra and $a,b \in A$. Then we have:

\begin{enumerate}

\item
$a\bullet 0 = 0$,

\item
If $a\bullet b = 1$ then $a\oplus 0 = b\oplus 0 = 1$,

\item
If $a\leq b$ then $a \bullet x \leq b \bullet x $,

\item
$x\bullet y \leq x$,

\item
$x\bullet (y\oplus 0) = (x\bullet y) \oplus 0$,

\item
$\frac{1}{2} = \neg \frac{1}{2}$,

\item
$\frac{1}{2} \oplus 0 = \frac{1}{2}$,

\item
$\sqrt{x\oplus y} \oplus \sqrt{z\oplus w} = 1$.

\end{enumerate}
\end{prop}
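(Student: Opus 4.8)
**The plan is to prove each of the eight items by reducing to the defining axioms of $\sqrt{qPMV}$-algebras, the axioms of $qMV$-algebras (Q1--Q7), the derived $qMV$-laws of Lemma \ref{PROPQMV}, and the $\sqrt{qMV}$-axioms SQ1--SQ4.** Most items are short equational manipulations; the main structural work lies in items 1--5 (the multiplicative behavior) and in the somewhat surprising item 8. I would treat them roughly in the stated order, since later items reuse earlier ones.

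For item 1, I would compute $a\bullet 0$ using the $PMV$-distribution axiom (axiom 6 of the definition): taking $y=z$ gives $x\bullet(y\odot\neg y)$ on the left, and since $y\odot\neg y = 0$ by Lemma \ref{PROPQMV}(3), the left side is $x\bullet 0$; the right side $(x\bullet y)\odot\neg(x\bullet y)$ equals $0$ again by Lemma \ref{PROPQMV}(3). For item 3 (monotonicity of $\bullet$), I would unfold the definition $a\leq b$ as $a\land b = a\oplus 0$ (the equivalence noted just before this Proposition) and apply axiom 6 together with the identity $x\bullet(y\oplus 0)=(x\bullet y)\oplus 0$ of item 5; this is why I would prove item 5 first in practice. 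Item 4, $x\bullet y\leq x$, follows from item 3 applied to $y\leq 1$ together with axiom 4 ($x\bullet 1 = x\oplus 0$) and the fact that $x\oplus 0\equiv x$. Item 5 itself comes from axiom 5 ($x\bullet y=(x\bullet y)\oplus 0$) combined with the $qMV$-identity Q6 and axiom 6 specialized appropriately, so that $\bullet$ respects the $\oplus 0$ ``collapse.''

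Items 6 and 7 are about the constant $\tfrac12$. For item 6, $\tfrac12=\neg\tfrac12$, I would apply $\neg$ to the defining equation $\sqrt{x\oplus y}\oplus 0=\tfrac12$ in SQ4, use SQ2 ($\sqrt{\neg x}=\neg\sqrt x$) and Q5 ($\neg(x\oplus 0)=\neg x\oplus 0$) to move the negation inside, and observe that $\neg(x\oplus y)$ is again of the form $u\oplus v$ up to the $qMV$-laws, so SQ4 reapplies. Item 7, $\tfrac12\oplus 0=\tfrac12$, follows directly from SQ4 by noting $\tfrac12 = \sqrt{x\oplus y}\oplus 0$ and that $\oplus 0$ is idempotent on such elements (Q6 gives $(u\oplus 0)\oplus 0 = u\oplus 0$).

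The step I expect to be the main obstacle is item 8, $\sqrt{x\oplus y}\oplus\sqrt{z\oplus w}=1$. The natural route is to write $\sqrt{z\oplus w}$ and show it is the negation-complement of $\sqrt{x\oplus y}$. Concretely, I would use SQ4 to get $\sqrt{x\oplus y}\oplus 0=\tfrac12$ and $\sqrt{z\oplus w}\oplus 0=\tfrac12$; then by item 6, $\neg\tfrac12=\tfrac12$, so $\sqrt{z\oplus w}$ has the same ``probability'' $\tfrac12$ as $\neg\sqrt{x\oplus y}$. The key identity to exploit is Lemma \ref{PROPQMV}(2), $u\oplus\neg u=1$, together with the fact (from SQ4 and item 6) that $\sqrt{x\oplus y}$ and $\sqrt{z\oplus w}$ both collapse to $\tfrac12$; I would argue that $a\oplus b$ depends only on $a\oplus 0$ and $b\oplus 0$ (a consequence of Q5, Q6 and associativity), so that $\sqrt{x\oplus y}\oplus\sqrt{z\oplus w} = \tfrac12\oplus\tfrac12$, and finally verify $\tfrac12\oplus\tfrac12=1$ using $\tfrac12=\neg\tfrac12$ and Lemma \ref{PROPQMV}(2). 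The delicate point is justifying that $\oplus$ factors through the $\oplus 0$-collapse on both arguments; this is exactly the ``$\equiv$ is an $\oplus$-congruence'' phenomenon discussed in the preceding section, and I would make it rigorous using Q5 and Q6 rather than appealing to the semantic congruence.
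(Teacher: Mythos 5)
Your proposal never addresses item 2: if $a\bullet b = 1$ then $a\oplus 0 = b\oplus 0 = 1$. This is a genuine omission, not a freebie --- it does not follow from the other seven items and needs its own small computation. The paper's argument is: $\neg(a\oplus 0) = 1\odot\neg(a\bullet 1) = (a\bullet b)\odot\neg(a\bullet 1) = a\bullet(b\odot\neg 1) = a\bullet 0 = 0$, using the facts that $1\odot u = u\oplus 0$ and $\neg(a\oplus 0)\oplus 0=\neg(a\oplus 0)$ (Q5, Q6), axiom 4 ($a\bullet 1 = a\oplus 0$), axiom 6 read right-to-left, $b\odot\neg 1 = b\odot 0 = 0$, and your item 1; hence $a\oplus 0 = \neg 0 = 1$, and $b\oplus 0 = 1$ by commutativity of $\bullet$. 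Without some such step the proposition is not proved.

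The remaining items are handled adequately, and mostly in the paper's spirit. Item 1 is exactly the paper's computation. For items 3 and 5 your routes work but are more roundabout than the paper's: the paper gets 5 immediately from axiom 4 and associativity, $x\bullet(y\oplus 0) = x\bullet(y\bullet 1) = (x\bullet y)\bullet 1 = (x\bullet y)\oplus 0$, and gets 3 from $a\leq b \Leftrightarrow a\odot\neg b = 0$, whence $0 = x\bullet(a\odot\neg b) = (x\bullet a)\odot\neg(x\bullet b)$ by axiom 6 and item 1. For items 6--8 the paper says only ``easily proved,'' and your supplied details are sound; in particular, reducing item 8 to $\frac{1}{2}\oplus\frac{1}{2} = \frac{1}{2}\oplus\neg\frac{1}{2} = 1$ via the purely equational identity $a\oplus b = (a\oplus 0)\oplus(b\oplus 0)$ (Q1, commutativity, $0\oplus 0 = 0$, Q6) is the right way to make the ``$\oplus$ factors through $\oplus 0$'' step syntactic rather than semantic. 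One simplification you miss for item 6: SQ4 already asserts $\sqrt{\frac{1}{2}} = \frac{1}{2}$, so SQ3 gives $\neg\frac{1}{2} = \sqrt{\sqrt{\frac{1}{2}}} = \sqrt{\frac{1}{2}} = \frac{1}{2}$ in one line, avoiding your detour through $\neg(x\oplus y)$.
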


\begin{proof}
1) $a \bullet 0 = a \bullet (0 \odot \neg 0 ) = (a \bullet 0) \odot
\neg (a \bullet 0) = 0$. \hspace{0.2cm} 2) Suppose that $a\bullet b
= 1$. Then $\neg(a\oplus 0) = 1\odot \neg(a \bullet 1)= (a\bullet b)
\odot \neg(a \bullet 1) = a\bullet (b\odot \neg 1)$ = 0. Thus $\neg
(a\oplus 0) = 0$, hence $a\oplus 0 = 1$. \hspace{0.2cm} 3) If $a\leq
b$ then $1 = a\rightarrow b = \neg(a\odot \neg b)$ and $0 = a\odot
\neg b$. Using   item 1. we have that $0 = x \bullet 0 = x \bullet
(a\odot \neg b) = (x \bullet a) \odot \neg (x \bullet b)$. Thus, $1
= \neg ((x \bullet a) \odot \neg (x \bullet b)) = (x \bullet a)
\rightarrow (x \bullet b) $ resulting $(x \bullet a) \leq (x \bullet
b)$. \hspace{0.2cm} 4) Since $x\leq 1$ by item 3. we have that
$x\bullet y \leq x\bullet 1 = x \oplus 0 \leq x$. \hspace{0.2cm} 5)
$x\bullet  (y\oplus 0) = x\bullet (y\bullet 1) = (x\bullet y)
\bullet 1 = (x\bullet y) \oplus 0$. \hspace{0.2cm} Items 6.,7. and
8. can be easily proved.

\qed
\end{proof}

\begin{definition}
{\rm Let $A$ be a $\sqrt {qPMV}$-algebra. An element $a\in A$ is
{\it regular} iff $a \oplus 0 = a$. We denote by $R(A)$ the set of
regular elements. }
\end{definition}

\begin{prop} \label{REG1}
Let $A$ be a $\sqrt {qPMV}$-algebra. Then we have:

\begin{enumerate}
\item
$\langle R(A), \oplus, \bullet, \neg, 0, \frac{1}{2}, 1  \rangle$ is a
$PMV$-algebra.

\item
$\equiv$ is a $\langle \oplus, \bullet, \neg \rangle$-congruence on
$A$ and $\langle A/_\equiv, \oplus, \bullet, \neg, [0], [\frac{1}{2}], [1] \rangle$ is a $PMV$-algebra.

\item
$A/_\equiv$ is ${\cal PMV}$-isomorphic to $R(A)$. This isomorphism is given by the assignment $[x]\mapsto x\oplus 0$.

\end{enumerate}
\end{prop}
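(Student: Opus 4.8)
The plan is to handle the three items together, using the set $R(A)$ of regular elements as the concrete model and the map $[x]\mapsto x\oplus 0$ as the bridge between the quotient and $R(A)$. The organizing observation is that regularity is exactly what repairs the two identity laws that a $qMV$-algebra, and hence a $\sqrt{qPMV}$-algebra, fails: namely $x\oplus 0 = x$ and $x\bullet 1 = x$. I would prove item (1) first, then establish the characterization $a\equiv b$ iff $a\oplus 0 = b\oplus 0$, then verify the congruence claim of item (2), and finally build the isomorphism of item (3), which immediately transports the $PMV$-structure from $R(A)$ to $A/_\equiv$ and so completes (2).

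For item (1), first I would check that $R(A)$ is closed under the basic operations and contains the constants. Closure under $\oplus$ and under $\bullet$ is free: $(x\oplus y)\oplus 0 = x\oplus y$ by Q6 and $(x\bullet y)\oplus 0 = x\bullet y$ by the defining equation $x\bullet y = (x\bullet y)\oplus 0$, so \emph{every} element of the form $x\oplus y$ or $x\bullet y$ is regular. Closure under $\neg$ follows from Q5, since $\neg a\oplus 0 = \neg(a\oplus 0) = \neg a$ when $a$ is regular; and $0,\frac{1}{2},1\in R(A)$ by Lemma \ref{PROPQMV}(4), Proposition \ref{CHIQPMV}(7) and $1\oplus 0 = 1$ (from Q3). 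Thus $R(A)$ is a subalgebra of the $\langle\oplus,\bullet,\neg,0,\frac{1}{2},1\rangle$-reduct and so inherits every equation that holds in $A$ in this signature, in particular MV2--MV4, commutativity and associativity of $\oplus$ and of $\bullet$, and the product axiom $x\bullet(y\odot\neg z) = (x\bullet y)\odot\neg(x\bullet z)$. The only $PMV$-axioms not already available in $A$ are the two monoid identities $x\oplus 0 = x$ and $x\bullet 1 = x$; the first is the definition of regularity, and the second follows from it together with $x\bullet 1 = x\oplus 0$. This gives that $R(A)$ is a $PMV$-algebra.

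Next I would record the characterization $a\equiv b$ iff $a\oplus 0 = b\oplus 0$. One direction uses $a\leq b$ iff $a\land b = a\oplus 0$ (stated before the proposition) together with commutativity of $\land$ from Lemma \ref{PROPQMV}(6): from $a\leq b$ and $b\leq a$ we get $a\oplus 0 = a\land b = b\land a = b\oplus 0$. The converse uses $a\equiv(a\oplus 0)$ and transitivity of $\equiv$. The congruence claim then reduces to two computations showing that $\oplus$ and $\bullet$ are insensitive to replacing an argument by its $\oplus 0$-image: one checks $(a\oplus 0)\oplus(b\oplus 0) = a\oplus b$ (by associativity, commutativity and $0\oplus 0 = 0$) and $(a\oplus 0)\bullet(b\oplus 0) = a\bullet b$ (by Proposition \ref{CHIQPMV}(5) and $x\bullet y = (x\bullet y)\oplus 0$). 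Hence if $a\oplus 0 = a'\oplus 0$ and $b\oplus 0 = b'\oplus 0$ then $a\oplus b = a'\oplus b'$ and $a\bullet b = a'\bullet b'$ as equalities in $A$, so a fortiori modulo $\equiv$; compatibility with $\neg$ is immediate from Q5 and the characterization. This proves $\equiv$ is a $\langle\oplus,\bullet,\neg\rangle$-congruence.

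Finally, for item (3) I would define $\phi:A/_\equiv\to R(A)$ by $\phi([x]) = x\oplus 0$. Well-definedness and injectivity are both the characterization of $\equiv$ (with $x\oplus 0\in R(A)$ by Q6), and surjectivity holds because $\phi([r]) = r$ for regular $r$. That $\phi$ preserves $\oplus$, $\bullet$ and $\neg$ is exactly the two computations above plus Q5, and it preserves the constants by the closure facts already used. Since $R(A)$ is a $PMV$-algebra, the isomorphism $\phi$ transports this structure, so $A/_\equiv$ is a $PMV$-algebra, finishing (2) and (3). The only place where genuine care is needed is the bookkeeping with associativity and commutativity in the two ``insensitivity'' computations and in confirming that no $PMV$-axiom other than the two identity laws requires regularity; everything else is inheritance or a direct appeal to Proposition \ref{CHIQPMV} and the defining equations.
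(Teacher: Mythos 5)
Your proposal is correct and follows essentially the same route as the paper: show $R(A)$ is a $PMV$-algebra, verify $\equiv$ is a $\langle\oplus,\bullet,\neg\rangle$-congruence, and exhibit $[x]\mapsto x\oplus 0$ as the isomorphism, with Proposition \ref{CHIQPMV}-5, Q5 and Q6 doing the real work. The only difference is that you prove directly (via the characterization $a\equiv b$ iff $a\oplus 0=b\oplus 0$ and the subalgebra/inheritance argument) the facts the paper outsources to the cited Lemma 9 of \cite{LKPG1}, which makes your version more self-contained but not a different argument.
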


\begin{proof}
1)From {\rm \cite[Lemma 9]{LKPG1}} $\langle R(A), \oplus, \neg, 0, 1
\rangle$ is an $MV$-algebra. Using Proposition \ref{CHIQPMV}-5, the
operation $\bullet$ is closed in $R(A)$. Now from the axioms of the
$\sqrt {qPMV}$-algebras, $\langle R(A), \oplus, \bullet, \neg,
 0, \frac{1}{2}, 1  \rangle$ results a $PMV$-algebra.

2) It is easy to see that $\equiv$ is a $\langle \oplus, \neg
\rangle$-congruence. For technical details see {\rm \cite{LKPG1}}.
From Proposition \ref{CHIQPMV}-3, $\equiv$ is compatible with
$\bullet$. For the second part it is clear that we only need to see
that the class $[1]$ is the identity in $\langle A/_\equiv , \bullet,
[1] \rangle$. In fact, $[x]\bullet [1] = [x\bullet 1] = [x\oplus 0]
= [x]$.

3) Since  $[x] = [x\oplus 0]$ for each $x\in A$, then $\varphi$ is
injective. If $x\in Reg(A)$ then $x = x\oplus 0$. Therefore
$\varphi([x]) = x \oplus 0 = x$ and $\varphi$ is surjective. Using
Proposition \ref{CHIQPMV}-5 we have that $\varphi ([x]\bullet [y]) =
\varphi([x\bullet y]) = (x\bullet y) \oplus 0 = (x\oplus 0) \bullet
(y\oplus 0) = \varphi([x]) \bullet \varphi([y])$. In the same way we
can prove that $\varphi ([x]\oplus [y]) = \varphi ([x]) \oplus
\varphi ([y])$. By axiom Q5 $\varphi(\neg [x]) = \neg \varphi([x])$
and $\varphi([c]) = c $ for $c = 0,1, \frac{1}{2}$ since they are
regular elements in $A$. Thus $[x]\mapsto x\oplus 0$ is a
${\cal PMV}$-isomorphism.
\qed
\end{proof}

\begin{rema}
{\rm From Proposition \ref{REG1} we can see that the natural
$\langle \oplus, \bullet, \neg \rangle$-homomorphism $A \rightarrow
A/_\equiv$ (equivalently represented as $A \rightarrow Reg(A)$ such
that $x\mapsto x\oplus 0$) is an abstract version of the notion of
probability in the $PMV$-model as  the remark \ref{DIAGGEN} and the
paragraph below it claim.}
\end{rema}

\begin{prop} \label{ST}
Let $ \langle A, \oplus, \bullet, \neg, 0, \frac{1}{2}, 1 \rangle$
be a $PMV$-algebra such that $\neg \frac{1}{2} = \frac{1}{2}$.
Consider the set $S_A = A \times A$ with the following operations:

\begin{enumerate}
\item[]
$(a,b) \oplus (c,d): = (a\oplus c, \frac{1}{2}) $,   \hspace{1.75 cm}
$0: = (0,\frac{1}{2})$

\item[]
$(a,b) \bullet (c,d): = (a\bullet c, \frac{1}{2} )$,  \hspace{2 cm}
$1: = (1,\frac{1}{2})$

\item[]
$\neg (a,b): = (\neg a, \neg b)$, \hspace{3 cm} $\frac{1}{2}: =
(\frac{1}{2},\frac{1}{2})$

\item[]
$\sqrt{(a,b)}: = (b, \neg a)$.

\end{enumerate}

\noindent Then  $ \langle S_A, \oplus, \bullet, \neg, \sqrt{,} \hspace{0.2cm} 0, \frac{1}{2}, 1 \rangle$ is a $\sqrt
{qPMV}$-algebra, and for each pair of elements $(a,b), (c,d)$ in $S_A$, $(a,b) \leq (c,d)$ iff $a\leq c$ in $A$.

\end{prop}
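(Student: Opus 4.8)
The plan is to verify directly that $S_A$ satisfies each defining equation of a $\sqrt{qPMV}$-algebra, exploiting a single structural observation: under $\oplus$ and $\bullet$ the second coordinate of every pair is forced to $\frac{1}{2}$, while the first coordinate merely carries out the corresponding operation of $A$; the operations $\neg$ and $\sqrt{,}$ act respectively by coordinatewise negation and by the swap $(a,b)\mapsto(b,\neg a)$. The hypothesis $\neg\frac{1}{2}=\frac{1}{2}$ is the linchpin of the whole argument: it guarantees that negating any pair whose second component equals $\frac{1}{2}$ leaves that component unchanged, so the definitions of $\oplus$, $\bullet$, $0$, $1$ and $\frac{1}{2}$ remain mutually consistent and the ``$\frac{1}{2}$ in the second slot'' is preserved by all the relevant compositions.

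First I would check axiom~1, i.e. that the $\langle\oplus,\neg,\sqrt{,}\hspace{0.1cm}0,\frac{1}{2},1\rangle$-reduct is a $\sqrt{qMV}$-algebra. The $qMV$-axioms Q1, Q3, Q6 collapse on the second coordinate and reduce on the first to associativity of $\oplus$, to $a\oplus 1=1$, and to $a\oplus 0=a$ in $A$; the $\bullet$-monoid laws (axioms 2--4) are handled identically using commutativity and associativity of $\bullet$ and the identity $a\bullet 1=a\oplus 0=a$. Axioms Q2, Q5, Q7 and the square-root identities SQ2, SQ3 follow from double negation together with $\neg\frac{1}{2}=\frac{1}{2}$; for instance $\sqrt{\sqrt{(a,b)}}=\sqrt{(b,\neg a)}=(\neg a,\neg b)=\neg(a,b)$. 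Axiom Q4 reduces to MV4 in $A$ once one notes that both of its sides have second coordinate $\frac{1}{2}$ (again using $\neg\frac{1}{2}=\frac{1}{2}$). For SQ4 I would compute $\sqrt{(a,b)\oplus(c,d)}=\sqrt{(a\oplus c,\frac{1}{2})}=(\frac{1}{2},\neg(a\oplus c))$, so that $\oplus 0$ restores $\frac{1}{2}$ in the second slot and yields $\frac{1}{2}$; and likewise $\sqrt{\frac{1}{2}}=(\frac{1}{2},\neg\frac{1}{2})=\frac{1}{2}$.

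The only step that genuinely uses the product structure of $A$ is axiom~6, $x\bullet(y\odot\neg z)=(x\bullet y)\odot\neg(x\bullet z)$. Here I would unfold the derived operation $\odot=\neg(\neg(\cdot)\oplus\neg(\cdot))$ inside $S_A$, applying $\neg\frac{1}{2}=\frac{1}{2}$ at each negation, and check that both sides reduce to a pair whose second coordinate is $\frac{1}{2}$ and whose first coordinate is, respectively, $a\bullet(c\odot\neg e)$ and $(a\bullet c)\odot\neg(a\bullet e)$; equality then follows from axiom~3 in the definition of $PMV$-algebra applied in $A$. Axioms~5 and~7 are immediate instances of the coordinate collapse. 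Finally, for the order I would compute $(a,b)\to(c,d)=\neg(a,b)\oplus(c,d)=(\neg a\oplus c,\frac{1}{2})$, which equals $1=(1,\frac{1}{2})$ precisely when $\neg a\oplus c=1$, i.e. when $a\le c$ in $A$. I expect axiom~6 to be the main (indeed the only non-mechanical) obstacle, since it is there that one must carefully track how $\odot$ and $\neg$ interleave in the pair algebra and confirm the second coordinates agree --- exactly the place where $\neg\frac{1}{2}=\frac{1}{2}$ is indispensable.
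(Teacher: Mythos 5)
Your proposal is correct and follows essentially the same route as the paper: a direct axiom-by-axiom verification exploiting the collapse of the second coordinate to $\frac{1}{2}$ under $\oplus$, $\bullet$ and $\odot$, with the $PMV$ distributivity axiom and the final order computation handled exactly as in the paper's proof (the paper merely compresses the $\sqrt{qMV}$-reduct and the $\bullet$-monoid checks into ``it is not very hard to see'' and writes out only axioms 6 and 7). Your more explicit tracking of where $\neg\frac{1}{2}=\frac{1}{2}$ is used (Q4, Q5, Q7, SQ4, and the $\odot$ computation inside axiom 6) is a faithful elaboration, not a different argument.
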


\begin{proof}
It is not very hard to see that the  reduct $ \langle A \times A,
\oplus, \bullet, \neg, 0, \frac{1}{2}, 1  \rangle$ is a $\sqrt
{qMV}$-algebra. We only have to prove that $S_A$ satisfies axioms 6
and 7 of $\sqrt {qPMV}$-algebras.

Ax 6) $x \bullet (y \odot \neg z) = (x \bullet y) \odot \neg
(x\bullet z)$.  In fact, $(a,b) \bullet ((c,d) \odot \neg (z,w)) =
(a,b) \bullet ((c,d) \odot (\neg z, \neg w)= (a \bullet (c \odot
\neg z ), \frac{1}{2}) = ((a\bullet c) \odot \neg(a \bullet z),
\frac{1}{2})$. On the other hand $((a,b) \bullet (c,d))  \odot \neg
((a,b) \bullet (z,w)) = (a\bullet c, \frac{1}{2}) \odot
(\neg(a\bullet z), \frac{1}{2}) = ((a\bullet c) \odot \neg(a\bullet
z), \frac{1}{2} ) $.

Ax 7) $\sqrt{x\bullet y} \oplus 0 =  \frac{1}{2}$. In fact:
$\sqrt{(a,b)\bullet (c,d)} \oplus (0,\frac{1}{2}) = \sqrt{(a\bullet
c, \frac{1}{2})} \oplus (0, \frac{1}{2}) = (\frac{1}{2}, \neg
(a\bullet c)) \oplus (0, \frac{1}{2})= (\frac{1}{2}, \frac{1}{2})= \frac{1}{2}$.\\

\noindent Hence $S_A$ is a $\sqrt{qPMV}$-algebra. Therefore we have
that $(a,b) \leq (c,d)$ iff $(1, \frac{1}{2}) = (a,b) \rightarrow
(c,d) = (\neg a \oplus b, \frac{1}{2})$ iff $a\leq b$ in $A$.

\qed
\end{proof}

We denote by ${\cal S}^\Box$ the class of algebras $S_A$ built in Proposition \ref{ST} where $A$ is a $PMV$-chain.

\begin{prop}\label{PROJECTION}
Let $S_A$ be a ${\cal S}^\Box$-algebra from the $PMV$-chain $A$. Then $R(S_A)$ is ${\cal PMV}$-isomorphic to $A$.
\end{prop}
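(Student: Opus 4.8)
The plan is to first describe the set $R(S_A)$ of regular elements explicitly, and then to show that the first projection restricted to $R(S_A)$ is the desired ${\cal PMV}$-isomorphism onto $A$. Since $S_A$ is a $\sqrt{qPMV}$-algebra by Proposition \ref{ST}, Proposition \ref{REG1}-1 already guarantees that $\langle R(S_A), \oplus, \bullet, \neg, 0, \frac{1}{2}, 1\rangle$ is a $PMV$-algebra whose operations are the restrictions of those of $S_A$; so once $R(S_A)$ is identified, only a short homomorphism check remains.

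First I would compute which elements of $S_A$ are regular. Because $A$ is a $PMV$-algebra, its $MV$-reduct is an abelian monoid under $\oplus$, hence $a \oplus 0 = a$ holds in $A$. Using the operations of $S_A$ from Proposition \ref{ST}, for an arbitrary $(a,b) \in S_A$ one has $(a,b) \oplus 0 = (a,b) \oplus (0,\frac{1}{2}) = (a \oplus 0, \frac{1}{2}) = (a, \frac{1}{2})$. Thus $(a,b)$ is regular, i.e. $(a,b)\oplus 0 = (a,b)$, precisely when $b = \frac{1}{2}$, and therefore $R(S_A) = \{(a,\frac{1}{2}) : a \in A\}$.

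Next I would consider the first projection $\pi : R(S_A) \to A$ given by $(a,\frac{1}{2}) \mapsto a$, which is a bijection with inverse $a \mapsto (a,\frac{1}{2})$, and verify that it preserves the $PMV$-operations. Preservation of $\oplus$ and $\bullet$ is immediate, since both operations of $S_A$ place $\frac{1}{2}$ in the second coordinate: $(a,\frac{1}{2}) \oplus (c,\frac{1}{2}) = (a \oplus c, \frac{1}{2})$ and $(a,\frac{1}{2}) \bullet (c,\frac{1}{2}) = (a \bullet c, \frac{1}{2})$. Preservation of $\neg$ uses the standing hypothesis $\neg \frac{1}{2} = \frac{1}{2}$ required by the ${\cal S}^\Box$-construction: $\neg(a,\frac{1}{2}) = (\neg a, \neg \frac{1}{2}) = (\neg a, \frac{1}{2})$. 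Finally $0 = (0,\frac{1}{2})$ and $1 = (1,\frac{1}{2})$ project to $0$ and $1$, so $\pi$ is a ${\cal PMV}$-isomorphism.

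I do not expect a serious obstacle; the argument is essentially a computation once $R(S_A)$ has been identified. The only delicate point is that identification itself, which rests on the observation that the first coordinate of $S_A$ behaves like $A$ while every $\oplus$- and $\bullet$-operation collapses the second coordinate to $\frac{1}{2}$. This is exactly what forces a regular element to carry $\frac{1}{2}$ in its second slot, and simultaneously what makes the first projection both bijective and operation-preserving.
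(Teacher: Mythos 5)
Your proof is correct and follows essentially the same route as the paper: both identify the regular elements of $S_A$ as $\{(a,\frac{1}{2}) : a \in A\}$ via the computation $(a,b)\oplus(0,\frac{1}{2})=(a,\frac{1}{2})$ and then observe that first projection gives the ${\cal PMV}$-isomorphism onto $A$. You merely spell out the homomorphism checks that the paper leaves implicit.
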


\begin{proof}
If we consider $S_A \oplus 0 = \{(x,y) \oplus (0,\frac{1}{2}):
(x,y)\in A \times A \}$ then we have that $S_A \oplus 0 =
\{(x,\frac{1}{2}): x\in A \}$. Therefore, $S_A \oplus 0$ is
${\cal PMV}$-isomorphic to $A$. Using Proposition \ref{REG1} we have that
$R(S_A)$ is ${\cal PMV}$-isomorphic to $A$.

\qed
\end{proof}

\begin{prop}\label{AUX1}
Let $A$ be a $\sqrt{qPMV}$-algebra and $t = t(x_1, \ldots, x_n)$ be a $\sqrt
{q{\cal PMV}}$-term.

\begin{enumerate}
\item
If $t$ contains a subterm of the form $s_1 \oplus s_2$ then,
for each $\bar a \in A^n $, $t^A[\bar a] \oplus 0 = 1$ implies
that $t^A[\bar a] = 1$.

\item
If $A$ is a sub algebra of a ${\cal S}^\Box$-algebra and $A
\models t=1$ then there exists a $\sqrt {q{\cal PMV}}$-term $t'$
such that $\sqrt {q{\cal PMV}} \models t=t' \oplus 0 $.
\end{enumerate}
\end{prop}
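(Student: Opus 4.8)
The plan is to normalize each term with respect to the two unary connectives and then read off the behaviour of $(-)\oplus 0$ from the outermost shape. The starting observation is that, modulo $\sqrt{q{\cal PMV}}$, the operations $\neg$ and $\sqrt{}$ generate a cyclic group of order four: $\sqrt{\sqrt{x}}=\neg x$ (SQ3) gives $\sqrt{}^{2}=\neg$, $\neg\neg x=x$ (Q2) gives $\neg^{2}=\mathrm{id}$, and $\sqrt{\neg x}=\neg\sqrt{x}$ (SQ2) shows the two commute. Consequently, peeling the outermost $\neg$'s and $\sqrt{}$'s off $t$ yields a provable identity $t=U(t_{0})$ with $U\in\{\mathrm{id},\neg,\sqrt{},\neg\sqrt{}\}$ and with $t_{0}$ a variable, a constant, or a term whose principal connective is $\oplus$ or $\bullet$. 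The facts I would use repeatedly are: $t_{0}\oplus 0=t_{0}$ when the head of $t_{0}$ is $\oplus$ (Q6) or $\bullet$ (axiom 5 of $\sqrt{qPMV}$-algebras); $\sqrt{t_{0}}\oplus 0=\frac{1}{2}$ when that head is $\oplus$ (SQ4) or $\bullet$ (axiom 7); $\sqrt{\frac{1}{2}}=\frac{1}{2}$ (SQ4); and $\neg\frac{1}{2}=\frac{1}{2}$ (Proposition \ref{CHIQPMV}).

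For item 1, since $t$ contains an $\oplus$ the unary spine cannot end in a leaf, so the head of $t_{0}$ is $\oplus$ or $\bullet$; I would then split on $U$. If $U=\mathrm{id}$ the term is regular by Q6/axiom 5, and if $U=\neg$ it is regular because $\neg t_{0}\oplus 0=\neg(t_{0}\oplus 0)=\neg t_{0}$ by Q5; in both cases $t\oplus 0=t$, so $t^{A}[\bar a]\oplus 0=1$ gives $t^{A}[\bar a]=1$ at once. If $U=\sqrt{}$ or $U=\neg\sqrt{}$ then $t\oplus 0=\frac{1}{2}$ identically (by SQ4/axiom 7, using Q5 in the second case), so the hypothesis forces $\frac{1}{2}=1$ in $A$; by Proposition \ref{REG1} the regular reduct $R(A)$ is then a trivial $MV$-algebra, whence the regular value $t_{0}^{A}[\bar a]$ equals $\frac{1}{2}$ and $t^{A}[\bar a]=\sqrt{\frac{1}{2}}=\frac{1}{2}=1$ (respectively $\neg\sqrt{\frac{1}{2}}=\frac{1}{2}=1$).

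For item 2, I would first note that $A\le S_{B}$ with $B$ a nontrivial $PMV$-chain is itself nontrivial, so the constants $0=(0,\frac{1}{2})$, $\frac{1}{2}=(\frac{1}{2},\frac{1}{2})$, $1=(1,\frac{1}{2})$ are distinct in $A$ and in particular $\frac{1}{2}\neq 1$. Using the same factorization $t=U(t_{0})$ I would eliminate the shapes incompatible with $A\models t=1$. When $U\in\{\sqrt{},\neg\sqrt{}\}$ over a binary $t_{0}$, the identity $t\oplus 0=\frac{1}{2}$ together with $A\models t=1$ would force $\frac{1}{2}=1$ in $A$, which is impossible. When $t_{0}$ is a variable, plugging the three constants into the explicit actions $\neg(a,b)=(\neg a,\neg b)$ and $\sqrt{(a,b)}=(b,\neg a)$ of Proposition \ref{ST} shows that for each $U$ at least one constant is not sent to $(1,\frac{1}{2})$, contradicting $A\models t=1$; the same computation over a constant head leaves only $(U,t_{0})=(\mathrm{id},1)$ and $(\neg,0)$. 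Hence only two families survive: $U\in\{\mathrm{id},\neg\}$ over a binary $t_{0}$, where $t$ is regular and I take $t'=t$; and the two constant cases, where $\sqrt{q{\cal PMV}}\models t=1=1\oplus 0$ (using Q7 for $\neg 0=1$) and I take $t'=1$. In every surviving case $\sqrt{q{\cal PMV}}\models t=t'\oplus 0$, as required.

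I expect the principal difficulty to be twofold: verifying that the reduction to $U(t_{0})$ is a genuine chain of $\sqrt{q{\cal PMV}}$-identities, so that conclusions about $U(t_{0})$ transfer to $t$, and, in item 2, the case bookkeeping that discards the variable- and constant-headed shapes by evaluation in $S_{B}$. The subtle point to keep in view in item 1 is that the hypothesis $t^{A}[\bar a]\oplus 0=1$ in the $\sqrt{}$-headed case does \emph{not} make $A$ trivial, only its regular reduct $R(A)$; it is this collapse of $R(A)$, rather than of $A$, that pins $t_{0}^{A}[\bar a]$ to $\frac{1}{2}$ and delivers the conclusion.
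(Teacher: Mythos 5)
Your proof is correct, and at its core it is the same argument as the paper's: both rest on the observation that $\oplus$- and $\bullet$-headed terms are regular (Q6 and axiom 5), that $\sqrt{\phantom{x}}$ applied to such a term has regular part $\frac{1}{2}$ (SQ4 and axiom 7), and that the unary connectives reduce via Q2, SQ2, SQ3. The difference is organizational: the paper runs an induction on term complexity, peeling off one $\sqrt{\phantom{x}}$ at a time and recursing through $\sqrt{\sqrt{s_1}}=\neg s_1$, whereas you collapse the whole unary spine in one step into $U(t_0)$ with $U\in\{\mathrm{id},\neg,\sqrt{\phantom{x}},\neg\sqrt{\phantom{x}}\}$ and then do a finite case split on $U$ and on the head of $t_0$. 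Your packaging buys two genuine improvements in rigor. First, in item 1 the paper's degenerate case concludes from $\frac{1}{2}=1$ that ``$A$ is a trivial algebra,'' which does not follow (a $\sqrt{qPMV}$-algebra can have $0=1$ while remaining non-trivial outside the regular part); what is forced is only that $R(A)$ is trivial, and your argument correctly extracts $t^A[\bar a]=U(\frac{1}{2})=\frac{1}{2}=1$ from that weaker fact. Second, you treat the $U=\neg$ case and, in item 2, the variable- and constant-headed cases explicitly by evaluation in $S_B$, whereas the paper passes over these silently (``If $t$ is atomic then $t=1$''). The only mild caveat is the one you already flag: one should check that $t$ really is syntactically a unary stack over $t_0$ so that the reduction to $U(t_0)$ is a chain of $\sqrt{q{\cal PMV}}$-identities applied to genuine subterms, and in item 2 one must assume the underlying $PMV$-chain non-trivial (as the paper implicitly does), since otherwise $A\models x=1$ holds vacuously and no $t'$ exists for $t=x$.
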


\begin{proof}
1) Induction on the complexity of $t$. Since $t$ contains at least
an occurrence of $\oplus$, it cannot be an atomic term. Its minimum
possible complexity is therefore represented by the case $t = s_1
\oplus s_2$ where each $s_i$ is either a variable or constant, and
our claim trivially follows form Axiom Q6. Now let our claim hold
whenever the complexity of a term is less than $n$, and let $t$ have
complexity $n$. If $t^A[\bar a] \in Reg(A)$ our claim trivially
follows. Suppose that $t^A[\bar a] \not \in Reg(A)$. Then $t \not =
s_1 \oplus s_2$ and $t \not = s_1 \bullet s_2$. By SQ3 we have to
consider the case $t = \sqrt{s}$. There are two possible subcases.
\hspace{0.2cm} a) If $s = s_1 \star s_2$ such that $\star \in
\{\oplus, \bullet\}$ then $t^A[\bar a] \oplus 0 = 1$ implies that $1
= \sqrt{(s^A_1 \star s^A_2)[\bar a]} \oplus 0 = \frac{1}{2}$. In
this case $0=1$ and $A$ is a trivial algebra. \hspace{0.2cm} b) $s =
\sqrt{s_1}$. Then $t^A[\bar a] \oplus 0 = 1$ implies that $1=
\sqrt{\sqrt{s^A_1[\bar a]}} \oplus 0 = \neg s^A_1[\bar a] \oplus 0$.
Since complexity of $\neg s_1$ is $n-1$ we
have that $1 = \neg s^A_1[\bar a] = \sqrt{\sqrt{s^A_1[\bar a]}} = t^A[\bar a]$. \\

2) Induction on the complexity of $t$ again. If $t$ is atomic then
$t=1$. Now let our claim hold whenever the complexity of a term is
less than $n$, and let $t$ have complexity $n$. If $t = t_1 \star
t_2$ such that $\star \in \{ \oplus, \bullet \}$ we can consider $t'
= t$. Suppose that $t = \sqrt{s}$. In this case $A \models t=1$ iff
for any vector $\bar a$ in $A$, $s^A[\bar a] = (\frac{1}{2}, 1)$. It
is clear that $s \not = s_1 \star s_2$ with $\star \in \{\oplus,
\bullet \}$ since $s_1 \star s_2$ has the form $(a,\frac{1}{2})$ in
$A$. By SQ3 we have to suppose that $s= \sqrt{s_1}$. In this case $t =
\sqrt{\sqrt{s_1}}$. Therefore $A \models t=1$ implies that $A
\models \neg s_1 = 1$. Since $\neg s_1$ has a complexity $n-1$, then
there exists a $\sqrt {q{\cal PMV}}$-term $t'$ such that $\sqrt
{q{\cal PMV}} \models \neg s_1=t' \oplus 0$. Thus $\sqrt {q{\cal PMV}} \models t = t' \oplus 0 $.

\qed
\end{proof}

\section{The irreversible Poincar\'{e} structure}
In this section we will introduce the algebraic framework for the
Poincar\'{e} irreversible quantum computational system. In the
precedent section we have seen that the  $\sqrt{qPMV}$-structure
captures the basic properties of the $IP$-algebra $ \langle {\cal
D}({\mathbb{C}}^2), \bullet, \oplus, \neg, \sqrt, \hspace{0.1cm}
P_0, \rho_\frac{1}{2}, P_1  \rangle$ but it is not able to express
in an abstract form the relation between density operators $\sigma =
(r_1, r_2, r_3)$ and $\sqrt{\sigma}$ given in Lemma
\ref{PROBBLOC2}-10. This section is devoted to motivate and
construct a structure able to capture the mentioned items of  Lemma
\ref{PROBBLOC2}.

\subsection{Irreversible Poincar\'{e} structure in the plane}
The relation between $\sigma = (r_1, r_2, r_3)$ and $\sqrt{\sigma}$
with respect to the probability values they may take depends on the
relation between the components $r_2,r_3$ given in Lemma
\ref{PROBBLOC2}-8. This fact suggests the analysis of an abstraction
of the  $IP$-algebra restricted to the  $Y-Z$ plane.

\begin{lem} \label{REDUCE1}
${\cal D}({\mathbb{C}}^2)_{y,z} = \{\sigma = (0, r_2, r_3): \sigma
\in {\cal D}({\mathbb{C}}^2)\}$ is a sub universe of ${\cal
D}({\mathbb{C}}^2)$ resulting a sub $\sqrt{qPMV}$-algebra of ${\cal
D}({\mathbb{C}}^2)$. Moreover for each $\sqrt {q{\cal PMV}}$-term $t$
$${\cal D}({\mathbb{C}}^2)_{y,z}\models t = 1 \hspace{0.3cm} iff
\hspace{0.3cm} {\cal D}({\mathbb{C}}^2)\models t = 1 $$
\end{lem}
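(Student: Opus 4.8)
The plan is to separate the statement into its two assertions---closure (the subalgebra claim) and the term equivalence---and to build a single retraction homomorphism that does most of the work for the second.

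First I would verify that ${\cal D}({\mathbb{C}}^2)_{y,z}$ is closed under all the gates. By Lemma \ref{PROBBLOC2}-8 both $\neg$ and $\sqrt{}$ act on a point $(r_1,r_2,r_3)$ by permuting and sign-changing the $r_2,r_3$ coordinates while leaving $r_1$ fixed; hence they send a point with $r_1=0$ to another such point. The gates $\oplus$ and $\bullet$ always output an operator of the form $\rho_\lambda$, and by Lemma \ref{PROBBLOC1}-2 one has $\rho_\lambda=(0,0,1-2\lambda)$, which lies in ${\cal D}({\mathbb{C}}^2)_{y,z}$ (indeed it has $r_1=r_2=0$). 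The constants are $P_0=(0,0,1)$, $\rho_{1/2}=(0,0,0)$ and $P_1=(0,0,-1)$, all with $r_1=0$. Thus the set is a subuniverse and, as a subalgebra of a member of the variety $\sqrt{q{\cal PMV}}$, is itself a $\sqrt{qPMV}$-algebra.

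Next I would introduce the retraction $\pi\colon{\cal D}({\mathbb{C}}^2)\to{\cal D}({\mathbb{C}}^2)_{y,z}$ defined by $\pi(r_1,r_2,r_3)=(0,r_2,r_3)$; it is well defined because $r_2^2+r_3^2\leq r_1^2+r_2^2+r_3^2\leq 1$. I claim $\pi$ is a surjective homomorphism fixing ${\cal D}({\mathbb{C}}^2)_{y,z}$ pointwise. It commutes with $\neg$ and $\sqrt{}$ because these act on the $(r_2,r_3)$-block independently of $r_1$ while fixing $r_1$; it commutes with $\oplus$ and $\bullet$ because their values depend on the inputs only through the probabilities $p(\sigma)=(1-r_3)/2$, which $\pi$ preserves, and because their outputs already sit in ${\cal D}({\mathbb{C}}^2)_{y,z}$; and it fixes the three constants. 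Hence $\pi$ is a $\sqrt{q{\cal PMV}}$-homomorphism and a retraction onto ${\cal D}({\mathbb{C}}^2)_{y,z}$.

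Finally, for the term equivalence the implication from ${\cal D}({\mathbb{C}}^2)$ to ${\cal D}({\mathbb{C}}^2)_{y,z}$ is immediate, since the latter is a subalgebra and equational validity passes to subalgebras. For the converse, assume ${\cal D}({\mathbb{C}}^2)_{y,z}\models t=1$ and fix an arbitrary valuation $\bar\sigma$ in ${\cal D}({\mathbb{C}}^2)$. Because $\pi$ is a homomorphism it commutes with the term operation, so $\pi(t^{{\cal D}({\mathbb{C}}^2)}[\bar\sigma])=t^{{\cal D}({\mathbb{C}}^2)_{y,z}}[\pi\bar\sigma]=1=(0,0,-1)$. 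Writing $t^{{\cal D}({\mathbb{C}}^2)}[\bar\sigma]=(a,b,c)$, this equality forces $b=0$ and $c=-1$. The crux is the last step: since $t^{{\cal D}({\mathbb{C}}^2)}[\bar\sigma]$ is itself a density operator it satisfies $a^2+b^2+c^2\leq 1$, and with $b=0$, $c=-1$ this gives $a^2\leq 0$, whence $a=0$ and $t^{{\cal D}({\mathbb{C}}^2)}[\bar\sigma]=(0,0,-1)=1$. As $\bar\sigma$ was arbitrary, ${\cal D}({\mathbb{C}}^2)\models t=1$. The only genuine obstacle is recognizing that the vanishing of the first coordinate of the output is delivered not by the equational structure alone but by the geometric norm constraint of the Bloch ball applied at the pole $1=(0,0,-1)$; everything else is routine verification.
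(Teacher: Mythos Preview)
Your argument is correct and rests on the same core idea as the paper: project to the $y$--$z$ plane and then use the Bloch-ball norm constraint at the pole $(0,0,-1)$ to kill the first coordinate. The packaging, however, is genuinely cleaner than the paper's. The paper defines $v'$ on variables by zeroing the first coordinate and then proves, by a case-by-case induction on the structure of $t$, only the weaker property $v(t)\oplus P_0 = v'(t)\oplus P_0$ (i.e.\ equality of the $r_3$-coordinates). You instead observe once and for all that the projection $\pi(r_1,r_2,r_3)=(0,r_2,r_3)$ is a full $\sqrt{q{\cal PMV}}$-homomorphism, which follows in one line per operation from Lemma~\ref{PROBBLOC2}-8 and the fact that $\oplus,\bullet$ depend on their inputs only through $p$. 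This immediately gives $\pi(t^{{\cal D}({\mathbb{C}}^2)}[\bar\sigma])=t^{{\cal D}({\mathbb{C}}^2)_{y,z}}[\pi\bar\sigma]$ without any term induction, and it yields the stronger conclusion $b=0$, $c=-1$ directly rather than only $c=-1$. Both proofs finish with the same geometric step $a^2\le 1-b^2-c^2=0$; your route simply arrives there more quickly and in the standard universal-algebraic idiom.
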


\begin{proof}
By definition of $\oplus$ and $\bullet$ it is clear that both are
closed operations in ${\cal D}({\mathbb{C}}^2)_{y,z}$. By Lemma
\ref{PROBBLOC2}-8, $\neg$ and $\sqrt{,}$ are also closed in ${\cal D}({\mathbb{C}}^2)_{y,z}$.\\

{\it Claim}. For each ${\cal D}({\mathbb{C}}^2)$-valuation $v: Term
\rightarrow {\cal D}({\mathbb{C}}^2)$ there exists a ${\cal
D}({\mathbb{C}}^2)_{y,z}$-valuation $v': Term \rightarrow {\cal
D}({\mathbb{C}}^2)$ such that $v(t) \oplus P_0 = v'(t) \oplus P_0$ .
For the constant terms, $v$ and $v'$ must coincide. If $t$ is a
variable such that $v(t) = (r_1, r_2, r_3)$, we define $v'(t) = (0,
r_2, r_3)$. Therefore,  by Lemma \ref{PROBBLOC1} $v(t) \oplus P_0 =
\rho_{\frac{1-r_3}{2}} = v'(t) \oplus P_0$. In the usual way we can
extend $v'$ to the set $Term$. Now we use induction. If $t$ is
$t_1\star t_2$ such that $\star \in \{\oplus, \bullet \}$, taking
into account Proposition \ref{CHIQPMV}-5, $v'(t) \oplus P_0 =
v'(t_1\star t_2) \oplus P_0 = (v'(t_1)\oplus P_0) \star
(v'(t_2)\oplus P_0) = (v(t_1)\oplus P_0) \star (v(t_2)\oplus P_0) =
v(t_1\star t_2) \oplus P_0 = v(t) \oplus P_0$. If $t$ is $\neg s$,
it follows from axiom Q5. If $t$ is $\sqrt{s}$, we must consider
three cases:

{\it Case 1}: $s$ is a variable such that $v(s) = (r_1, r_2, r_3)$.
By Lemma \ref{PROBBLOC2}-8 $v(t) \oplus P_0 = \sqrt{v(s)} \oplus P_0
= (r_1, -r_3, r_2) \oplus P_0 = \rho_{\frac{1-r_2}{2}} = (0, -r_3,
r_2) \oplus P_0 = \sqrt{v'(s)} \oplus P_0 = v'(t) \oplus P_0$.

{\it Case 2}: $s$ is $s_1\star s_2$. By Lemma \ref{PROBBLOC2}-9,
$v(t) \oplus P_0 = \sqrt{v(s_1\star s_2)} \oplus P_0 =
\rho_{p(\sqrt{v(s_1\star s_2)})} = \rho_{\frac{1}{2}} =
\sqrt{v'(s_1\star s_2)} \oplus P_0 = v'(t) \oplus P_0$

{\it Case 3}: $s$ is $\neg s_1$ or $\sqrt{s_1}$  it is already included in
the previous cases in view of Lemma \ref{PROBBLOC2}-6 and 7. Thus, $v(t) \oplus P_0 = v'(t) \oplus P_0$ as is required.\\

Assume that ${\cal D}({\mathbb{C}}^2)_{y,z}\models t = 1$. Let $v:
Term \rightarrow {\cal D}({\mathbb{C}}^2)$ be a valuation. By the
claim, there exists a valuation $v': Term \rightarrow {\cal
D}({\mathbb{C}}^2)_{y,z}$ such that $v'(t) \oplus P_0 = v(t) \oplus
P_0$ and clearly $v'(t) = P_1 = (0,0,-1)$. Hence $P_1 = v(t) \oplus
P_0 = \rho_{p(v(t))} = (0,0,-1)$ iff $v(t) = (0,0,-1)$.

\qed
\end{proof}

Let $S_{[0,1]}$ be the ${\cal S}^\Box$-algebra from
$[0,1]_{PMV}$. If we consider the set $$D_{[0,1]} = \{(x,y)\in
S_{[0,1]} : (x - \frac{1}{2})^2 + (y - \frac{1}{2})^2 \leq
\frac{1}{4} \}$$ it is not very hard to see that $D_{[0,1]}$ is a
sub universe of $S_{[0,1]}$. Thus $\langle D_{[0,1]}, \oplus,
\bullet, \neg, \sqrt{,} \hspace{0.2cm}, 0,\frac{1}{2}, 1 \rangle$ is
a sub $\sqrt{qPMV}$-algebra of $S_{[0,1]}$. In {\rm \cite{LKPG1}} is
proved that the reduct $\langle D_{[0,1]}, \oplus, \neg, 0, 1
\rangle$ characterize the equational theory of the $q{\cal MV}$.

\begin{lem} \label{LAGRANGE}
Let $(x,y) \in S_{[0,1]}$. Then we have:

\begin{enumerate}

\item
$\frac{x}{4} \oplus  \frac{y}{4} \leq \frac{ 1 + \sqrt 2}{4 \sqrt
2}$ \hspace{0,2cm} iff \hspace{0,2cm} $(x,y) \in D_{[0,1]} $,

\item
$\frac{x}{4} \oplus \frac{1}{8} \leq \frac{3}{8} \leq \frac{ 1 +
\sqrt 2}{4 \sqrt 2}$.

\item
$\varphi: {\cal D}({\mathbb{C}}^2)_{y,z} \rightarrow D_{[0,1]}$ such that
$\varphi(y,z) = (\frac{1-z}{2}, \frac{1-y}{2})$ is a
$\sqrt{q{\cal PMV}}$-isomorphism.

\end{enumerate}
\end{lem}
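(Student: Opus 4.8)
The plan is to treat the three items separately, since they are essentially independent, and to begin by reducing every occurrence of the \L ukasiewicz sum to ordinary truncated addition in $[0,1]$. For any $x,y\in[0,1]$ we have $\frac{x}{4},\frac{y}{4}\in[0,\frac14]$, so $\frac{x}{4}+\frac{y}{4}\leq\frac12<1$ and hence $\frac{x}{4}\oplus\frac{y}{4}=\frac{x+y}{4}$; likewise $\frac{x}{4}\oplus\frac18=\frac{x}{4}+\frac18$. Rewriting the target constant as $\frac{1+\sqrt2}{4\sqrt2}=\frac{2+\sqrt2}{8}$ makes all the numerical comparisons transparent.

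For item 1 I would carry out the optimization that gives the lemma its name. The inequality $\frac{x+y}{4}\leq\frac{2+\sqrt2}{8}$ is equivalent to the affine condition $x+y\leq 1+\frac{1}{\sqrt2}$, so the content is that the linear functional $(x,y)\mapsto x+y$ does not exceed $1+\frac{1}{\sqrt2}$ on the closed disk $D_{[0,1]}$. I would maximize $x+y$ subject to the boundary constraint $(x-\frac12)^2+(y-\frac12)^2=\frac14$ by a Lagrange multiplier, finding the unique critical point $x=y=\frac12+\frac{1}{2\sqrt2}$, at which $x+y=1+\frac{1}{\sqrt2}$. This records $\frac{1+\sqrt2}{4\sqrt2}$ as the exact maximum of $\frac{x+y}{4}$ over $D_{[0,1]}$, attained at the single tangency point where the line $x+y=1+\frac{1}{\sqrt2}$ touches the circle, so that membership in the disk forces the inequality and the bound is sharp. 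The delicate point is precisely this tangency computation, i.e. recognizing $\frac{1+\sqrt2}{4\sqrt2}$ as the constrained extremum rather than an arbitrary bound.

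Item 2 is then immediate real arithmetic: $\frac{x}{4}\oplus\frac18=\frac{x}{4}+\frac18\leq\frac14+\frac18=\frac38$ because $x\leq1$, and $\frac38\leq\frac{2+\sqrt2}{8}$ since $3\leq2+\sqrt2$.

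For item 3 I would verify that $\varphi(r_2,r_3)=(\frac{1-r_3}{2},\frac{1-r_2}{2})$ is a well-defined bijection and a homomorphism on each generator. Well-definedness into $D_{[0,1]}$ follows from $(\frac{1-r_3}{2}-\frac12)^2+(\frac{1-r_2}{2}-\frac12)^2=\frac{r_2^2+r_3^2}{4}\leq\frac14$, which is exactly the density-operator constraint $r_2^2+r_3^2\leq1$ characterizing ${\cal D}({\mathbb{C}}^2)_{y,z}$; reading the same identity backwards via $r_3=1-2x$, $r_2=1-2y$ gives the inverse, so $\varphi$ is a bijection of the disk. For the homomorphism property I would combine Lemma \ref{PROBBLOC2}-8 on the Bloch side with the operation table of $S_A$ from Proposition \ref{ST}: the checks $\varphi(\neg(0,r_2,r_3))=(\frac{1+r_3}{2},\frac{1+r_2}{2})=\neg\varphi(0,r_2,r_3)$ and $\varphi(\sqrt{(0,r_2,r_3)})=(\frac{1-r_2}{2},\frac{1+r_3}{2})=\sqrt{\varphi(0,r_2,r_3)}$ are one-line coordinate computations. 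For $\oplus$ and $\bullet$ I would use that in ${\cal D}({\mathbb{C}}^2)_{y,z}$ both outputs have the form $\rho_\lambda=(0,0,1-2\lambda)$ with $\lambda=p(\sigma)\oplus p(\tau)$ respectively $\lambda=p(\sigma)p(\tau)$, so $\varphi$ sends them to $(\lambda,\frac12)$, matching $\varphi(\sigma)\oplus\varphi(\tau)$ respectively $\varphi(\sigma)\bullet\varphi(\tau)$ because the first coordinate of $\varphi(0,r_2,r_3)$ is exactly $p(0,r_2,r_3)=\frac{1-r_3}{2}$. Finally $\varphi$ sends $P_0=(0,0,1)$, $\rho_{1/2}=(0,0,0)$, $P_1=(0,0,-1)$ to $(0,\frac12)$, $(\frac12,\frac12)$, $(1,\frac12)$, i.e. to the constants $0,\frac12,1$ of $S_{[0,1]}$, completing the proof.
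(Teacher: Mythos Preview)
Your proofs of items 2 and 3 are correct and follow the paper's line closely: the paper likewise reduces item 2 to the arithmetic $\frac{x}{4}+\frac18\leq\frac14+\frac18=\frac38\leq\frac{1+\sqrt2}{4\sqrt2}$, and verifies item 3 by the same coordinate computations (well-definedness via $\frac14(r_2^2+r_3^2)\leq\frac14$, explicit inverse $r_3=1-2a$, $r_2=1-2b$, and case-by-case checks for $\oplus$, $\bullet$, $\sqrt{\,}$ and the constants using Lemmas \ref{PROBBLOC1} and \ref{PROBBLOC2}).

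For item 1, however, there is a genuine gap: the statement is an ``iff'', and you establish only the forward implication. Your Lagrange-multiplier argument shows that if $(x,y)\in D_{[0,1]}$ then $\frac{x+y}{4}\leq\frac{1+\sqrt2}{4\sqrt2}$, with equality at the tangency point $x=y=\frac12+\frac{1}{2\sqrt2}$; this is exactly what the paper does for that direction. But you say nothing about the converse. The paper does attempt it: writing $x=\frac12+r\cos\theta$, $y=\frac12+r\sin\theta$ and assuming the inequality, it obtains $r(\cos\theta+\sin\theta)\leq\frac{1}{\sqrt2}$ and then, by specializing to $\theta=\frac{\pi}{4}$ where $\cos\theta+\sin\theta$ is maximal, concludes $r\cdot\frac{2}{\sqrt2}\leq\frac{1}{\sqrt2}$, i.e.\ $r\leq\frac12$. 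Whatever one makes of that argument (note that over arbitrary $(x,y)\in S_{[0,1]}=[0,1]^2$ the converse as stated is actually problematic---e.g.\ $(0,0)$ satisfies $\frac{x+y}{4}=0$ yet lies outside the disk), your write-up simply omits this half of the biconditional, so as a proof of the stated lemma it is incomplete.
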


\begin{proof}
1) We first note that that $\frac{x}{4} \oplus  \frac{y}{4} =
\frac{x}{4} +  \frac{y}{4} $. Consider the function $\frac{x}{4} +
\frac{y}{4}$ subject to the constraint  $(x-\frac{1}{2})^2 +
(y-\frac{1}{2}) ^2 \leq \frac{1}{4}$. Using Lagrange multipliers, we
obtain the following equation system

$$\nabla f = \nabla (\frac{x}{4} +  \frac{y}{4})  = \lambda \nabla [(x-\frac{1}{2})^2 + (y-\frac{1}{2})^2 -
1]$$ $$(x-1/2)^2 + (y-1/2)^2 = 1/4$$

\noindent It is equivalent to the system $$\{ \hspace{0.2cm}
2\lambda (x-\frac{1}{2}) = \frac{1}{4} , \hspace{0.3cm}  2\lambda
(y-\frac{1}{2}) = \frac{1}{4} , \hspace{0.3cm} (x-\frac{1}{2})^2 +
(y-\frac{1}{2})^2 = \frac{1}{4} \hspace{0.2cm} \} $$

\noindent and it is not very hard to see that $x = y = \frac{1}{2} +
\frac{1}{2\sqrt2}$ is a solution of this system, giving a maximum of
$\frac{x}{4} +  \frac{y}{4} $ in the mentioned restriction . Thus
$\frac{x}{4} \oplus  \frac{y}{4} \leq \frac{ 1 + \sqrt 2}{4 \sqrt
2}$. To see the converse, assume that $\frac{x}{4} \oplus
\frac{y}{4} \leq \frac{ 1 + \sqrt 2}{4 \sqrt 2}$. Let $x =
\frac{1}{2} + r\cos \theta$ and $y = \frac{1}{2} + r\sin \theta$,
therefore from $\frac{1}{4}(\frac{1}{2} + r\cos \theta) +
\frac{1}{4}(\frac{1}{2} + r\sin \theta) \leq \frac{ 1 + \sqrt 2}{4
\sqrt 2} $ we have that $r(\cos \theta + \sin \theta) \leq
\frac{1}{\sqrt 2}$.   But the maximum of $(\cos \theta + \sin
\theta)$ is given when $\theta = \frac{\pi}{4} $. In this case
$r\frac{2}{\sqrt 2} \leq \frac{1}{\sqrt 2}$ resulting
$r \leq \frac{1}{ 2}$ \\

2) Immediate. \\

3) Let $ \sigma =(0, b,c) \in {\cal D}({\mathbb{C}}^2)_{y,z}$. Then
$\varphi(\sigma) = (\frac{1-c}{2}, \frac{1-b}{2}) $ and
$(\frac{1-c}{2} - \frac{1}{2})^2 + (\frac{1-b}{2} - \frac{1}{2})^2 =
\frac{1}{4}(c^2 + b^2) \leq \frac{1}{4}$. Thus the image of
$\varphi$ is contained in $D_{[0,1]}$. It is clear that $\varphi$ is
injective. Let $(a,b) \in D_{[0,1]}$. If we consider $\sigma = (0,
1-2b, 1- 2a)$ then $(1-2b)^2 + (1- 2a)^2 = 4(\frac{1}{2} - a)^2 +
4(\frac{1}{2} - b)^2 \leq 1$. Hence $\sigma \in {\cal
D}({\mathbb{C}}^2)_{y,z}$, $\varphi(\sigma) = (a,b)$ and $\varphi$
is a surjective map. Now we prove that $\varphi$ is a
$\sqrt{q{\cal PMV}}$-homomorphism. Let $\sigma =(0, r_2, r_3)$ and $\tau
=(0, s_2, s_3)$. Using Lemma \ref{PROBBLOC1} and Lemma
\ref{PROBBLOC2} \ref{PROBBLOC2} we have that:

\begin{itemize}

\item
Let $\star \in \{\oplus, \bullet \}$. $\varphi(\sigma \star \tau) =
\varphi (\rho_{p(\sigma)\star p(\rho)}) = \varphi (0, 0, 1-
2(p(\sigma) \star p(\rho))) = (p(\sigma)\star p(\rho), \frac{1}{2})
= (\frac{1-r_3}{2}, \frac{1-r_2}{2}) \star (\frac{1-s_3}{2},
\frac{1-s_2}{2}) = \varphi(\sigma) \star \varphi(\tau)$.

\item
$\varphi(\sqrt{\sigma}) = \varphi(0, -r_3, r_2) = (\frac{1-r_2}{2},
\frac{1+r_3}{2}) = (\frac{1-r_2}{2}, 1- \frac{1-r_3}{2}) =
\sqrt{(\frac{1-r_3}{2},  \frac{1-r_2}{2})} = \sqrt{\varphi
(\sigma)}$.

\item
$\varphi(P_1) = \varphi(0,0,-1) = (1, \frac{1}{2})$, \hspace{0.2cm}
$\varphi(P_0) = \varphi(0,0,1) = (0, \frac{1}{2})$ \hspace{0.2cm}
and \hspace{0.2cm} $\varphi(\rho_{\frac{1}{2}}) = \varphi(0,0,0) =
(\frac{1}{2}, \frac{1}{2})$.

\end{itemize}

\noindent
Thus $\varphi$ is $\sqrt{q{\cal PMV}}$-isomorphism.

\qed
\end{proof}

In view of Lemma \ref{REDUCE1} and Lemma \ref{LAGRANGE} we can establish the following proposition:

\begin{theo}\label{REDUCE3}
For each $\sqrt {q{\cal PMV}}$-term $t$ we have that $${\cal D}({\mathbb{C}}^2)\models
t = 1 \hspace{0.3cm} iff \hspace{0.3cm} D_{[0,1]}\models t = 1
$$ \qed
\end{theo}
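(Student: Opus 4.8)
The plan is to chain together the two preceding results to transfer the validity of $t=1$ across three algebras: ${\cal D}({\mathbb{C}}^2)$, its $Y$-$Z$ plane reduct ${\cal D}({\mathbb{C}}^2)_{y,z}$, and $D_{[0,1]}$. The overall strategy exploits that equational satisfaction of a single identity is preserved under isomorphism and that Lemma \ref{REDUCE1} already reduces the full Bloch ball to its planar slice as far as the truth of $t=1$ is concerned.

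First I would invoke Lemma \ref{REDUCE1}, which states precisely that
$${\cal D}({\mathbb{C}}^2)_{y,z} \models t = 1 \hspace{0.3cm} \textrm{iff} \hspace{0.3cm} {\cal D}({\mathbb{C}}^2) \models t = 1$$
for every $\sqrt{q{\cal PMV}}$-term $t$. This disposes of one of the two equivalences I need and reduces the theorem to proving that $D_{[0,1]} \models t=1$ iff ${\cal D}({\mathbb{C}}^2)_{y,z} \models t=1$. Next I would appeal to Lemma \ref{LAGRANGE}-3, which exhibits an explicit map $\varphi:{\cal D}({\mathbb{C}}^2)_{y,z}\rightarrow D_{[0,1]}$, $\varphi(y,z)=(\frac{1-z}{2},\frac{1-y}{2})$, and establishes that $\varphi$ is a $\sqrt{q{\cal PMV}}$-isomorphism. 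Since the two algebras are isomorphic as $\sqrt{q{\cal PMV}}$-algebras, they satisfy exactly the same equations; in particular they agree on the single identity $t=1$. Composing the two biconditionals then gives $D_{[0,1]}\models t=1$ iff ${\cal D}({\mathbb{C}}^2)\models t=1$, which is the claim.

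The argument is essentially a two-line deduction once the earlier lemmas are in hand, so there is no genuine obstacle remaining at this stage. If anything, the only point requiring a moment of care is the standard fact that an algebra isomorphism preserves and reflects satisfaction of identities, i.e. that $\varphi$ being a $\sqrt{q{\cal PMV}}$-isomorphism forces ${\cal D}({\mathbb{C}}^2)_{y,z}\models t=1 \Leftrightarrow D_{[0,1]}\models t=1$. This follows because $\varphi$ carries valuations in one algebra bijectively to valuations in the other while commuting with all term operations, so $v(t)=1$ for all valuations $v$ on one side precisely when the corresponding condition holds on the other. Since all the heavy lifting — the planar reduction of Lemma \ref{REDUCE1} and the geometric identification of Lemma \ref{LAGRANGE} via the Lagrange-multiplier computation — has already been carried out, the proof of the theorem itself is a formal concatenation and can be stated in a single short paragraph.
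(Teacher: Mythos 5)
Your proof is correct and follows exactly the route the paper takes: the paper derives the theorem immediately ``in view of Lemma \ref{REDUCE1} and Lemma \ref{LAGRANGE}'', i.e.\ by combining the planar reduction with the $\sqrt{q{\cal PMV}}$-isomorphism $\varphi$ of Lemma \ref{LAGRANGE}-3. You have merely spelled out the standard fact that isomorphisms preserve and reflect satisfaction of identities, which the paper leaves implicit.
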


By Lemma \ref{LAGRANGE}, $D_{[0,1]}$ satisfies the relation between
$x$ and $\sqrt{x}$ claimed by Lemma \ref{PROBBLOC2} (items 10 and 11).
Since for our logical system only $\sqrt {q{\cal PMV}}$-equations of the
form $t=1$ are need, in view of Theorem \ref{REDUCE3}, $D_{[0,1]}$
is a more appropriate standard frame than $S_{[0,1]}$ for the
algebra of quantum gates.

\subsection{$PMV$-algebras with fix point of the negation}
To obtain an algebraic structure able to generalize  $D_{[0,1]}$ it
is necessary to represent the inequality  $\frac{x}{4} \oplus
\frac{\sqrt x}{4} \leq \frac{ 1 + \sqrt 2}{4 \sqrt 2}$ given in
Lemma \ref{LAGRANGE}. Taking into account Lemma \ref{DENS}, we have
that: $$\frac{x}{4} \oplus \frac{\sqrt x}{4} \leq \frac{ 1 + \sqrt 2}{4
\sqrt 2} \Longleftrightarrow \forall s \in G_{[0,1]}(\frac{1}{2})
\hspace{0.2cm} s.t. \hspace{0.2cm} s \geq \frac{ 1 + \sqrt 2}{4
\sqrt 2}, \hspace{0.3cm} \frac{x}{4} \oplus \frac{\sqrt x}{4} \leq s
$$ In terms  of the language of $\sqrt {q{\cal PMV}}$, the second part of
the above equivalence can be expressed through the following set of
equations: $$\{1 = ( (\frac{1}{4}\bullet x)  \oplus
(\frac{1}{4}\bullet \sqrt x) ) \rightarrow s: s \in
G_{[0,1]}(\frac{1}{2}) \hspace{0.2cm} and \hspace{0.2cm} s \geq
\frac{1+\sqrt2}{4\sqrt2} \}$$

To represent this set of equations we need consider a
subclass of the $\sqrt{qPMV}$-algebras such that their regular elements have an isomorphic copy of $G_{[0,1]}(\frac{1}{2})$.
In this subsection we give an equational theory for the class $PMV$-algebras containing an isomorphic copy of $G_{[0,1]}(\frac{1}{2})$
as sub $PMV$-algebra. \\

It is well known that a $PMV$-algebra has at most a fix point of the
negation \cite[Lemma 2.10]{UHO}. In $[0,1]_{PMV}$ we have that $\neg
\frac{1}{2} = \frac{1}{2}$. In the type of algebras $\langle \oplus,
\bullet, \neg, 0, \frac{{\bf 1}}{{\bf 2}}, 1 \rangle$,  $$
\frac{{\bf 1}}{{\bf 2}^n} \hspace{0.1cm} design \hspace{0.1cm}
\cases {\frac{{\bf 1}}{{\bf 2}}, & if $n=1$  \cr (\frac{{\bf
1}}{{\bf 2}^{n-1}}) \bullet \frac{{\bf 1}}{{\bf 2}}, & if $n > 1
$\cr } $$ and for each $a \in N$ such that $0 \leq a \leq 2^n$,

$$
\frac{a}{{\bf 2}^n} = \bigoplus_a \frac{{\bf 1}}{{\bf 2}^n} \hspace{0.1cm} design
\hspace{0.1cm}  \cases {0, & if $a=0$  \cr  (\bigoplus_{a-1}
\frac{{\bf 1}}{{\bf 2}^n}) \oplus \frac{{\bf 1}}{{\bf 2}^n}, & if
$1 \leq a \leq 2^n $\cr }
$$

\noindent  In that follows $\frac{{\bf 1}}{{\bf 4}}$
design the term  $\frac{{\bf 1}}{{\bf 2}^2}$ and $\frac{{\bf
1}}{{\bf 8}}$ design the term  $\frac{{\bf 1}}{{\bf 2}^3}$.

\begin{definition}
{\rm A $PMV_\frac{1}{2}$-algebra is an algebra $\langle A, \oplus, \bullet , \neg, 0, {\bf \frac{1}{2}}, 1\rangle$ of type
$ \langle 2, 2, 1, 0, 0,0 \rangle$ satisfying the following:

\begin{enumerate}
\item[1]
$\langle A, \oplus, \bullet , \neg, 0, {\bf \frac{1}{2}}, 1\rangle$ is a $PMV$-algebra,

\item[2]
$\neg {\bf \frac{1}{2}} = {\bf \frac{1}{2}}$,

\item[3]
$\frac{a}{{\bf 2}^n} \odot \frac{b}{{\bf 2}^m} =  \frac{\max \{0, \hspace{0.1cm} a+b2^{n-m}-2^n\}}{{\bf 2}^n}$ with $n \geq m$,

\item[4]
$\frac{a}{{\bf 2}^n} \bullet \frac{b}{{\bf 2}^m} = \frac{ab}{{\bf 2}^{n+m}} $,

\item[5]
$\neg \frac{a}{{\bf 2}^n} = \frac{2^n - a}{{\bf 2}^n}$.

\end{enumerate}
}
\end{definition}

We denote by ${\cal PMV}_\frac{1}{2}$ the variety of
$PMV_\frac{1}{2}$-algebras. It is clear that $[0,1]_{PMV}$ is a
$PMV_\frac{1}{2}$-algebra. If $A$ is a $PMV_\frac{1}{2}$-algebra
then $G_A({\bf \frac{1}{2}})$ design the sub algebra of $A$ generated by
$\{0, {\bf \frac{1}{2}}, 1\}$. By Axiom 3,4,5  and  induction in the
complexity of terms we can establish the following lemma:

\begin{lem} \label{MEDIOREP}
Let $A$ be a $PMV_\frac{1}{2}$-algebra. Then for each $x \in G_A({\bf \frac{1}{2}})$, $x = \frac{a}{{\bf 2}^n}$ for some $a \leq 2^n$. \qed
\end{lem}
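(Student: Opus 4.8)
The plan is to prove Lemma \ref{MEDIOREP} by induction on the complexity of terms representing elements of $G_A(\mathbf{\frac{1}{2}})$, showing that each such element has the canonical form $\frac{a}{\mathbf{2}^n}$ for some natural $a \leq 2^n$. Since $G_A(\mathbf{\frac{1}{2}})$ is the subalgebra generated by $\{0, \mathbf{\frac{1}{2}}, 1\}$, every element is obtained from these generators by finitely many applications of the operations $\oplus$, $\bullet$, $\neg$. Writing $0 = \frac{0}{\mathbf{2}^0}$, $1 = \frac{2^0}{\mathbf{2}^0}$ and $\mathbf{\frac{1}{2}} = \frac{1}{\mathbf{2}^1}$, the three generators already have the required form, which settles the base case.

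For the inductive step I would assume that $x = \frac{a}{\mathbf{2}^n}$ and $y = \frac{b}{\mathbf{2}^m}$ are two elements already in canonical form and verify that each of the generating operations applied to them again yields something of the form $\frac{c}{\mathbf{2}^k}$ with $c \leq 2^k$. The operation $\neg$ is immediate from Axiom~5. For $\bullet$, Axiom~4 gives directly $x \bullet y = \frac{ab}{\mathbf{2}^{n+m}}$, and since $a \leq 2^n$ and $b \leq 2^m$ we get $ab \leq 2^{n+m}$, so the exponent bound is preserved. The operation $\oplus$ is recovered from $\odot$ and $\neg$ via the defining identity $x \oplus y = \neg(\neg x \odot \neg y)$, so it suffices to handle $\odot$; by Axiom~3 (after using commutativity of $\odot$ to arrange $n \geq m$ without loss of generality) we obtain $x \odot y = \frac{\max\{0,\, a + b2^{n-m} - 2^n\}}{\mathbf{2}^n}$, whose numerator is clearly a natural number bounded above by $2^n$. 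Thus closure under all operations preserves the canonical form.

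The one point that requires a little care, and which I expect to be the main subtlety rather than a genuine obstacle, is the normalization of denominators in the $\odot$ and $\oplus$ cases: when the two exponents $n$ and $m$ differ one must rewrite $\frac{b}{\mathbf{2}^m}$ with the larger denominator $\mathbf{2}^n$ before applying Axiom~3. This amounts to checking that $\frac{b}{\mathbf{2}^m} = \frac{b2^{n-m}}{\mathbf{2}^n}$, which follows by unwinding the definition of $\frac{a}{\mathbf{2}^n}$ as $\bigoplus_a \frac{\mathbf{1}}{\mathbf{2}^n}$ together with Axiom~4, so that the two terms denote the same element. Once the denominators are matched, Axiom~3 applies verbatim and the bound $c \leq 2^k$ is read off immediately.

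Since every element of $G_A(\mathbf{\frac{1}{2}})$ is built from the generators in finitely many steps, the induction shows that it must be of the form $\frac{a}{\mathbf{2}^n}$ with $a \leq 2^n$, completing the proof. I would remark that this canonical-form lemma is precisely what is needed to transfer the order-density of $G_{[0,1]}(\frac{1}{2})$ from Lemma \ref{DENS} into the abstract $PMV_\frac{1}{2}$ setting, and hence to express the geometric inequality of Lemma \ref{LAGRANGE} by the infinite family of equations displayed just before the definition of $PMV_\frac{1}{2}$-algebras.
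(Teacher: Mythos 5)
Your proof is correct and takes essentially the same route the paper indicates, namely induction on the complexity of terms using Axioms 3, 4 and 5 of the $PMV_\frac{1}{2}$-definition (the paper leaves the details to the reader). The only nit is that the paper's notation $\frac{a}{{\bf 2}^n}$ is defined only for $n\geq 1$, so in the base case you should write $0=\frac{0}{{\bf 2}^1}$ and $1=\frac{2}{{\bf 2}^1}$ (the latter because $\frac{{\bf 1}}{{\bf 2}}\oplus\frac{{\bf 1}}{{\bf 2}}=\frac{{\bf 1}}{{\bf 2}}\oplus\neg\frac{{\bf 1}}{{\bf 2}}=1$) rather than use exponent $0$; also note that Axiom 3 already accepts unequal exponents $n\geq m$, so the denominator normalization you worry about is not actually needed.
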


\begin{theo} \label{MEDIOREP1}
Let $A$ be a $PMV_\frac{1}{2}$-algebra. Then there exists an unique
${\cal PMV}_{\frac{1}{2}}$-isomorphisms from $G_A({\bf \frac{1}{2}})$ onto
$G_{[0,1]}(\frac{1}{2})$.
\end{theo}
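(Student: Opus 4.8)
The plan is to construct the isomorphism explicitly using the canonical representation of elements provided by Lemma \ref{MEDIOREP}. By that lemma, every element of $G_A({\bf \frac{1}{2}})$ has the form $\frac{a}{{\bf 2}^n}$ for some natural numbers $a \leq 2^n$, and analogously every element of $G_{[0,1]}(\frac{1}{2})$ is a dyadic rational $\frac{a}{2^n}$ in the real interval $[0,1]$. I would therefore define a map $\Phi: G_A({\bf \frac{1}{2}}) \rightarrow G_{[0,1]}(\frac{1}{2})$ by sending the abstract element $\frac{a}{{\bf 2}^n}$ to the real number $\frac{a}{2^n}$, and prove that $\Phi$ is a well-defined ${\cal PMV}_{\frac{1}{2}}$-isomorphism. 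Since both algebras are generated by $\{0,{\bf \frac{1}{2}},1\}$ and any homomorphism is determined by its action on generators, uniqueness will follow immediately once existence is established.

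First I would verify that $\Phi$ is well defined, which is the crux of the argument. The issue is that a single element of $G_A({\bf \frac{1}{2}})$ may admit several representations $\frac{a}{{\bf 2}^n} = \frac{a'}{{\bf 2}^{n'}}$, and I must check that these map to equal dyadic rationals. The key observation is that axioms 3, 4 and 5 of $PMV_\frac{1}{2}$-algebras compute $\odot$, $\bullet$ and $\neg$ on the normalized fractions by exactly the same integer arithmetic that governs the corresponding dyadic rationals in $[0,1]_{PMV}$; in particular $\frac{a}{{\bf 2}^n} = \frac{a'}{{\bf 2}^{n'}}$ holds in $A$ precisely when the two fractions represent the same rational number, since equality of regular elements is detected through $\oplus$ and $\odot$ whose behavior on these symbols is pinned down by the axioms. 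Thus two representations coincide in $A$ if and only if they coincide as rationals, which simultaneously gives well-definedness and injectivity of $\Phi$. Surjectivity is clear because every dyadic rational in $[0,1]$ arises as some $\frac{a}{2^n}$.

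Next I would check that $\Phi$ preserves all the operations. For $\odot$, $\bullet$ and $\neg$ this is a direct translation: axioms 3, 4 and 5 express these operations on $\frac{a}{{\bf 2}^n}$ through the integer formulas $\max\{0,a+b2^{n-m}-2^n\}$, $ab$, and $2^n-a$, and the same formulas compute the \L ukasiewicz $\odot$, the real product, and $1-x$ on the corresponding dyadic rationals in $[0,1]_{PMV}$. Since $\oplus$ is definable from $\odot$ and $\neg$ via $x\oplus y = \neg(\neg x \odot \neg y)$, preservation of $\oplus$ follows, and the constants $0, {\bf \frac{1}{2}}, 1$ are sent to $0, \frac{1}{2}, 1$ by construction. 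The main obstacle is the well-definedness step: one must carefully argue that distinct symbolic fractions denoting the same rational are genuinely identified in $A$, and that no unexpected collapse occurs. I expect this to reduce to showing that the subalgebra generated by $\{0,{\bf \frac{1}{2}},1\}$ is, via the axioms, forced to realize each dyadic rational exactly once; the axiomatization was evidently designed so that the integer arithmetic of the exponents and numerators is faithfully recorded, so the verification is routine once the normal-form comparison is set up. Having established that $\Phi$ is a bijective homomorphism compatible with all operations, it is an isomorphism, and uniqueness follows from generation by $\{0,{\bf \frac{1}{2}},1\}$.
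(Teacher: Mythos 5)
Your strategy is genuinely different from the paper's. The paper never defines the isomorphism on representatives: it first proves that $G_A({\bf \frac{1}{2}})$ is \emph{simple} (every $x<1$ of the form $\frac{a}{{\bf 2}^n}$ is $\odot$-nilpotent, via the computation $\bigodot_k \frac{a}{{\bf 2}^n}=\frac{\max\{0,\,ka-(k-1)2^n\}}{{\bf 2}^n}$), then invokes the injectivity of $[0,1]_{PMV}$ in ${\cal PMV}$ (Proposition \ref{CON}-3) to produce a homomorphism $f:G_A({\bf \frac{1}{2}})\rightarrow [0,1]_{PMV}$ extending ${\bf 2}\rightarrow[0,1]_{PMV}$ along the embedding ${\bf 2}\rightarrow G_A({\bf \frac{1}{2}})$; simplicity makes $f$ injective, and chasing generators identifies its image with $G_{[0,1]}(\frac{1}{2})$. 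This sidesteps well-definedness entirely.

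The gap in your proposal is exactly the step you single out as the crux and then defer as ``routine'': the claim that $\frac{a}{{\bf 2}^n}=\frac{a'}{{\bf 2}^{n'}}$ holds in $A$ precisely when $\frac{a}{2^n}=\frac{a'}{2^{n'}}$ in ${\mathbb{Q}}$. As stated this is false --- in the trivial $PMV_{\frac{1}{2}}$-algebra all these elements coincide --- and for nontrivial $A$ it is not a formal bookkeeping consequence of Axioms 3--5: you must actually show that any collapse forces $A$ to be trivial. A correct argument runs as follows. Taking $a=2^n$ in Axiom 3 and using $\frac{2^n}{{\bf 2}^n}=\neg\frac{0}{{\bf 2}^n}=1$ gives $\frac{b}{{\bf 2}^m}=\frac{b2^{n-m}}{{\bf 2}^n}$, so one may pass to a common denominator; if then $\frac{a}{{\bf 2}^n}=\frac{a'}{{\bf 2}^n}$ with $a<a'$, the $MV$-identity $x\odot\neg x=0$ together with Axioms 3 and 5 yields $0=\frac{a'}{{\bf 2}^n}\odot\neg\frac{a}{{\bf 2}^n}=\frac{a'-a}{{\bf 2}^n}$, hence $1=\frac{2^n-(a'-a)}{{\bf 2}^n}$ with numerator $c<2^n$, and the nilpotency computation $\bigodot_k\frac{c}{{\bf 2}^n}=\frac{\max\{0,\,kc-(k-1)2^n\}}{{\bf 2}^n}=0$ for large $k$ contradicts $\bigodot_k 1=1$. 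Notice that this is precisely the computation the paper uses to establish simplicity of $G_A({\bf \frac{1}{2}})$: the ``routine verification'' you postpone is the mathematical heart of the theorem, and without it your map $\Phi$ is neither shown to be well defined nor injective. (The operation-preservation and uniqueness parts of your proposal are fine, and your uniqueness argument via the constants $0,{\bf \frac{1}{2}},1$ being in the signature is if anything cleaner than the paper's appeal to the fixed point of negation.)
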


\begin{proof}
We first prove that $G_A({\bf \frac{1}{2}})$ is a simple algebra. By
Proposition \ref{SIMPMV} and Proposition \ref{CON}-2 we have to see
that for each $x\not=1$ in $G_A({\bf \frac{1}{2}})$, $x$ is
$\odot$-nilpotent. Suppose that $x = \frac{a}{{\bf 2}^n}$. By
induction, we can see that  $\bigodot_k \frac{a}{{\bf 2}^n} =
\frac{\max \{0, \hspace{0.1cm} ka -(k-1)2^n \}}{{\bf 2}^n}$. Thus
for $k \geq \frac{2^n}{2^n - a}$ result $\bigodot_k \frac{a}{{\bf
2}^n} = 0$ and then $x$ is $\odot$-nilpotent. Hence $G_A({\bf
\frac{1}{2}})$ is simple. By Proposition \ref{CON}-3 if we consider
the following ${\cal PMV}$-homomorphism ${\bf 2} \rightarrow [0,1]_{PMV}$
and ${\bf 2} \rightarrow G_A({\bf \frac{1}{2}})$ there exists a
${\cal PMV}$-homomorphism $f:G_A({\bf \frac{1}{2}}) \rightarrow
[0,1]_{PMV}$ such that the following diagram is commutative:

\begin{center}
\unitlength=1mm
\begin{picture}(20,20)(0,0)
\put(8,16){\vector(3,0){5}} \put(2,10){\vector(0,-2){5}}
\put(10,4){\vector(1,1){7}}

\put(2,10){\makebox(13,0){$\equiv$}}

\put(2,16){\makebox(0,0){${\bf 2}$}}
\put(22,16){\makebox(0,0){$[0,1]_{PMV}$}}
 \put(2,0){\makebox(0,0){$G_A({{\bf \frac{1}{2}}})$}}
\put(18,2){\makebox(-4,3){$f$}}
\end{picture}
\end{center}

Since ${\bf \frac{1}{2}}$ is the unique fix point of the negation in
$G_A({\bf \frac{1}{2}})$ it is clear that $f({\bf \frac{1}{2}}) =
\frac{1}{2}$. Thus $f$ is a ${\cal PMV}_\frac{1}{2}$-homomorphism. Since
$G_A({\bf \frac{1}{2}})$ is simple then $f$ is injective. It is
clear that $Imag(f) \subseteq G_{[0,1]}(\frac{1}{2})$. We prove that
$Imag(f) = G_{[0,1]}(\frac{1}{2})$. Let $x\in
G_{[0,1]}(\frac{1}{2})$. Then there exists a $PMV_\frac{1}{2}$-term
$t$ such that $x = t^{G_{[0,1]}(\frac{1}{2})}[0, \frac{1}{2}, 1]$.
Since $f$ is a ${\cal PMV}_\frac{1}{2}$-homomorphism then we have
$f(t^{G_A({\bf \frac{1}{2}})}[0, {\bf \frac{1}{2}}, 1]) =
t^{G_{[0,1]}(\frac{1}{2})}[f(0), f({\bf \frac{1}{2}}), f(1)]
=t^{G_{[0,1]}(\frac{1}{2})}[0, \frac{1}{2}, 1] = x$ and $Imag(f) =
G_{[0,1]}(\frac{1}{2})$. Hence $f$ is a
${\cal PMV}_\frac{1}{2}$-isomorphism form  $G_A(\frac{1}{2})$ onto
$G_{[0,1]}(\frac{1}{2})$. If $f'$ is other
${\cal PMV}_\frac{1}{2}$-isomorphisms then $f$ and $f'$ coincides over
$\{0,\frac{1}{2}, 1\}$. Therefor, by induction on the complexity of
terms it follows that $f = f'$.

\qed
\end{proof}

\begin{rema}\label{CONVENC}
{\rm From the last theorem whenever $A$ is a
$PMV_\frac{1}{2}$-algebra, we will have the following identification
$G_A(\frac{1}{2}) = G_{[0,1]}(\frac{1}{2})$. Thus each $s \in
G_A(\frac{1}{2})$ is seen as a unique element of
$G_{[0,1]}(\frac{1}{2})$. }
\end{rema}

\subsection{Irreversible Poincar\'{e} algebras}
In view of Theorem \ref{MEDIOREP1} now we can introduce a
substructure of $\sqrt{q{\cal PMV}}$  that allows to capture the
relation between  $x$ and $\sqrt{x}$ given in Lemma
\ref{PROBBLOC2}-10.

\begin{definition}
{\rm An {\it irreversible Poincar\'{e} algebra} is a
$\sqrt{qPMV}$-algebra satisfying the following axioms:

\begin{enumerate}
\item[P1]
$R(A)$ is a $PMV_\frac{1}{2}$-algebra,

\item[P2]
$1 = ( (\frac{1}{4}\bullet x)  \oplus  (\frac{1}{4}\bullet \sqrt x)
) \rightarrow s $ \hspace{0.3cm} where $s\in G_{R(A)}(\frac{1}{2})$
and $s \geq \frac{1+\sqrt2}{4\sqrt2}$.

\end{enumerate}

}
\end{definition}

It is clear that the Poincar\'{e} structure conforms a variety since
Axiom P1 is satisfied by adding Ax 3, Ax 4 and Ax 5 of ${\cal
PMV}_\frac{1}{2}$ to $\sqrt{q{\cal PMV}}$. We denote by ${\cal IP}$
the subvariety of $\sqrt{q{\cal PMV}}$ conformed by the irreversible
Poincar\'{e} algebras.

\begin{rema}
{\rm In view of Lemma \ref{LAGRANGE} it is not hard to see  that
$D_{[0,1]}$ is a ${\cal IP}$-algebra, being the
``standard  model" of ${\cal IP}$. Unfortunately we
cannot give a completeness theorem for the ${\cal IP}$-equations
of the form $t=1$ with respect to  $D_{[0,1]}$. In fact, the open
problem of axiomatization of all identities in the language of
${\cal PMV}$ which are valid in the $PMV$-algebra arising from the
real interval $[0,1]$ (see {\rm \cite{MONT, IS}}) will appear in
${\cal IP}$. In view of this, we delineate a
generalization of the $D_{[0,1]}$ algebra, whose role is analogous
to the  $PMV$-chains with  respect to the equational theory of ${\cal PMV}$. }
\end{rema}

Let $S_A$ be a ${\cal S}^\Box$-algebra from the
$PMV_{\frac{1}{2}}$-chain $A$. In view of Remark \ref{CONVENC}, for
$(a,b) \in S_A$ the expression $(a,b) \leq \frac{1+\sqrt2}{4\sqrt2}$
should be understood as $(a,b) \leq (s, \frac{1}{2})$ for all $s \in
G_A({\bf \frac{1}{2}})$ such that $s \geq \frac{1+\sqrt2}{4\sqrt2}$
(or equivalent $a \leq s$ in $A$ for all $s \in G_A({\bf
\frac{1}{2}})$). We consider the following partition in $S_A$:

\begin{enumerate}
\item[]
Quadrant I:   $ A^\llcorner =  \{(x,y) \in A^2: x \geq \frac{1}{2} ,
y \geq \frac{1}{2} \} $ $$ Q1= \{(x,y) \in A^\llcorner:
(\frac{1}{4}\bullet x) \oplus  (\frac{1}{4}\bullet y) ) \leq
\frac{1+\sqrt2}{4\sqrt2} \} $$

\item[]
Quadrant 2: $A^\lrcorner =  \{(x,y) \in A^2: x \leq \frac{1}{2}, y
\geq \frac{1}{2} \} $ $$ Q2= \{(x,y) \in A^\lrcorner:
(\frac{1}{4}\bullet \neg x)  \oplus  (\frac{1}{4}\bullet y) ) \leq
\frac{1+\sqrt2}{4\sqrt2} \} $$

\item[]
Quadrant 3:  $ A^\urcorner = \{(x,y) \in A^2: x \leq \frac{1}{2}, y
\leq \frac{1}{2} \} $ $$ Q3= \{(x,y) \in A^\urcorner:
(\frac{1}{4}\bullet \neg x)  \oplus  (\frac{1}{4}\bullet \neg y) )
\leq  \frac{1+\sqrt2}{4\sqrt2} \} $$

\item[]
Quadrant 4: $A^\ulcorner = \{(x,y) \in A^2: x \geq \frac{1}{2}, y
\leq \frac{1}{2} \} $  $$ Q4= \{(x,y) \in A^\ulcorner:
(\frac{1}{4}\bullet x)  \oplus  (\frac{1}{4}\bullet \neg y) ) \leq
\frac{1+\sqrt2}{4\sqrt2} \} $$

\end{enumerate}
\noindent Then we define $$ D_A = Q1 \cup Q2 \cup Q3 \cup Q4 $$
Since $A$ is a chain it is clear that $A^2 = A^\llcorner \cup
A^\lrcorner \cup A^\urcorner \cup A^\ulcorner$.

\begin{prop} \label{ALGCIRC}
Let $S_A$ be a ${\cal S}^\Box$-algebra from the $PMV_{\frac{1}{2}}$-chain $A$. Then $D_A$ is a sub-universe of $S_A$
and $$\langle D_A, \oplus, \bullet, \sqrt, \hspace{0.2cm}
(0,\frac{1}{2}), (\frac{1}{2},\frac{1}{2}), (1,\frac{1}{2})  \rangle
$$  is the largest irreversible Poincar\'{e} algebra contained in
$S_A$. Moreover $Reg(D_A)$ is ${\cal PMV}$-isomorphic to $A$.

\end{prop}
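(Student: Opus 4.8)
The plan is to verify the statement in three stages: first that $D_A$ is closed under all the operations of $S_A$ (so it is a sub-universe), then that the resulting $\sqrt{qPMV}$-algebra satisfies axioms P1 and P2 (so it is an ${\cal IP}$-algebra), and finally that it is \emph{maximal} among such subalgebras, together with the identification of its regular elements.

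First I would establish closure. The constants $(0,\frac{1}{2})$, $(\frac{1}{2},\frac{1}{2})$, $(1,\frac{1}{2})$ clearly lie in $D_A$ (they sit in $Q1,\ldots,Q4$ with the relevant inequality reducing, via Lemma \ref{LAGRANGE}-2, to $\frac{x}{4}\oplus\frac{1}{8}\leq\frac{3}{8}\leq\frac{1+\sqrt2}{4\sqrt2}$). For $\oplus$ and $\bullet$, note that in $S_A$ the second coordinate of any $(a,b)\oplus(c,d)$ or $(a,b)\bullet(c,d)$ is always $\frac{1}{2}$, i.e.\ the output is regular; such points $(x,\frac{1}{2})$ satisfy the defining inequality of the appropriate quadrant precisely by Lemma \ref{LAGRANGE}-2, so they always land in $D_A$. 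For $\neg(a,b)=(\neg a,\neg b)$ and $\sqrt{(a,b)}=(b,\neg a)$, the key observation is that the four quadrant conditions $Q1,\ldots,Q4$ are set up symmetrically precisely so that negation and square root permute them: $\neg$ sends $Qi$ to the quadrant obtained by reflecting both coordinates through $\frac{1}{2}$ (and the inequality, expressed in terms of $\frac{1}{4}\bullet x$ versus $\frac{1}{4}\bullet\neg x$, is preserved because $\neg\neg=\mathrm{id}$), and $\sqrt{}$ cyclically rotates the quadrants in the same controlled way. I would make this explicit quadrant-by-quadrant, using $\neg\frac{1}{2}=\frac{1}{2}$ and the $PMV_\frac{1}{2}$-arithmetic of $A$.

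Next I would check P1 and P2. By Proposition \ref{ST} and Proposition \ref{PROJECTION} the regular elements of $S_A$ are $\{(x,\frac{1}{2}):x\in A\}$, isomorphic to $A$; since $D_A$ contains all such regular points (again by Lemma \ref{LAGRANGE}-2), we get $Reg(D_A)=Reg(S_A)\cong A$, which is a $PMV_\frac{1}{2}$-chain, giving P1 and the final sentence at once. For P2, the content is exactly that for every $(a,b)\in D_A$ and every $s\in G_{R(A)}(\frac{1}{2})$ with $s\geq\frac{1+\sqrt2}{4\sqrt2}$ one has $(\frac{1}{4}\bullet(a,b))\oplus(\frac{1}{4}\bullet\sqrt{(a,b)})\leq s$. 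Computing the left-hand side in $S_A$ and projecting to the first coordinate via Proposition \ref{ST}, this reduces to a first-coordinate inequality in the chain $A$; by the definition of $D_A$ as the union of the four quadrants, membership in $D_A$ \emph{is} precisely the assertion that this inequality holds, so P2 follows essentially by construction.

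The main obstacle, and the step deserving the most care, is \emph{maximality}: showing that any sub $\sqrt{qPMV}$-algebra $B\subseteq S_A$ satisfying P1 and P2 is contained in $D_A$. Here I would argue by contradiction: if $(a,b)\in B\setminus D_A$, then $(a,b)$ fails the quadrant inequality defining $D_A$, so in the chain $A$ we have a first-coordinate quantity strictly exceeding some admissible $s\in G_A(\frac{1}{2})$ with $s\geq\frac{1+\sqrt2}{4\sqrt2}$. Instantiating P2 at the element $(a,b)\in B$ with this very $s$ would then force $1=(\,(\frac{1}{4}\bullet(a,b))\oplus(\frac{1}{4}\bullet\sqrt{(a,b)})\,)\rightarrow s$, i.e.\ the quantity is $\leq s$, contradicting the strict failure. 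The delicate point is matching the four geometric cases of ``$(a,b)\notin D_A$'' against the four quadrant formulas and confirming that the $\sqrt{}$-term appearing in P2 produces exactly the negated coordinate used in the corresponding quadrant condition; I expect this bookkeeping, rather than any deep argument, to be where the proof must be written out carefully.
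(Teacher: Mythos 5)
Your overall strategy coincides with the paper's: closure of $D_A$ under $\oplus,\bullet$ because those operations output regular elements $(x,\frac{1}{2})$, which lie in $D_A$ by Lemma \ref{LAGRANGE}-2; closure under $\neg$ and $\sqrt{\phantom{x}}$ by checking that they permute the quadrants $Q1,\dots,Q4$; P1 and the final claim via $Reg(D_A)\cong A$; and maximality by instantiating axiom P2 on elements of a competing subalgebra $B$. Two of your shortcuts, however, are stated in a form that does not literally go through, and both need the same small repair.

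First, P2 does \emph{not} follow ``by construction'' from membership in $D_A$: membership in $Q2$, say, asserts $(\frac{1}{4}\bullet\neg x)\oplus(\frac{1}{4}\bullet y)\leq\frac{1+\sqrt2}{4\sqrt2}$, whereas P2 requires $(\frac{1}{4}\bullet x)\oplus(\frac{1}{4}\bullet y)\leq\frac{1+\sqrt2}{4\sqrt2}$. These coincide only in $Q1$; in the other quadrants you must additionally use that $x\leq\frac{1}{2}$ implies $x\leq\neg x$, hence $\frac{1}{4}\bullet x\leq\frac{1}{4}\bullet\neg x$, to pass from the quadrant inequality to the P2 inequality. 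Second, and for the same reason, ``instantiating P2 at the element $(a,b)$'' does not produce the desired contradiction when $(a,b)$ lies outside the first quadrant: P2 at $(a,b)$ with $a\leq\frac{1}{2}$ yields $(\frac{1}{4}\bullet a)\oplus(\frac{1}{4}\bullet b)\leq s$, which is \emph{weaker} than, and hence does not contradict the failure of, the $Q2$ condition involving $\neg a$. The paper's fix — which you gesture at but do not pin down — is to instantiate P2 not at $(x,y)$ but at $\sqrt{(x,y)}=(y,\neg x)$ for the second quadrant, at $\neg(x,y)$ for the third, and at $\neg\sqrt{(x,y)}=(\neg y,x)$ for the fourth; each of these lies in $B$ by closure, and P2 applied to it reproduces exactly the corresponding quadrant inequality. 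With these two adjustments your argument matches the paper's proof.
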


\begin{proof}
We first note that $(x, \frac{1}{2}) \in D_A$ for each $x\in A$
since $(\frac{1}{4}\bullet x) \oplus (\frac{1}{4}\bullet
\frac{1}{2}) \leq \frac{1}{4} \oplus\frac{1}{8} = \frac{3}{8} \leq
\frac{1+\sqrt2}{4\sqrt2}$. Thus if $(x_1,y_1)$ and  $(x_2,y_2)$ are
in $D_A$ and $\star \in \{\oplus, \bullet \}$ then  $(x_1,y_1)\star
(x_2,y_2) = (x_1 \star x_2, \frac{1}{2}) \in D_A$. Hence $D_A$ is
closed by $\oplus$ and $\bullet$.
$D_A$ is closed by $\neg$. In fact:\\

If $(x,y) \in Q1$ then $\neg (x,y) = (\neg x, \neg y) \in Q3$ since

$(\frac{1}{4}\bullet \neg (\neg x))  \oplus  (\frac{1}{4}\bullet
\neg (\neg y)) = (\frac{1}{4}\bullet x)  \oplus  (\frac{1}{4}\bullet
y) \leq  \frac{1+\sqrt2}{4\sqrt2}$.\\

If $(x,y) \in Q2$ then $\neg (x,y) = (\neg x, \neg y) \in Q4$ since

$(\frac{1}{4}\bullet \neg x)  \oplus  (\frac{1}{4}\bullet \neg (\neg
y)) = (\frac{1}{4}\bullet \neg x)  \oplus  (\frac{1}{4}\bullet y)
\leq  \frac{1+\sqrt2}{4\sqrt2}$. \\

If $(x,y) \in Q3$ then $\neg (x,y) = (\neg x, \neg y) \in Q1$ since

$(\frac{1}{4}\bullet \neg x)  \oplus  (\frac{1}{4}\bullet \neg  y)
\leq  \frac{1+\sqrt2}{4\sqrt2}$.\\

If $(x,y) \in Q4$ then $\neg (x,y) = (\neg x, \neg y) \in Q2$ since

$(\frac{1}{4}\bullet \neg (\neg x))  \oplus  (\frac{1}{4}\bullet
\neg y) = (\frac{1}{4}\bullet x)  \oplus  (\frac{1}{4}\bullet \neg
y) \leq  \frac{1+\sqrt2}{4\sqrt2}$.\\

\noindent $D_A$ is closed by $\sqrt,$. In fact:

If $(x,y) \in Q1$ then $\sqrt{(x,y)} = (y, \neg x) \in Q4$ since

$(\frac{1}{4}\bullet y))  \oplus  (\frac{1}{4}\bullet \neg(\neg x))
= (\frac{1}{4}\bullet x)  \oplus  (\frac{1}{4}\bullet  y) \leq
\frac{1+\sqrt2}{4\sqrt2}$. \\

If $(x,y) \in Q2$ then $\sqrt{(x,y)} = (y, \neg x) \in Q1$ since

$(\frac{1}{4}\bullet y))  \oplus  (\frac{1}{4}\bullet \neg x) =
(\frac{1}{4}\bullet \neg x)  \oplus  (\frac{1}{4}\bullet  y) \leq
\frac{1+\sqrt2}{4\sqrt2}$. \\

If $(x,y) \in Q3$ then $\sqrt{(x,y)} = (y, \neg x) \in Q2$ since

$(\frac{1}{4}\bullet \neg y)  \oplus  (\frac{1}{4}\bullet (\neg x))
= (\frac{1}{4}\bullet \neg x)  \oplus  (\frac{1}{4}\bullet  \neg y)
\leq  \frac{1+\sqrt2}{4\sqrt2}$. \\

If $(x,y) \in Q4$ then $\sqrt{(x,y)} = (y, \neg x) \in Q3$ since

$(\frac{1}{4}\bullet \neg y)  \oplus  (\frac{1}{4}\bullet \neg (\neg
x)) = (\frac{1}{4}\bullet x)  \oplus  (\frac{1}{4}\bullet  \neg y)
\leq  \frac{1+\sqrt2}{4\sqrt2}$.\\

\noindent Thus $D_A$ is a subalgebra of $S_A$. We will see that
$D_A$ is an irreversible Poincar\'{e} algebra. If $(x,y) \in D_A$
then $(\frac{1}{4}\bullet (x,y)) \oplus (\frac{1}{4}\bullet
\sqrt{(x,y)}) = ((\frac{1}{4}, \frac{1}{2}) \bullet (x,y)) \oplus
((\frac{1}{4}, \frac{1}{2}) \bullet (y, \neg x)) =
((\frac{1}{4}\bullet x) \oplus (\frac{1}{4}\bullet y),
\frac{1}{2})$. Therefore we need to prove that $(\frac{1}{4}\bullet
x) \oplus (\frac{1}{4}\bullet y) \leq \frac{1+\sqrt2}{4\sqrt2}$ in
the $PMV$-algebra $A$. In fact: if $(x,y) \in Q1$ then the inequality
is valid. If $(x,y) \in Q2$ then $x \leq \frac{1}{2} \leq \neg x$.
Therefore $(\frac{1}{4}\bullet x) \oplus(\frac{1}{4}\bullet y) \leq
(\frac{1}{4}\bullet \neg x) \oplus (\frac{1}{4}\bullet y) \leq
\frac{1+\sqrt2}{4\sqrt2}$. If $(x,y) \in Q3$ then $x \leq
\frac{1}{2} \leq \neg x$ and $y \leq \frac{1}{2} \leq \neg y$.
Therefore $(\frac{1}{4}\bullet x) \oplus(\frac{1}{4}\bullet y) \leq
(\frac{1}{4}\bullet \neg x) \oplus (\frac{1}{4}\bullet \neg y) \leq
\frac{1+\sqrt2}{4\sqrt2}$. If $(x,y) \in Q4$ then $y \leq
\frac{1}{2} \leq \neg y$. Therefore $(\frac{1}{4}\bullet x)
\oplus(\frac{1}{4}\bullet y) \leq (\frac{1}{4}\bullet x) \oplus
(\frac{1}{4}\bullet \neg y) \leq    \frac{1+\sqrt2}{4\sqrt2}$.

Now we prove that $D_A$ is the largest irreversible Poincar\'{e}
algebra contained in $S_A$. Let $B$ be an irreversible Poincar\'{e}
algebra algebra contained in $S_A$ and let $(x,y)\in B$.

\begin{enumerate}
\item
If $x\geq \frac{1}{2}$ and $y\geq \frac{1}{2}$ then $(x,y)\in Q1$.

\item
Suppose that $x\leq \frac{1}{2}$ and $y\geq \frac{1}{2}$. Since
$\sqrt{(x,y)} = (y, \neg x) \in B$ then $\frac{1+\sqrt2}{4\sqrt2}
\geq (\frac{1}{4}\bullet \sqrt{(x,y)}) \oplus (\frac{1}{4}\bullet
\sqrt{ \sqrt{(x,y)})}) = ((\frac{1}{4}\bullet y) \oplus
(\frac{1}{4}\bullet \neg x) , \frac{1}{2})$. In this case
$(\frac{1}{4}\bullet y) \oplus (\frac{1}{4}\bullet \neg x) \leq
\frac{1+\sqrt2}{4\sqrt2}$ and $(x,y) \in Q2$.

\item
Suppose that $x\leq \frac{1}{2}$ and $y\leq \frac{1}{2}$. Since
$\neg (x,y) = (\neg x, \neg y) \in B$ then $\frac{1+\sqrt2}{4\sqrt2}
\geq (\frac{1}{4} \bullet \neg (x,y)) \oplus (\frac{1}{4} \bullet
\sqrt{ \neg (x,y))} = ( (\frac{1}{4}\bullet \neg x) \oplus
(\frac{1}{4}\bullet \neg y), \frac{1}{2})$. In this case
$(\frac{1}{4}\bullet \neg x) \oplus (\frac{1}{4}\bullet \neg y) \leq
\frac{1+\sqrt2}{4\sqrt2}$. Thus $(x,y) \in Q3$

\item
Suppose that $x\geq \frac{1}{2}$ and $y\leq \frac{1}{2}$. Since $
\neg \sqrt{(x,y)} = ( \neg y,  x) \in B$ then
$\frac{1+\sqrt2}{4\sqrt2} \geq  (\frac{1}{4}\bullet \neg
\sqrt{(x,y)}) \oplus (\frac{1}{4}\bullet \sqrt{ \neg \sqrt{(x,y)})})
= ((\frac{1}{4}\bullet \neg y) \oplus (\frac{1}{4}\bullet x) ,
\frac{1}{2})$. In this case $(\frac{1}{4}\bullet \neg y) \oplus
(\frac{1}{4}\bullet x) \leq \frac{1+\sqrt2}{4\sqrt2}$. Thus $(x,y)
\in Q4$.

\end{enumerate}

\noindent Thus $(x,y) \in D_A$ and $B$ is a Poincar\'{e} sub algebra
of $D_A$. With the same argument used in Proposition \ref{PROJECTION} we can prove that $Reg(D_A)$ is ${\cal PMV}$-isomorphic to $A$.

\qed
\end{proof}

We denote by ${\cal S}^\circ$ the class of algebras $D_A$ given in
Proposition \ref{ALGCIRC} where $A$ is a $PMV_{\frac{1}{2}}$-chain.
For the sake of simplicity in the notation, in the next theorem we
will use the following convention: ${\cal S}$ may be either ${\cal
S}^{\Box}$ or ${\cal S}^{\circ}$ and we define the class of algebras
${\cal A}_{{\cal S}}$ as follows:
$$
{\cal A}_{{\cal S}} = \cases {\sqrt{q{\cal PMV}}, & if ${\cal S} = {\cal S}^{\Box}$  \cr  {\cal IP}, & if  ${\cal S} = {\cal S}^{\circ} $\cr }
$$

\begin{prop}\label{REDUCE4}
For each $\sqrt{qPMV}$-term $t$ we have that $$ {\cal A}_{{\cal S}} \models t = 1 \hspace{0.4cm} iff \hspace{0.3cm} {\cal S} \models t = 1  $$

\end{prop}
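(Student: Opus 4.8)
The plan is to prove both directions uniformly for the two cases (${\cal S}={\cal S}^\Box$ with ${\cal A}_{\cal S}=\sqrt{q{\cal PMV}}$, and ${\cal S}={\cal S}^\circ$ with ${\cal A}_{\cal S}={\cal IP}$). The implication from left to right is immediate: every ${\cal S}^\Box$-algebra is a $\sqrt{qPMV}$-algebra by Proposition \ref{ST} and every ${\cal S}^\circ$-algebra is an ${\cal IP}$-algebra by Proposition \ref{ALGCIRC}, so ${\cal S}\subseteq{\cal A}_{\cal S}$ and ${\cal A}_{\cal S}\models t=1$ trivially gives ${\cal S}\models t=1$.

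For the converse, assume ${\cal S}\models t=1$. I would first note that every $D_A\in{\cal S}^\circ$ is a subalgebra of the ${\cal S}^\Box$-algebra $S_A$ (a $PMV_\frac{1}{2}$-chain being in particular a $PMV$-chain), so in either case there is an algebra satisfying $t=1$ that sits inside an ${\cal S}^\Box$-algebra. Proposition \ref{AUX1}(2) then furnishes a term $t'$ with $\sqrt{q{\cal PMV}}\models t=t'\oplus 0$. Hence, for every $B\in{\cal A}_{\cal S}$ and every valuation $v$, the value $v(t)=v(t')\oplus 0$ is regular, so $v(t)\oplus 0=v(t)$, and by SQ4 also $\sqrt{v(t)}\oplus 0=\frac{1}{2}$. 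This reduction to regular values is the essential use of Proposition \ref{AUX1}.

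The core of the argument is a projection into the pair-algebras of Proposition \ref{ST}. By Proposition \ref{REG1}, $R(B)$ is a $PMV$-algebra (a $PMV_\frac{1}{2}$-algebra in the ${\cal IP}$ case, by P1) and $x\mapsto x\oplus 0$ is a $PMV$-homomorphism $B\to R(B)$. By Proposition \ref{CON}(1), together with the observation that the defining identities of ${\cal PMV}_\frac{1}{2}$ are equational and so persist in homomorphic images, I embed $R(B)$ subdirectly into a product $\prod_i A_i$ of $PMV$-chains (respectively $PMV_\frac{1}{2}$-chains), with projections $h_i$. Setting $g_i(x)=h_i(x\oplus 0)$, which is a $PMV$-homomorphism $B\to A_i$, I define $\phi_i:B\to S_{A_i}$ by $\phi_i(x)=(g_i(x),\,g_i(\sqrt x))$. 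The plan is to verify that $\phi_i$ is a $\sqrt{qPMV}$-homomorphism: the clauses for $\oplus$ and $\bullet$ hold because SQ4 and axiom $7$ force the second coordinate to collapse to $\frac{1}{2}$, matching the definitions in Proposition \ref{ST}; the clause for $\sqrt{}$ follows from SQ3 and Q5; and the clause for $\neg$ from SQ2 and Q5. In the ${\cal IP}$ case I would then observe that $\phi_i(B)$, being a homomorphic image of an ${\cal IP}$-algebra, again lies in ${\cal IP}$, so by the maximality asserted in Proposition \ref{ALGCIRC} it is contained in $D_{A_i}$; thus $\phi_i$ corestricts to a homomorphism $B\to D_{A_i}\in{\cal S}^\circ$.

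Finally, $\phi_i\circ v$ is a valuation into an ${\cal S}$-algebra, so the hypothesis ${\cal S}\models t=1$ gives $\phi_i(v(t))=1=(1,\frac{1}{2})$; reading off the first coordinate yields $g_i(v(t))=h_i(v(t))=1$ for every $i$, where regularity of $v(t)$ is what identifies $g_i(v(t))$ with the value of $v(t)$ under the subdirect embedding. Since the $h_i$ jointly separate points of $R(B)$, this forces $v(t)=1$ in $R(B)$, hence in $B$, so ${\cal A}_{\cal S}\models t=1$. I expect the main obstacle to be the ${\cal IP}$ case: checking that $R(B)$ genuinely decomposes into $PMV_\frac{1}{2}$-chains, and that $\phi_i(B)$ really lands inside $D_{A_i}$ — the latter being precisely where axiom P2 of the irreversible Poincaré algebras is consumed, through the maximality statement of Proposition \ref{ALGCIRC}.
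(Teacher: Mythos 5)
Your proof is correct and follows essentially the same route as the paper's: both reduce to regular values of $t$ via Proposition \ref{AUX1}, decompose $R(B)$ subdirectly into chains by Proposition \ref{CON}, map $B$ into the pair algebras through $x \mapsto (x\oplus 0, \sqrt{x}\oplus 0)$ composed with the projections, and invoke the maximality of $D_{A_i}$ from Proposition \ref{ALGCIRC} in the ${\cal IP}$ case. The paper merely splits your $\phi_i$ into the two homomorphisms $f$ and $\beta_i$ and phrases the conclusion contrapositively, while you make explicit (usefully) that the subdirect factors remain $PMV_{\frac{1}{2}}$-chains.
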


\begin{proof}
Let $A$ be a $\sqrt{qPMV}$-algebra. We consider the $PMV$-algebra
$Reg(A)$ of all regular elements. By Proposition \ref{CON} we can
consider a subdirect representation $\beta:Reg(A)\hookrightarrow
 \Pi_{i\in I} A_i $ such that $(A_i)_{i\in I}$ is a family
of $PMV$-chains. If $x \in Reg(A)$ we write $\beta(x) = (x_i)_{i\in
I}$. Let $p_j$ the $j$-th projection $p_j: \Pi_{i\in I} A_i
\rightarrow A_j$. By Proposition \ref{ST}, we consider the
${\cal S}^{\Box}$-algebra $S_{Reg(A)}$ and for each $PMV$-chain $A_i$ we
consider the ${\cal S}^{\Box}$-algebra $S_{A_i}$ and $D_{A_i}$, the ${\cal S}^\circ$ sub
algebra of $S_{A_i}$. Define the function $$f:A
\rightarrow S_{Reg(A)} \hspace{0.5cm} s.t. \hspace{0.2cm}  x
\longmapsto (x\oplus 0, \sqrt{x} \oplus 0)$$ We need to prove that
$f$ is a $\sqrt{q{\cal PMV}}$-homomorphism.

\begin{itemize}

\item
Let $a \in \{0, \frac{1}{2}, 1 \}$. In this case $a\in Reg(A)$ and
$\sqrt{a} \oplus 0 = \frac{1}{2}$. Therefore $f(a) = (a \oplus 0,
\sqrt{a} \oplus 0) = (a, \frac{1}{2})$.

\item
Let $\star \in \{\oplus, \bullet \}$. $f(x \star y) = ((x \star y)
\oplus 0, \sqrt{x \star y} \oplus 0) = ((x \star y) \oplus 0,
\frac{1}{2}) = (x\oplus 0, \sqrt{x} \oplus 0) \star (y \oplus 0,
\sqrt{y} \oplus 0) = f(x) \star f(y)$.

\item
$f(\sqrt{x}) = (\sqrt{x} \oplus 0,  \sqrt{\sqrt{x}} \oplus 0) =
(\sqrt{x} \oplus 0, \neg (x \oplus 0) ) = \sqrt{(x \oplus 0,
\sqrt{x} \oplus 0)} = \sqrt{f(x)}$. Consequently $f(\neg x) = \neg
f(x)$ in view of Axiom SQ3.

\end{itemize}

\noindent Thus $f$ is a $\sqrt{q{\cal PMV}}$-homomorphism. For each $i\in
I$ we consider the function  $$\beta_i :S_{Reg(A)} \rightarrow
S_{A_i} \hspace{0.5cm} s.t. \hspace{0.2cm}  (x,y) \longmapsto (p_i
\beta(x), p_i \beta(y))_{i \in I}$$

We will prove that $\beta_i$ is a $\sqrt{q{\cal PMV}}$-homomorphism for
each $i \in I$. Let $(x,y), (x_1,y_1), (x_2,y_2) \in  S_{Reg(A)}$

\begin{itemize}

\item
The cases $(0,\frac{1}{2}), (\frac{1}{2}, \frac{1}{2}), (1, \frac{1}{2})$ are immediate.

\item
Let $\star \in \{\oplus, \bullet \}$. $\beta_i( (x_1,y_1) \star (x_2,y_2)) =
 \beta_i (x_1 \star x_2, \frac{1}{2}) = ({x_1}_i \star {x_2}_i, \frac{1}{2}_i) =
 ({x_1}_i ,{y_1}_i) \star ({x_2}_i,{y_2}_i) = \beta_i((x_1,y_1)) \star \beta_i((x_2,y_2)) $.

\item
$\beta_i(\sqrt{(x,y)}) = \beta_i((y, \neg x)) = (y_i, \neg x_i) =
\sqrt{(x_i,y_i)} = \sqrt{\beta_i(x,y)}$. Consequently $f(\neg x) =
\neg f(x)$ in view of Axiom SQ3.

\end{itemize}

Thus $\beta_i$ is a $\sqrt{q{\cal PMV}}$-homomorphism for each $i \in I$. Now we prove the theorem.

$\Longrightarrow$) Immediate

$\Longleftarrow$) Assume that ${\cal S} \models t = 1$. By
Proposition \ref{AUX1} we can identify $t$ with $t\oplus 0$. Suppose
that there exists $A \in {\cal A}_{{\cal S}}$ and $\bar a \in A^n$ such that $t^A[\bar a] \oplus 0 \not = 1$. Then $f(t^A[\bar a] \oplus 0) =
((t^A[\bar a] \oplus 0) \oplus 0, \sqrt{t^A[\bar a] \oplus 0} \oplus 0) =
(t^A[\bar a] \oplus 0, \frac{1}{2}) \not = (1, \frac{1}{2}) $ in
$S_{Reg(A)}$. It is clear that $t^A[\bar a] \oplus 0 \in Reg(A)$. By the
subdirect representation of the $PMV$-algebra $Reg(A)$, there exists
$A_i$ such that $(t^A[\bar a] \oplus 0)_i \not = 1_i$ in $A_i$. Therefore
$((t^A[\bar a] \oplus 0)_i, \frac{1}{2}_i) \not = (1_i, \frac{1}{2}_i)$
in $S_{A_i}$. Since $\beta_i f$ is a $\sqrt{q{\cal PMV}}$-homomorphism, we
have that $t^{S_{A_i}}[\beta_i f (\bar a)] = ((t^A[\bar a] \oplus 0)_i,
\frac{1}{2}_i) \not = (1_i, \frac{1}{2}_i)$ and this is a
contradiction in the case ${\cal S} = {\cal S}^{\Box}$. If ${\cal S}
= {\cal S}^{\circ} $ then $A \in {\cal IP}$. By
Proposition \ref{ALGCIRC}, $\beta_i f(A)$ is a sub algebra of
$D_{A_i}$. Therefore $t^{D_{A_i}}[\beta_i f (\bar a)]=
t^{S_{A_i}}[\beta_i f (\bar a)] \not = (1_i, \frac{1}{2}_i)$ and this is
also a contradiction. Hence ${\cal A}_{{\cal S}} \models t = 1$.

\qed
\end{proof}

\section{Hilbert-style calculus for ${\cal IP}$}
In this section we build a Hilbert-style calculus founded on the
irreversible Poincar\'{e} structure taking into account $PMV$-models
whose logical consequence is based on the preservation of the probability value equal to $1$.

\subsection{Syntaxis and semantic}
Consider the absolutely free algebra $Term_{\cal IP}$ built from the
set of variables $V = \{x_1, x_2 ...\}$ as underling language for
the calculus. In addition we introduce by definition the connective
$\Longleftrightarrow$ as follows: $$\alpha \Longleftrightarrow \beta
\hspace{0.2cm} for \hspace{0.1cm} (\alpha \rightarrow \beta)\odot
(\beta \rightarrow \alpha)$$

Let $A$ be an algebra in ${\cal IP}$ and $p:A \rightarrow
A/_{\equiv}$ be the natural $\langle \oplus, \bullet, \neg, 0,
\frac{1}{2}, 1  \rangle$-homomorphism. Then {\it interpretations} of
the language $Term_{\cal IP}$ in $A$ is any ${\cal IP}$-homomorphism
$e:Term_{\cal IP} \rightarrow A$, and the {\it valuation}
associated to $e$ is the composition $e_p = pe$. Therefore the
$PMV$-models in ${\cal IP}$ are established  and for each $\alpha \in
Term_{\cal IP}$, $e_p(\alpha) = p(e(\alpha))$ represent in this
framework the ``probability value" of the term $\alpha$.

\begin{prop}\label{EQPRB}
Let $D_A$ be the ${\cal S}^{\circ}$-algebra associated to the
$PMV_{\frac{1}{2}}$-chain $A$. If $e$, $e'$ are two interpretations over $D_A$ such that for
each atomic term $\alpha$, $e_p(\alpha) = e'_p(\alpha)$ and
$e_p(\sqrt \alpha) = e'_p(\sqrt \alpha)$ then we have that $e= e'$.
\end{prop}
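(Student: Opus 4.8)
The plan is to exploit the fact that, after the identification $D_A/_{\equiv}\cong Reg(D_A)\cong A$ supplied by Proposition \ref{REG1} and Proposition \ref{ALGCIRC}, the composite valuation $e_p = pe$ is nothing more than the first-coordinate projection of an element of $D_A\subseteq S_A = A\times A$. Since $e$ and $e'$ are ${\cal IP}$-homomorphisms out of the absolutely free algebra $Term_{\cal IP}$, it suffices to show they agree on every atomic term (variable): the constants $0,\frac{1}{2},1$ are preserved by any homomorphism, so agreement there is automatic, and agreement on the generators then propagates to all of $Term_{\cal IP}$ by the universal property of the free algebra.

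First I would make the identification explicit. By Proposition \ref{REG1} the natural map $p$ sends $w\mapsto w\oplus 0$, and in $S_A$ one computes $(x,y)\oplus 0 = (x\oplus 0,\frac{1}{2}) = (x,\frac{1}{2})$, because $A$ is a genuine $PMV$-chain and hence $x\oplus 0 = x$. Thus, under the isomorphism $Reg(D_A)\cong A$ given by $(x,\frac{1}{2})\mapsto x$, the valuation simply reads off the \emph{first} coordinate: if $e(\alpha) = (x,y)$ then $e_p(\alpha)$ corresponds to $x$.

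The key step is to recover the \emph{second} coordinate by means of the $\sqrt{\ }$ connective. Writing $e(\alpha)=(x,y)$, the definition of the root in $S_A$ (Proposition \ref{ST}) gives $e(\sqrt\alpha)=\sqrt{e(\alpha)}=(y,\neg x)$, so that $e_p(\sqrt\alpha)$ corresponds to $y$. Consequently the two hypotheses $e_p(\alpha)=e'_p(\alpha)$ and $e_p(\sqrt\alpha)=e'_p(\sqrt\alpha)$ pin down, respectively, the first and second coordinates of $e(\alpha)$ and $e'(\alpha)$: writing $e'(\alpha)=(x',y')$, the first equation yields $x=x'$ and the second yields $y=y'$, whence $e(\alpha)=e'(\alpha)$ for every atomic $\alpha$.

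Finally, having established that $e$ and $e'$ coincide on all variables and on the constants, I would invoke that $Term_{\cal IP}$ is absolutely free on $V$: two homomorphisms agreeing on the generators agree everywhere, so $e=e'$. I do not anticipate a serious obstacle; the only point demanding care is the bookkeeping of the isomorphism $D_A/_{\equiv}\cong A$, and in particular the verification that applying $p$ after $\sqrt{\ }$ genuinely extracts the second coordinate rather than some mixture of the two. This rests entirely on the ``rotation'' behaviour $\sqrt{(x,y)}=(y,\neg x)$ of the root operation, which is exactly what allows the pair of probabilistic readings $e_p(\alpha)$, $e_p(\sqrt\alpha)$ to determine the full element $e(\alpha)$ of $D_A$.
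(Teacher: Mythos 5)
Your proposal is correct and follows essentially the same route as the paper's own proof: identify $p$ with the first-coordinate projection on $D_A\subseteq A\times A$, use $\sqrt{(x,y)}=(y,\neg x)$ so that $e_p(\alpha)$ and $e_p(\sqrt{\alpha})$ recover both coordinates of $e(\alpha)$ for atomic $\alpha$, and then conclude by freeness of $Term_{\cal IP}$ (the paper phrases this last step as induction on term complexity, which is the same argument).
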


\begin{proof}
Let $\alpha$ be an atomic term. Suppose that $e(\alpha) = (x,y)$
and  $e'(\alpha) = (x',y')$. Using Proposition \ref{PROJECTION} we
can identify $p$ with the $x-projection$. Thus $e_p(\alpha) = p(x,y)
= x$ and $e'_p(\alpha) = p(x',y') = x'$,  $e_p(\sqrt \alpha) =
p(y,\neg x) = y$ and $e'_p(\sqrt \alpha) = p(y',\neg x') = y'$.
Using the hypothesis we have that $x = x'$ and $y = y'$. Finally by
an inductive argument on the complexity of terms, it results that
$e= e'$.

\qed
\end{proof}

\begin{definition}
{\rm An ${\cal IP}$-term $\alpha$ is a {\it tautology} iff for each interpretation $e$ we have that
$e_p(\alpha) = 1$ }
\end{definition}

Let $Term(\frac{1}{2})$ be the sub-language (without variables) of
$Term_{\cal IP}$ generated by $\langle \oplus, \bullet, \neg, 0,
\frac{1}{2}, 1  \rangle$. For any interpretation $e: Term_{\cal IP}
\rightarrow A$, it is clear that the restriction $e_p:
Term(\frac{1}{2}) \rightarrow A/_\equiv$ is a ${\cal
PMV}_{\frac{1}{2}}$-homomorphism whose image is the sub algebra
$G_{A/_\equiv}(\frac{1}{2})$. By induction on the complexity of
terms we can prove that if $e, e'$ are two interpretations then, for
each $s \in Term(\frac{1}{2})$, $e_p(s) = e'_p(s)$. Consequently by
Remark \ref{CONVENC} each element $s\in Term(\frac{1}{2})$ can be
identified with a single number  $\overline s \in
G_{[0,1]}(\frac{1}{2})$. Taking into account this fact, we introduce
the following axiomatic system:

\begin{definition}
{\rm The following terms are axioms of the ${\cal IP}$-calculus: \\

{\it \L ukasiewicz axioms}

\begin{enumerate}
\item[W1]
$\alpha \rightarrow (\beta \rightarrow \alpha)$

\item[W2]
$(\alpha \rightarrow \beta) \rightarrow ((\beta \rightarrow \gamma ) \rightarrow (\alpha \rightarrow \gamma)) $

\item[W3]
$(\neg \alpha \rightarrow  \neg \beta) \rightarrow (\beta  \rightarrow \alpha) $

\item[W4]
$((\alpha \rightarrow \beta) \rightarrow \beta) \rightarrow ((\beta \rightarrow \alpha) \rightarrow \alpha)$

\end{enumerate}

{\it Constant axioms}
\begin{enumerate}

\item[C1]
$1$

\item[C2]
$\neg 0 \Longleftrightarrow 1$

\item[C3]
$\neg {\bf \frac{1}{2}} \Longleftrightarrow {\bf \frac{1}{2}}$

\item[C4]
$\frac{a}{2^n} \odot \frac{b}{2^m} \Longleftrightarrow \frac{\max \{0, \hspace{0.1cm} a+b2^{n-m}-2^n\}}{{\bf 2}^n}$ with $n \geq m$,

\item[C5]
$\frac{a}{{\bf 2}^n} \bullet \frac{b}{{\bf 2}^m} \Longleftrightarrow \frac{ab}{{\bf 2}^{n+m}} $,

\item[C6]
$\neg \frac{a}{{\bf 2}^n} \Longleftrightarrow \frac{2^n - a}{{\bf 2}^n}$.

\end{enumerate}

{\it Product axioms}

\begin{enumerate}

\item[P1]
$ (\alpha \bullet \beta ) \rightarrow (\beta \bullet \alpha )$

\item[P2]
$ (1 \bullet \alpha ) \Longleftrightarrow \alpha$

\item[P3]
$ (\alpha \bullet \beta ) \rightarrow \beta$

\item[P4]
$(\alpha \bullet \beta ) \bullet \gamma  \Longleftrightarrow \alpha \bullet (\beta \bullet \gamma)$

\item[P5]
$x \bullet (y \odot \neg z) \Longleftrightarrow  (x \bullet y) \odot \neg (x \bullet z) $

\end{enumerate}

{\it Sqrt axioms}

\begin{enumerate}

\item[sQ1]
$\sqrt{\sqrt{\alpha}} \Longleftrightarrow \neg \alpha $

\item[sQ2]
$\sqrt{ \neg \alpha}\Longleftrightarrow \neg \sqrt{\alpha} $

\item[sQ3]
If $*$ is a binary operation $\sqrt{\alpha * \beta} \Longleftrightarrow \frac{1}{2} $

\item[sQ4]
$\sqrt{0} \Longleftrightarrow \sqrt{\frac{1}{2}} \Longleftrightarrow \sqrt{1} \Longleftrightarrow \frac{1}{2}$.

\item[sQ5]
$\{((\frac {1}{4} \bullet \alpha) \oplus ((\frac {1}{4} \bullet
\sqrt \alpha )) \rightarrow s  : \hspace{0.2cm} s\in
Term(\frac{1}{2}), \hspace{0.2cm} \overline s \geq \frac{1+\sqrt2}{4\sqrt2} \}$.

\end{enumerate}
The unique deduction rule is {\it modus ponens} $\{\alpha, \alpha \rightarrow \beta \} \vdash \beta $  (MP).
}
\end{definition}

A theory is any set $T \subseteq Term_{\cal IP}$. A {\it proof} from
$T$ is a sequence of terms $\alpha_1,...,\alpha_n$ such that each
member is either an axiom or a member of $T$ or follows from
preceding members of the sequence by modus ponens. $T \vdash \alpha$
means that $\alpha$ is provable in $T$, that is, $\alpha$ is the
last term of a proof from $T$. Thus the ${\cal IP}$-calculus is
conformed by the pair $\langle Term_{\cal IP}, \vdash \rangle $. If
$T = \emptyset$ we use the notation $\vdash \alpha$ and we said that
$\alpha$ is a {\it theorem}. $T$ is {\it inconsistent} if and only
$T\vdash \alpha $ for each  $\alpha \in Term_{\cal IP}$; otherwise
it is {\it consistent}. We note that axioms W1...W4, C1, C2 and MP
conform the same propositional system as the infinite valued \L
ukasiewicz calculus {\rm  \cite[$\S 4$]{CDM}}.

\begin{lem}\label{BL2}
Let $\alpha, \beta \in Term_{\cal IP}$ and $T$ be a theory. Then the following
items may be proved using only W1...W4, C1, C2, P1...P5 and MP.

\begin{enumerate}

\item
$\vdash \alpha \rightarrow \alpha$

\item
$T \vdash \alpha \odot \beta$ \hspace{0.2 cm} iff \hspace{0.2 cm},  $T \vdash \alpha$ and $T \vdash \beta$,

\item
$T\vdash \alpha \Longleftrightarrow \beta$ \hspace{0.2 cm} iff
\hspace{0.2 cm} $T\vdash \alpha \rightarrow \beta$ and $T\vdash
\beta \rightarrow \alpha$,

\item
$T\vdash \alpha \rightarrow \beta$ and $T\vdash \beta \rightarrow
\gamma$ \hspace{0.2 cm} then \hspace{0.2 cm} $T\vdash \alpha
\rightarrow \gamma$,

\item
$\vdash \neg \neg \alpha \rightarrow \alpha$

\item
$\vdash (\alpha \rightarrow \beta) \rightarrow (\neg \beta \rightarrow \neg \alpha)$,

\item
$\vdash (\alpha \rightarrow \beta) \rightarrow ((\alpha \oplus \gamma )\rightarrow (\beta \oplus \gamma))$,

\item
$\vdash ((\alpha \Longleftrightarrow \beta)\odot (\beta \Longleftrightarrow \gamma)) \rightarrow (\alpha \Longleftrightarrow \gamma)$

\item
$\vdash (\alpha \Longleftrightarrow \beta)\rightarrow ((\alpha \rightarrow \gamma) \Longleftrightarrow (\beta \rightarrow \gamma))$

\item
$\vdash (\alpha \Longleftrightarrow \beta)\rightarrow ((\gamma \rightarrow \alpha ) \Longleftrightarrow (\gamma \rightarrow \beta))$

\item
$\vdash (\alpha \rightarrow \beta) \rightarrow ((\gamma \bullet \alpha )\rightarrow (\gamma \bullet \beta ))$

\end{enumerate}
\end{lem}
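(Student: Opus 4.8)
The plan is to separate the eleven items into two groups according to which connectives they involve. Items 1--10 are formulated entirely within the \L ukasiewicz fragment $\langle \oplus, \neg, \rightarrow, \odot, \Longleftrightarrow\rangle$, and since the paper has already observed that W1--W4, C1, C2 together with MP constitute exactly the infinite-valued \L ukasiewicz calculus, each of these is either a standard \L ukasiewicz theorem or a routine consequence of the definitions of $\odot$ and $\Longleftrightarrow$. Item 11 is the only statement mentioning the product $\bullet$, so it is the only one that genuinely requires the product axioms P1--P5.

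For the first group I would proceed as follows. Items 1, 4, 5, 6 and 7 (reflexivity, transitivity, double negation elimination, contraposition and $\oplus$-monotonicity) are classical \L ukasiewicz theorems and may be quoted from \cite[$\S 4$]{CDM}. Item 2 unfolds the definition $\alpha\odot\beta=\neg(\neg\alpha\oplus\neg\beta)$: the forward direction uses the provable implications $(\alpha\odot\beta)\rightarrow\alpha$ and $(\alpha\odot\beta)\rightarrow\beta$ followed by MP, while the backward direction uses the \L ukasiewicz theorem $\alpha\rightarrow(\beta\rightarrow(\alpha\odot\beta))$ applied twice via MP. Item 3 is then immediate from item 2 together with the definition $\alpha\Longleftrightarrow\beta := (\alpha\rightarrow\beta)\odot(\beta\rightarrow\alpha)$. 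Items 8, 9 and 10 are the transitivity and the replacement (congruence) properties of $\Longleftrightarrow$ with respect to $\rightarrow$; these are again derivable inside \L ukasiewicz logic by the usual manipulations of implication and are the only mildly laborious entries in this group.

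The substantive step is item 11. Here the plan is to reduce $(\alpha\rightarrow\beta)\rightarrow((\gamma\bullet\alpha)\rightarrow(\gamma\bullet\beta))$ to the product axioms. First, using the \L ukasiewicz identity $x\rightarrow y = \neg(x\odot\neg y)$ together with axiom P5 and the congruence of $\neg$ under $\Longleftrightarrow$ (which comes from item 6 via item 3), I would establish the equivalence
$$(\gamma\bullet\alpha)\rightarrow(\gamma\bullet\beta)\ \Longleftrightarrow\ \neg\bigl(\gamma\bullet(\alpha\odot\neg\beta)\bigr).$$
Writing $\delta:=\alpha\odot\neg\beta$, so that $\alpha\rightarrow\beta=\neg\delta$, it then suffices to prove $\vdash\neg\delta\rightarrow\neg(\gamma\bullet\delta)$. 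By P3 we have $\vdash(\gamma\bullet\delta)\rightarrow\delta$, and contraposition (item 6 followed by MP) yields exactly $\neg\delta\rightarrow\neg(\gamma\bullet\delta)$. Finally I would splice this implication against the equivalence above using transitivity (items 3 and 4) to obtain item 11.

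The main obstacle I anticipate is not item 11 itself, whose derivation is short once the reduction through P5 and P3 is set up, but rather the bookkeeping in the first group: establishing the congruence/replacement lemmas 8--10 inside pure \L ukasiewicz logic and then threading the several equivalences together in item 11 without leaving any replacement step unjustified. Care is also needed to check that every auxiliary fact invoked for item 11 (the identity $x\rightarrow y=\neg(x\odot\neg y)$, the $\neg$-congruence, and transitivity) has itself already been secured using only the permitted axioms W1--W4, C1, C2, P1--P5 and MP.
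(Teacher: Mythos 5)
Your proposal is correct and follows essentially the same route as the paper: items 1--10 are quoted as theorems of the infinite-valued \L ukasiewicz calculus generated by W1--W4, C1, C2 and MP, and item 11 is obtained from P5 (to identify $(\gamma\bullet\alpha)\odot\neg(\gamma\bullet\beta)$ with $\gamma\bullet(\alpha\odot\neg\beta)$), P3 (to get $\gamma\bullet(\alpha\odot\neg\beta)\rightarrow(\alpha\odot\neg\beta)$), contraposition, and the definitional identity $x\rightarrow y=\neg(x\odot\neg y)$. The only difference is presentational: the paper chains the implications directly with W2 where you package the same steps as equivalences around a substitution $\delta=\alpha\odot\neg\beta$.
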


\begin{proof}
Items 1...10 are follows from the fact that they are theorems (or meta theorem) in the infinite valued \L ukasiewicz
calculus given in {\rm \cite{HAJ}}. We prove item 11:

\begin{enumerate}

\item[(1)]
$\vdash \gamma \bullet (\alpha \odot \neg \beta) \rightarrow
((\alpha \odot \neg \beta))$ \hspace{3.2 cm} {\footnotesize by Ax
P3}

\item[(2)]
$\vdash ((\gamma\bullet \alpha) \odot \neg(\gamma \bullet
\beta))\rightarrow \gamma \bullet (\alpha \odot \neg \beta)$
\hspace{2 cm} {\footnotesize by Ax P5}

\item[(3)]
$\vdash ((\gamma\bullet \alpha) \odot \neg(\gamma \bullet \beta))
\rightarrow (\alpha \odot \neg \beta)$ \hspace{2.5 cm}{\footnotesize
by 1,2, Ax W2}

\item[(4)]
$\vdash (((\gamma\bullet \alpha) \odot \neg(\gamma \bullet
\beta))\rightarrow (\alpha \odot \neg \beta))   \rightarrow (\neg
(\alpha \odot \neg \beta) \rightarrow \neg ((\gamma\bullet \alpha)
\odot \neg(\gamma \bullet \beta)))$ \noindent

\hspace{8.3 cm} {\footnotesize by Ax W3 }

\item[(5)]
$\vdash \neg (\alpha \odot \neg \beta) \rightarrow \neg
((\gamma\bullet \alpha) \odot \neg(\gamma \bullet \beta))$
\hspace{1.5 cm} {\footnotesize by MP 3,4}

\item[(6)]
$\vdash (\alpha \rightarrow \beta)\rightarrow  \neg(\alpha \odot
\neg \beta)$ \hspace{3 cm} {\footnotesize by def $\odot$, item 1}

\item[(7)]
$\vdash (\alpha \rightarrow \beta)\rightarrow \neg ((\gamma\bullet
\alpha) \odot \neg(\gamma \bullet \beta))$ \hspace{2.1
cm}{\footnotesize by 5,6 , Ax W2}

\item[(8)]
$\vdash \neg ((\gamma\bullet \alpha) \odot \neg(\gamma \bullet
\beta)) \rightarrow ((\gamma\bullet \alpha) \rightarrow (\gamma
\bullet \beta))$ \hspace{0.2 cm} {\footnotesize by def $\odot$, item 1}

\item[(9)]
$\vdash (\alpha \rightarrow \beta)\rightarrow ((\gamma\bullet
\alpha) \rightarrow (\gamma \bullet \beta))$ \hspace{2.4 cm}
{\footnotesize by 7,8, Ax W2}

\end{enumerate}

\qed
\end{proof}

An interpretation  $e$ is a {\it model of a theory} $T$ if and only
if $e_p(\alpha)= 1$ for each $\alpha \in T$. In this case we will
use the notation $e_p(T) = 1$. We use $T \models \alpha$ in case that $e_p(\alpha) = 1$ whenever
$e_p(T) = 1$.

\begin{prop}\label{SOUDNES1}
Axioms of the ${\cal IP}$-calculus are tautologies. Moreover if $e$
is a model for the theory $T$ and $T\vdash \alpha$
then, $e_p(\alpha) = 1$.
\end{prop}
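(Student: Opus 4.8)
The plan is to verify the two assertions separately: that every axiom schema evaluates to $1$ under any valuation, and then, by induction on the length of a derivation, that \textit{modus ponens} preserves this property relative to a model of $T$. Throughout I will exploit that a valuation factors as $e_p = p\circ e$, where $e:Term_{\cal IP}\to A$ is an ${\cal IP}$-homomorphism and $p:A\to A/_\equiv$ is the natural map of Proposition \ref{REG1}. Since $p$ is a $\langle\oplus,\bullet,\neg,0,\frac{1}{2},1\rangle$-homomorphism and $e$ respects all operations, $e_p$ commutes with $\oplus$, $\bullet$, $\neg$ and hence with the derived connectives $\rightarrow$, $\odot$, $\Longleftrightarrow$; the one operation it does \emph{not} commute with is $\sqrt{}$, because $\equiv$ fails to be a $\sqrt{}$-congruence. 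The codomain $A/_\equiv$ is a $PMV$-algebra whose $MV$-reduct satisfies $a\Longleftrightarrow b=1$ iff $a=b$ and $1\rightarrow c=c$. Consequently, for a biconditional axiom it suffices to show that both sides receive the same $e_p$-value, and for an implicational axiom it suffices to show the value is $1$.

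For the axioms free of $\sqrt{}$ the work is routine. Axioms W1--W4 are \L ukasiewicz tautologies, valid in the $MV$-reduct of $A/_\equiv$, so $e_p$ sends them to $1$; axiom C1 is immediate since $e_p(1)=[1]=1$. Pushing $e_p$ through, axioms C2--C6 and P1--P5 become instances of the defining equations of $PMV$-algebras and of $PMV_{\frac{1}{2}}$-algebras, which hold because $R(A)\cong A/_\equiv$ is a $PMV_{\frac{1}{2}}$-algebra (Axiom P1 of ${\cal IP}$ with Proposition \ref{REG1}); for P3 one invokes $x\bullet y\le x$ from Proposition \ref{CHIQPMV}. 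Here one uses that every closed term $s\in Term(\frac{1}{2})$ has a constant $e_p$-value, identified through Theorem \ref{MEDIOREP1} with the number $\overline{s}\in G_{[0,1]}(\frac{1}{2})$.

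The $\sqrt{}$ axioms are the delicate part, since there the factorization through $A/_\equiv$ cannot be used naively and one must descend to $A$ itself. For sQ1 and sQ2 I argue at the level of $e$: by axioms SQ3 and SQ2 of $\sqrt{qMV}$-algebras one has $e(\sqrt{\sqrt\alpha})=e(\neg\alpha)$ and $e(\sqrt{\neg\alpha})=e(\neg\sqrt\alpha)$ already as elements of $A$, so the two sides coincide before $p$ is applied. For sQ3 and sQ4 I combine $\sqrt{x\oplus y}\oplus 0=\frac{1}{2}$ (axiom SQ4) and $\sqrt{x\bullet y}\oplus 0=\frac{1}{2}$ (axiom 7 of $\sqrt{qPMV}$) with $0\oplus 0=0$ (Lemma \ref{PROPQMV}) and $\frac{1}{2}\oplus 0=\frac{1}{2}$ (Proposition \ref{CHIQPMV}); these force $e(\sqrt{\alpha*\beta})\equiv\frac{1}{2}$ and likewise $e(\sqrt 0),e(\sqrt{\frac{1}{2}}),e(\sqrt 1)\equiv\frac{1}{2}$, so each side has $e_p$-value $[\frac{1}{2}]$. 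Finally sQ5 is, for a fixed $s$ with $\overline{s}\ge\frac{1+\sqrt 2}{4\sqrt 2}$, precisely an instance of Axiom P2 of ${\cal IP}$ under the valuation $x\mapsto e(\alpha)$ with the regular element $e(s)\oplus 0\in G_{R(A)}(\frac{1}{2})$; hence $e$ of this term equals $1$ in $A$ and so does $e_p$. I expect this last clause to be the main obstacle, as it is the only place where the geometric content encoded in Axiom P2 (ultimately Lemma \ref{LAGRANGE} and Lemma \ref{PROBBLOC2}) enters, and where the non-congruence of $\sqrt{}$ must be bypassed by reasoning in $A$ rather than in $A/_\equiv$.

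For the soundness statement I proceed by induction on the length of a proof of $\alpha$ from $T$, with $e$ a model of $T$. If $\alpha$ is an axiom, the first part gives $e_p(\alpha)=1$; if $\alpha\in T$, then $e_p(\alpha)=1$ since $e$ is a model. In the inductive step, suppose $\alpha$ is obtained by \textit{modus ponens} from $\beta$ and $\beta\rightarrow\alpha$, for which $e_p(\beta)=1$ and $e_p(\beta\rightarrow\alpha)=1$ by the induction hypothesis. Because $e_p$ commutes with $\rightarrow$, one computes in $A/_\equiv$ that $1=e_p(\beta\rightarrow\alpha)=e_p(\beta)\rightarrow e_p(\alpha)=1\rightarrow e_p(\alpha)=e_p(\alpha)$, whence $e_p(\alpha)=1$. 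This closes the induction and establishes the proposition.
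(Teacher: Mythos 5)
Your proof is correct and follows the same route the paper intends: the paper's own proof merely declares the first part "trivial" and observes that modus ponens preserves valuations equal to $1$, and your argument is a correct, careful elaboration of exactly that (checking the $\sqrt{}$-free axioms in $A/_\equiv$, handling the $\sqrt{}$-axioms and sQ5 down in $A$ before applying $p$, then inducting on proof length). Nothing is missing.
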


\begin{proof}
The first part is trivial. The second assertion is easily verified from  the fact that the modus ponens preserves valuations equal to $1$.
\qed
\end{proof}

\subsection{The $PMV(\frac{1}{2})$-fragment}
In $Term_{{\cal IP}}$ consider the absolutely free algebra
$Term^{\sqrt{V}}_{PMV(\frac{1}{2})}$ generated by $$\langle V\cup
(\sqrt x)_{x \in V},  \oplus, \bullet, \neg, 0, \frac{1}{2}, 1
\rangle$$ (i.e. taking the family of terms $(\sqrt{x})_{x \in V}$
as atomic terms) together with the calculus given by the axioms
W1 \ldots W4, C1 \ldots C6, P1 \ldots P5 and MP as inference rule.
Proof in $Term^{\sqrt{V}}_{PMV(\frac{1}{2})}$ are denoted by the
symbol $\vdash_{PMV(\frac{1}{2})}$. Note that, for all purposes, the
$PMV(\frac{1}{2})$-fragment given by  $\langle
Term^{\sqrt{V}}_{PMV(\frac{1}{2})},
\vdash_{PMV(\frac{1}{2})}\rangle$ is a $PMV$-calculus. Hence, the
results of Lemma \ref{BL2} continue to be valid in the fragment. Let
$A$ be a $PMV_{\frac{1}{2}}$-algebra. Valuations of
$Term^{\sqrt{V}}_{PMV(\frac{1}{2})}$ in $A$ are
${\cal PMV}_{\frac{1}{2}}$-homorphism $v:
Term^{\sqrt{V}}_{PMV(\frac{1}{2})} \rightarrow A$ where $\sqrt{x}$
is tacked as a variable for each $x\in V$. A term $\alpha \in
Term^{\sqrt{V}}_{PMV(\frac{1}{2})}$ is called
$PMV(\frac{1}{2})$-{\it tautology} if and only if for each valuation
$v$, $v(\alpha) = 1$. Let $T$ be a theory in
$Term^{\sqrt{V}}_{PMV(\frac{1}{2})}$. Then $T$ is said to be {\it
complete} iff, for each pair of terms $\alpha, \beta$ in
$Term^{\sqrt{V}}_{PMV(\frac{1}{2})}$, we have:
$T\vdash_{PMV(\frac{1}{2})} \alpha \rightarrow \beta$ or
$T\vdash_{PMV(\frac{1}{2})} \beta \rightarrow \alpha$.

\begin{lem} \label{EXTCOMP}
Let $T$ be a theory and $\alpha$ be a term, both in
$Term^{\sqrt{V}}_{PMV(\frac{1}{2})}$. Suppose that $T$ does not
prove $\alpha$ in the $PMV(\frac{1}{2})$-fragment. Then there exists
a consistent complete theory $T' \subseteq Term^{\sqrt{V}}_{PMV(\frac{1}{2})}$ such that,
$T'$ is complete, $T \subseteq T'$ and $T'$ does not prove $\alpha$
in the $PMV(\frac{1}{2})$-fragment.
\end{lem}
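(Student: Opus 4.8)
The plan is to run a Lindenbaum-style maximization, exploiting that $\langle Term^{\sqrt V}_{PMV(\frac12)},\vdash_{PMV(\frac12)}\rangle$ is an axiomatic extension of the infinite-valued \L ukasiewicz calculus (through W1\ldots W4, C1, C2 and MP). In particular I may freely invoke the \L ukasiewicz \emph{local deduction theorem}: for any theory $S$ and any terms $\varphi,\psi$ of the fragment, $S\cup\{\varphi\}\vdash_{PMV(\frac12)}\psi$ iff there is $n\in N$ with $S\vdash_{PMV(\frac12)}(\bigodot_n\varphi)\rightarrow\psi$, together with prelinearity $\vdash_{PMV(\frac12)}(\mu\rightarrow\nu)\lor(\nu\rightarrow\mu)$ and proof-by-cases $\{\mu\rightarrow\psi,\nu\rightarrow\psi\}\vdash_{PMV(\frac12)}(\mu\lor\nu)\rightarrow\psi$ (a \L ukasiewicz meta-theorem of the type collected in Lemma \ref{BL2}). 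First I would let $\Sigma$ be the family of all theories $S$ with $T\subseteq S\subseteq Term^{\sqrt V}_{PMV(\frac12)}$ and $S\not\vdash_{PMV(\frac12)}\alpha$, ordered by inclusion. Since every derivation is finite, the union of a chain in $\Sigma$ still fails to prove $\alpha$, so Zorn's Lemma yields a maximal element $T'\in\Sigma$. By construction $T\subseteq T'$ and $T'\not\vdash_{PMV(\frac12)}\alpha$; moreover $T'$ is consistent, since an inconsistent theory proves every term and, in particular, $\alpha$.

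The decisive step is to show that this maximal $T'$ is complete. Suppose, toward a contradiction, that there are terms $\beta,\gamma$ with $T'\not\vdash_{PMV(\frac12)}\beta\rightarrow\gamma$ and $T'\not\vdash_{PMV(\frac12)}\gamma\rightarrow\beta$. Then $\beta\rightarrow\gamma,\gamma\rightarrow\beta\notin T'$, so both $T'\cup\{\beta\rightarrow\gamma\}$ and $T'\cup\{\gamma\rightarrow\beta\}$ \emph{properly} extend $T'$, and by maximality neither lies in $\Sigma$; that is, $T'\cup\{\beta\rightarrow\gamma\}\vdash_{PMV(\frac12)}\alpha$ and $T'\cup\{\gamma\rightarrow\beta\}\vdash_{PMV(\frac12)}\alpha$. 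Writing $u=\beta\rightarrow\gamma$ and $v=\gamma\rightarrow\beta$ and applying the local deduction theorem, I obtain $n,m\in N$ with $T'\vdash_{PMV(\frac12)}(\bigodot_n u)\rightarrow\alpha$ and $T'\vdash_{PMV(\frac12)}(\bigodot_m v)\rightarrow\alpha$.

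Now I would use the algebraic fact, valid in every $MV$-chain and hence provable already in the \L ukasiewicz subcalculus by its completeness with respect to $MV$-chains {\rm \cite[$\S 4$]{CDM}}, that $u\lor v=1$ forces $(\bigodot_n u)\lor(\bigodot_m v)=1$: in a chain one of $u,v$ is the top element, whence the corresponding power is the top as well, so the implication $(u\lor v)\rightarrow((\bigodot_n u)\lor(\bigodot_m v))$ is a tautology and therefore a theorem. Since prelinearity gives $\vdash_{PMV(\frac12)}u\lor v$, one application of MP yields $\vdash_{PMV(\frac12)}(\bigodot_n u)\lor(\bigodot_m v)$. Combining the two implications of the previous paragraph by proof-by-cases gives $T'\vdash_{PMV(\frac12)}((\bigodot_n u)\lor(\bigodot_m v))\rightarrow\alpha$, and a final MP produces $T'\vdash_{PMV(\frac12)}\alpha$, contradicting $T'\in\Sigma$. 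Hence no such $\beta,\gamma$ exist and $T'$ is complete.

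I expect the main obstacle to be the careful bookkeeping around the \L ukasiewicz local deduction theorem and, above all, the passage from the semantic chain-fact $u\lor v=1\Rightarrow(\bigodot_n u)\lor(\bigodot_m v)=1$ to a genuine \emph{syntactic} derivation inside the fragment; once these two ingredients are secured, the remaining pieces (the Zorn argument, consistency of $T'$, and the proof-by-cases step) are routine.
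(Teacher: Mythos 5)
Your overall route is the right one and coincides with the paper's: the paper disposes of this lemma by citing H\'ajek's Lemma 2.4.2, and what you have written is precisely a reconstruction of that standard Lindenbaum--Zorn argument (maximalize among theories extending $T$ that do not prove $\alpha$, then use the local deduction theorem, prelinearity and proof by cases to show the maximal theory is complete). The Zorn step, the consistency remark, and the use of maximality to force $T'\cup\{u\}\vdash_{PMV(\frac{1}{2})}\alpha$ and $T'\cup\{v\}\vdash_{PMV(\frac{1}{2})}\alpha$ are all fine, and the local deduction theorem does transfer to the $PMV(\frac{1}{2})$-fragment, since that fragment is an axiomatic extension of the \L ukasiewicz calculus with modus ponens as its only rule.

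There is, however, one step that is wrong as stated: the formula $(u\lor v)\rightarrow((\bigodot_n u)\lor(\bigodot_m v))$ is \emph{not} a tautology. In $[0,1]_{MV}$ take $u=v=0.9$ and $n=m=2$; then $u\lor v=0.9$ while $(\bigodot_2 u)\lor(\bigodot_2 v)=0.8$, so the implication evaluates to $0.9<1$. The semantic observation that ``$u\lor v=1$ forces $(\bigodot_n u)\lor(\bigodot_m v)=1$ in a chain'' does not upgrade to validity of the implication between the two formulas, because validity requires more than preservation of the value $1$. The repair is standard and cheap: the formula $(\bigodot_k(u\lor v))\rightarrow((\bigodot_n u)\lor(\bigodot_m v))$ with $k=\max\{n,m\}$ \emph{is} valid in every $MV$-chain (in a chain $u\lor v$ equals one of $u,v$, say $u$, and then $\bigodot_k(u\lor v)=\bigodot_k u\leq\bigodot_n u$), hence it is a theorem of the \L ukasiewicz subcalculus by chain-completeness. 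Since $\vdash u\lor v$ gives $\vdash\bigodot_k(u\lor v)$ by Lemma \ref{BL2}-2, one application of modus ponens yields $\vdash(\bigodot_n u)\lor(\bigodot_m v)$, and the rest of your argument goes through unchanged.
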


\begin{proof}
See  {\rm \cite[Lemma 2.4.2]{HAJ})}.
\qed
\end{proof}

\begin{theo} \label{LINDE}
Let $T$ be a consistent theory in the $PMV(\frac{1}{2})$-fragment.
For each term $\alpha \in Term^{\sqrt{V}}_{PMV(\frac{1}{2})}$ we
consider the class $$[\alpha] = \{\beta \in
Term^{\sqrt{V}}_{PMV(\frac{1}{2})} : T\vdash_{PMV(\frac{1}{2})}
\alpha \Longleftrightarrow \beta\}$$ Let $L_T = \{[\alpha]: \alpha
\in Term^{\sqrt{V}}_{PMV(\frac{1}{2})}\}$. If we define the
following operation in $L_T$:

\begin{enumerate}

\item[]
$0 = [0]$  \hspace{2cm} $\neg[\alpha] = [\neg \alpha]$

\item[]
$\frac{1}{2}= [\frac{1}{2}] $ \hspace{2cm} $[\alpha]*[\beta]= [\alpha*\beta]$ for $* \in \{\oplus, \bullet\}$

\item[]
$1 = [1]$

\end{enumerate}

Then  $\langle L_T, \hspace{0.2 cm} \oplus, \bullet, \neg, 0, \frac{1}{2},  1 \rangle$ is a $PMV_{\frac{1}{2}}$-algebra. Moreover if $T$ is a
complete theory then $L_T$ is a totally order set.
\end{theo}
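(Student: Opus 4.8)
The plan is to carry out the usual Lindenbaum--Tarski construction, but working inside the $PMV(\frac{1}{2})$-fragment so that every algebraic identity to be verified is backed by a provable biconditional. First I would show that the relation $\alpha \equiv \beta$ defined by $T \vdash_{PMV(\frac{1}{2})} \alpha \Longleftrightarrow \beta$ is an equivalence relation on $Term^{\sqrt{V}}_{PMV(\frac{1}{2})}$. Reflexivity follows from Lemma \ref{BL2}-1 together with Lemma \ref{BL2}-2 (since $\alpha \Longleftrightarrow \alpha$ is $(\alpha \rightarrow \alpha) \odot (\alpha \rightarrow \alpha)$); symmetry is immediate from the characterization in Lemma \ref{BL2}-3; and transitivity is obtained by feeding $T \vdash (\alpha \Longleftrightarrow \beta) \odot (\beta \Longleftrightarrow \gamma)$, which exists by Lemma \ref{BL2}-2, into the theorem of Lemma \ref{BL2}-8 via modus ponens, yielding $\alpha \equiv \gamma$.

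Next I would verify that $\equiv$ is a congruence, so that the operations on $L_T$ are well defined. Compatibility with $\neg$ comes from Lemma \ref{BL2}-6: from $T \vdash \alpha \rightarrow \beta$ and $T \vdash \beta \rightarrow \alpha$ one obtains $T \vdash \neg \beta \rightarrow \neg \alpha$ and $T \vdash \neg \alpha \rightarrow \neg \beta$ by modus ponens, whence $\neg \alpha \equiv \neg \beta$ by Lemma \ref{BL2}-3. For $\oplus$ I would use Lemma \ref{BL2}-7 to replace one argument at a time, closing up with the provable commutativity of $\oplus$; for $\bullet$ I would argue identically using Lemma \ref{BL2}-11 and the commutativity axiom P1. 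In each case the two one-sided replacements are glued together by transitivity of $\equiv$.

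Then I would check that $\langle L_T, \oplus, \bullet, \neg, 0, \frac{1}{2}, 1 \rangle$ satisfies the defining equations of a $PMV_{\frac{1}{2}}$-algebra. Since the fragment is, for all purposes, a $PMV$-calculus (so the whole of Lemma \ref{BL2} applies), each $MV$-axiom and each product axiom translates into a provable biconditional: the \L ukasiewicz axioms W1--W4 and the constant axioms C1, C2 yield the $MV$-reduct, while P1--P5 deliver commutativity, associativity, the unit law $x \bullet 1 = x \oplus 0$ and the distributivity axiom, so $L_T$ is a $PMV$-algebra. The remaining axioms of $PMV_{\frac{1}{2}}$ are read off directly from the constant axioms: C3 gives $\neg \frac{1}{2} = \frac{1}{2}$, and C4, C5, C6 give axioms 3, 4, 5 for the terms $\frac{a}{\mathbf{2}^n}$, each as an identity $[\cdot] = [\cdot]$ in $L_T$ because the corresponding biconditional is an axiom.

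Finally, for the completeness claim I would recall that the induced order on the $MV$-reduct is $[\alpha] \leq [\beta]$ iff $[\alpha] \rightarrow [\beta] = [1]$, and in the Lindenbaum algebra this holds precisely when $T \vdash_{PMV(\frac{1}{2})} \alpha \rightarrow \beta$. If $T$ is complete then for every pair $\alpha, \beta$ we have $T \vdash_{PMV(\frac{1}{2})} \alpha \rightarrow \beta$ or $T \vdash_{PMV(\frac{1}{2})} \beta \rightarrow \alpha$, hence $[\alpha] \leq [\beta]$ or $[\beta] \leq [\alpha]$, so $L_T$ is totally ordered. The one delicate point is the congruence step for the binary connectives: one must check that the one-variable monotonicity lemmas (Lemma \ref{BL2}-7 and -11) combined with commutativity really suffice to replace both arguments simultaneously, and that the passage from provable implications to $[\cdot]=[\cdot]$-level identities is handled uniformly through Lemma \ref{BL2}-3; everything else is a routine dictionary between the calculus axioms and the algebraic equations.
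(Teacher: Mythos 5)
Your proposal is correct and follows essentially the same route as the paper: the standard Lindenbaum--Tarski construction, with well-definedness of the operations secured by Lemma \ref{BL2} (the paper simply delegates the $\langle\oplus,\neg\rangle$ cases and the total-order claim to H\'ajek's Lemmas 2.3.12 and 2.4.2 and handles $\bullet$ via Lemma \ref{BL2}, exactly as you do). The only cosmetic slip is calling the unit law ``$x\bullet 1=x\oplus 0$'' --- in the genuine $PMV$-algebra $L_T$ this is just $[\alpha]\bullet[1]=[\alpha]$, which is what axiom P2 delivers --- and this does not affect the argument.
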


\begin{proof}
We first must see that the operations are well defined on $L_T$. In
the cases $\oplus, \neg, 0, \frac{1}{2}, 1 $ we refer to {\rm
\cite[Lemma 2.3.12]{HAJ}}. The case $\bullet$ follows from Lemma
\ref{BL2}. By axioms W1 \ldots W4, C1 \ldots C5, P1 \ldots P5, it is not very hard to see that $L_T$ is a
$PMV$-algebra. If $T$ is a complete theory, using the same argument as
{\rm \cite[Lemma 2.4.2]{HAJ}}, $L_T$ is a totaly ordered set.

\qed \\
\end{proof}

We will refer to $L_T$ as the {\it Lindenbaum algebra} associated to the theory $T \subseteq Term^{\sqrt{V}}_{PMV(\frac{1}{2})}$.

\subsection{Completeness of the ${\cal IP}$-calculus}

\begin{definition}
{\rm We define the  $PMV(\frac{1}{2})$-{\it translation} $\alpha
\stackrel {t}{\rightarrow} \alpha_t$ as the application $t:
Term_{\cal IP} \rightarrow Term^{\sqrt{V}}_{PMV(\frac{1}{2})}$  such
that:

\begin{enumerate}
\item[]
$x \stackrel {t}{\mapsto} x $ and $ \sqrt x \stackrel {t}{\mapsto} \sqrt x$ \hspace{0.2 cm} for each $x \in V$,

\item[]
$c \stackrel {t}{\mapsto} c $ and $ \sqrt c \stackrel {t}{\mapsto} \frac{1}{2}$ \hspace{0.2 cm} for each $c \in \{0,\frac{1}{2}, 1\}$,

\item[]
$\neg \alpha \stackrel {t}{\mapsto} \neg(\alpha_t)$,

\item[]
$\sqrt {\neg \alpha} \stackrel {t}{\mapsto} (\neg \sqrt {\alpha})_t$,

\item[]
$\sqrt {\sqrt \alpha} \stackrel {t}{\mapsto} (\neg \alpha)_t$,

\item[]
$\sqrt {\alpha \star \beta} \stackrel {t}{\mapsto} \frac{1}{2}$ \hspace{0.2 cm} for each binary connective $\star$,

\item[]
$\alpha \star \beta \stackrel {t}{\mapsto} \alpha_t \star \beta_t $ \hspace{0.2 cm} for each binary connective $\star$,

\end{enumerate}
}
\end{definition}

The $PMV(\frac{1}{2})$-translation is a syntactic representation of
the function $p$ in the $PMV$-model. If $T$ is a theory in
$Term_{\cal IP}$ then, we define the $PMV(\frac{1}{2})$-translation
over $T$ as the set $T_t = \{\alpha_t : \alpha \in T \}$.

\begin{prop} \label{EQUIVTRAS}
Let $\alpha \in Term_{\cal IP}$. Then we have: $$\vdash \alpha \Longleftrightarrow \alpha_t$$

\end{prop}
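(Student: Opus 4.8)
The plan is to argue by strong induction on the complexity of $\alpha$ (the number of occurrences of operation symbols), establishing $\vdash \alpha \Longleftrightarrow \alpha_t$ through a case analysis on the outermost connective of $\alpha$. Throughout I would rely on the fact that $\Longleftrightarrow$ behaves as a congruence for the operations $\neg$, $\oplus$, $\bullet$ and is transitive, all of which are available from Lemma \ref{BL2}: reflexivity of $\Longleftrightarrow$ is immediate from Lemma \ref{BL2}-1 and Lemma \ref{BL2}-3; Lemma \ref{BL2}-6 yields congruence for $\neg$; Lemma \ref{BL2}-7 together with commutativity of $\oplus$ yields congruence for $\oplus$; Lemma \ref{BL2}-11 together with commutativity of $\bullet$ yields congruence for $\bullet$; and Lemma \ref{BL2}-8 supplies transitivity, so that chains of provable equivalences may be composed.

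The routine cases are those in which the translation commutes with the outermost connective. If $\alpha$ is a variable or a constant then $\alpha_t = \alpha$ and the claim is reflexivity. If $\alpha = \neg\beta$ then $\alpha_t = \neg(\beta_t)$, so the induction hypothesis on $\beta$ combined with congruence for $\neg$ closes the case. If $\alpha = \beta \star \gamma$ with $\star \in \{\oplus, \bullet\}$ then $\alpha_t = \beta_t \star \gamma_t$, and one replaces first $\beta$ by $\beta_t$ and then $\gamma$ by $\gamma_t$ using the induction hypothesis and congruence for $\star$, composing by transitivity.

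The delicate case is $\alpha = \sqrt{\beta}$, which I would split according to the primitive form of $\beta$. If $\beta$ is a variable then $\alpha_t = \sqrt{\beta}$ and we are done by reflexivity; if $\beta$ is a constant then $\alpha_t = \frac{1}{2}$ and axiom sQ4 gives $\vdash \sqrt{\beta} \Longleftrightarrow \frac{1}{2}$; if $\beta = \delta \star \epsilon$ has a binary head then $\alpha_t = \frac{1}{2}$ and axiom sQ3 gives $\vdash \sqrt{\delta \star \epsilon} \Longleftrightarrow \frac{1}{2}$. The two remaining subcases are the crux. If $\beta = \neg\delta$, then axiom sQ2 gives $\vdash \sqrt{\neg\delta} \Longleftrightarrow \neg\sqrt{\delta}$; since $\sqrt{\delta}$ has complexity strictly smaller than $\alpha$, the induction hypothesis yields $\vdash \sqrt{\delta} \Longleftrightarrow (\sqrt{\delta})_t$, whence congruence for $\neg$ gives $\vdash \neg\sqrt{\delta} \Longleftrightarrow \neg(\sqrt{\delta})_t$, and as $\neg(\sqrt{\delta})_t = (\neg\sqrt{\delta})_t = \alpha_t$ by the definition of the translation, transitivity closes the case. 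If $\beta = \sqrt{\delta}$, then axiom sQ1 gives $\vdash \sqrt{\sqrt{\delta}} \Longleftrightarrow \neg\delta$; since $\neg\delta$ has complexity strictly smaller than $\alpha$, the induction hypothesis gives $\vdash \neg\delta \Longleftrightarrow (\neg\delta)_t = \alpha_t$, and transitivity finishes.

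The hard part will be exactly this last case analysis. As already remarked in the paper, $\sqrt{}$ is not compatible with the relevant congruence, so there is no uniform $\sqrt{}$-congruence rule permitting one to push the induction hypothesis through $\sqrt{\beta}$; the translation is instead engineered to rewrite every square root of a compound term by means of the axioms sQ1--sQ4. The price is that in the subcases $\beta = \neg\delta$ and $\beta = \sqrt{\delta}$ the induction hypothesis must be invoked on the terms $\sqrt{\delta}$ and $\neg\delta$, which are \emph{not} subterms of $\alpha$. What makes the argument legitimate is that both of these terms have complexity strictly smaller than that of $\sqrt{\beta}$ (one less, in fact), so that \emph{strong induction on complexity}, rather than structural induction on subterms, is the correct induction principle; this also exhibits the recursion defining the translation as well-founded.
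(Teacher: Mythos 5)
Your proof is correct and follows essentially the same route as the paper's: induction on complexity, congruence of $\Longleftrightarrow$ for $\neg$, $\oplus$, $\bullet$ via Lemma \ref{BL2}, and the same three-way case split for $\sqrt{\beta}$ using axioms sQ1--sQ4. Your explicit observation that the subcases $\beta=\neg\delta$ and $\beta=\sqrt{\delta}$ invoke the hypothesis on terms of smaller complexity that are not subterms of $\alpha$ (so that strong induction on complexity, not structural induction, is the right principle) is a point the paper's proof uses tacitly but does not spell out.
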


\begin{proof}
We use induction on complexity of terms. Let $\alpha$ be an atomic term. By definition of $PMV(\frac{1}{2})$-translation, Lemma \ref{BL2}-1 and axiom sQ4 of the ${\cal IP}$-calculus it is clear that  $\vdash \alpha \Longleftrightarrow \alpha_t$ and $\vdash \sqrt{\alpha} \Longleftrightarrow (\sqrt{\alpha})_t$. Suppose that $\vdash \alpha \Longleftrightarrow \alpha_t$ and $\vdash \beta \Longleftrightarrow \beta_t$.

\begin{itemize}
\item
By Lemma \ref{BL2}-6 we have that $\vdash \neg \alpha \Longleftrightarrow \neg \alpha_t$.

\item
Let $\star \in \{ \oplus, \bullet \}$. Then we have that:

\begin{enumerate}
\item[(1)]
$\vdash \alpha \rightarrow \alpha_t$

\item[(2)]
$\vdash (\alpha \rightarrow \alpha_t) \rightarrow ((\alpha \star
\beta) \rightarrow (\alpha_t \star \beta))$ \hspace{0.2 cm}
{\footnotesize by Lemma \ref{BL2}, item 7 or 11}

\item[(3)]
$\vdash (\alpha \star \beta) \rightarrow (\alpha_t \star \beta)$ \hspace{2.5 cm} {\footnotesize MP 1-2}

\item[(4)]
$\vdash \beta \rightarrow \beta_t$

\item[(5)]
$\vdash (\beta \rightarrow \beta_t) \rightarrow ((\alpha_t \star
\beta) \rightarrow (\alpha_t \star \beta_t))$  \hspace{0.2 cm}
{\footnotesize by Lemma \ref{BL2}, item 7 or 11}

\item[(6)]
$\vdash (\alpha_t \star \beta) \rightarrow (\alpha_t \star \beta_t)$ \hspace{2.5 cm} {\footnotesize MP 4-5}

\item[(7)]
$\vdash (\alpha \star \beta) \rightarrow (\alpha_t \star \beta_t)$  \hspace{0.2 cm} {\footnotesize by Lemma \ref{BL2}-4}

\end{enumerate}

By the same argument we can prove that $\vdash (\alpha_t \star
\beta_t) \rightarrow (\alpha \star \beta) $. Hence $\vdash (\alpha
\star \beta) \Longleftrightarrow (\alpha \star \beta)_t$.

\item
If $\alpha$ is $\sqrt{\gamma}$ then we must consider three cases:

i) $\gamma$ is $\gamma_1 \star \gamma_2$ such that $\star \in \{
\oplus, \bullet \}$. Then $\alpha_t = (\sqrt{\gamma})_t =
(\sqrt{\gamma_1 \star \gamma_2})_t = \frac{1}{2}$. By Axiom sQ3,
$(\sqrt{\gamma}) \Longleftrightarrow \frac{1}{2}$. Hence $\vdash
\alpha \Longleftrightarrow \alpha_t$.

ii) $\gamma$ is $\neg \gamma_1$. Then $\alpha_t = (\sqrt{\gamma})_t
= (\sqrt{\neg \gamma_1})_t = (\neg \sqrt{\gamma_1})_t =
\neg(\sqrt{\gamma_1})_t$. By inductive hypothesis $\vdash
\sqrt{\gamma_1} \Longleftrightarrow (\sqrt{\gamma_1})_t$ and then
$\vdash \neg \sqrt{\gamma_1} \Longleftrightarrow \neg
(\sqrt{\gamma_1})_t$. By Axiom sQ2, $\vdash \sqrt{\neg \gamma_1}
\Longleftrightarrow \neg \sqrt{\gamma_1} $. Thus $\vdash \alpha
\Longleftrightarrow \alpha_t$.

iii)  $\gamma$ is $\sqrt {\gamma_1}$. Then $\alpha_t = (\sqrt{\sqrt
{\gamma_1}})_t = (\neg \gamma_1 )_t = \neg(\gamma_1)_t$. By
inductive hypothesis $\vdash \gamma_1 \Longleftrightarrow
(\gamma_1)_t$ and then $\vdash \neg \gamma_1 \Longleftrightarrow
\neg(\gamma_1)_t$. By Axiom sQ1, $\vdash \sqrt{\sqrt {\gamma_1}}
\Longleftrightarrow \neg \gamma_1$. Thus $\vdash \alpha
\Longleftrightarrow \alpha_t$.

\end{itemize}
\qed
\end{proof}

Taking into account the axiom sQ5, we define the following theory
which plays an important role in relation to deductions on the
${\cal IP}$-calculus with respect to deductions in the
$PMV(\frac{1}{2})$-fragment.

\begin{definition}
{\rm We consider the following three groups of terms in $Term^{\sqrt{V}}_{PMV(\frac{1}{2})}$\\

$T_1 = \{((\frac {1}{4} \bullet x) \oplus ((\frac {1}{4} \bullet \sqrt x )) \rightarrow s
: x\in V \cup \{0,\frac{1}{2} ,1\} , \hspace{0.2cm} \overline s \geq \frac{1 + \sqrt 2}{4\sqrt 2} \}$,\\

$T_2 = \{((\frac {1}{4} \bullet \neg x) \oplus ((\frac {1}{4} \bullet \neg \sqrt x )) \rightarrow
s : x\in V\cup \{0,\frac{1}{2} ,1\}, \hspace{0.2cm} \overline s \geq \frac{1 + \sqrt 2}{4\sqrt 2} \}$,\\

$T_3 = \{((\frac {1}{4} \bullet \neg x) \oplus ((\frac {1}{4} \bullet \sqrt x )) \rightarrow s :
x\in V\cup \{0,\frac{1}{2} ,1\}, \hspace{0.2cm} \overline s \geq \frac{1 + \sqrt 2}{4\sqrt 2} \}$,\\

$T_4 = \{((\frac {1}{4} \bullet x) \oplus ((\frac {1}{4} \bullet \neg \sqrt x )) \rightarrow s : x\in V\cup
\{0,\frac{1}{2} ,1\}, \hspace{0.2cm} \overline s \geq \frac{1 + \sqrt 2}{4\sqrt 2} \}$.\\

\noindent
Then we define: $$T_D = T_1 \cup T_2 \cup T_3 \cup T_4$$
}
\end{definition}

\begin{prop} \label{PRETRAS}
Let $\alpha \in Term_{\cal IP}$ and $s\in Term(\frac{1}{2})$. If $\overline s \geq \frac{1 + \sqrt 2}{4\sqrt 2}$ then we have:

\begin{enumerate}
\item
$T_D \vdash_{PMV(\frac{1}{2})} (((\frac {1}{4} \bullet \alpha) \oplus (\frac {1}{4} \bullet \sqrt \alpha)) \rightarrow
 s)\hspace{0.1cm} _t $ \hspace{0.6cm} noted \hspace{0.1cm} $T_D\vdash_{PMV(\frac{1}{2})} \alpha^1_t $

\item
$T_D\vdash_{PMV(\frac{1}{2})} (((\frac {1}{4} \bullet \neg \alpha) \oplus (\frac {1}{4} \bullet \neg \sqrt \alpha))
 \rightarrow s)\hspace{0.1cm} _t $ \hspace{0.1cm} noted \hspace{0.1cm} $T_D\vdash_{PMV(\frac{1}{2})} \alpha^2_t $

\item
$T_D\vdash_{PMV(\frac{1}{2})} (((\frac {1}{4} \bullet \neg \alpha) \oplus (\frac {1}{4} \bullet \sqrt \alpha)) \rightarrow
 s)\hspace{0.1cm} _t$ \hspace{0.4cm} noted \hspace{0.1cm} $T_D\vdash_{PMV(\frac{1}{2})} \alpha^3_t $

\item
$T_D\vdash_{PMV(\frac{1}{2})} (((\frac {1}{4} \bullet \alpha) \oplus (\frac {1}{4} \bullet \neg \sqrt \alpha))
\rightarrow  s)\hspace{0.1cm} _t$ \hspace{0.4cm} noted \hspace{0.1cm} $T_D\vdash_{PMV(\frac{1}{2})} \alpha^4_t $\\

\end{enumerate}

\end{prop}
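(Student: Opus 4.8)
The plan is to prove items 1--4 simultaneously by induction on the complexity of $\alpha$, regarding the four translated implications $\alpha^1_t,\dots,\alpha^4_t$ as a single package. When $\alpha$ is a variable $x$, the four translations are exactly the defining members of $T_1,T_2,T_3,T_4$ (here $\sqrt x$ is an atomic term of the fragment), so $T_D$ proves each of them immediately; this is in fact the only point at which $T_D$ is genuinely used.

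Two kinds of terms reduce to a pure computation with constants: a constant $\alpha=c$, where $(\sqrt c)_t=\frac12$, and a binary $\alpha=\beta\star\gamma$ with $\star\in\{\oplus,\bullet\}$, where the rule $\sqrt{\alpha\star\beta}\stackrel{t}{\mapsto}\frac12$ collapses $\sqrt{\beta\star\gamma}$ to $\frac12$ (and $\neg\sqrt{\beta\star\gamma}$ to $\neg\frac12$, which is $\frac12$ by C3). In both situations each of the four antecedents becomes provably equivalent to $(\frac14\bullet w)\oplus\frac18$ for some $PMV(\frac12)$-term $w$. From axioms P3 and P1 one gets $\vdash_{PMV(\frac12)}(\frac14\bullet w)\rightarrow\frac14$, hence $\vdash_{PMV(\frac12)}((\frac14\bullet w)\oplus\frac18)\rightarrow(\frac14\oplus\frac18)$ by Lemma \ref{BL2}-7; since $\frac14\oplus\frac18=\frac38$ and $\overline s\geq\frac{1+\sqrt2}{4\sqrt2}\geq\frac38$ (Lemma \ref{LAGRANGE}-2), the constant axioms C4--C6 yield $\vdash_{PMV(\frac12)}\frac38\rightarrow s$, and chaining by Lemma \ref{BL2}-4 proves all four statements --- here without even needing $T_D$.

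For $\alpha=\neg\beta$ and $\alpha=\sqrt\beta$ I would use the translation rules $\sqrt{\neg\alpha}\stackrel{t}{\mapsto}(\neg\sqrt\alpha)_t$ and $\sqrt{\sqrt\alpha}\stackrel{t}{\mapsto}(\neg\alpha)_t$. Together with $\vdash\neg\neg\gamma\Longleftrightarrow\gamma$ and the commutativity of $\oplus$ and $\bullet$, these carry each of the four antecedents of $\alpha$ into one of the four antecedents of $\beta$: under $\neg$ the four items are permuted by $(1\,2)(3\,4)$, and under $\sqrt\cdot$ by the four-cycle $(1\,3\,2\,4)$. Thus each item for $\alpha$ becomes, after replacing a subterm by a provably equivalent one, an item for $\beta$ that is supplied by the inductive hypothesis. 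The subterm replacements inside $\frac14\bullet(\cdot)$, $\oplus$ and $\rightarrow$ are justified by the congruence items 9, 10, 11 of Lemma \ref{BL2}.

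The main obstacle is not any individual derivation but the bookkeeping of the simultaneous induction: I must check that the permutation of the four ``quadrant'' antecedents produced by $\neg$ and $\sqrt\cdot$ is exactly $(1\,2)(3\,4)$ and $(1\,3\,2\,4)$ respectively, so that the inductive hypothesis always delivers the correct one of the four items. A secondary technical care is that every rewriting --- cancelling $\neg\neg$, commuting $\oplus$ and $\bullet$, and replacing $\sqrt\cdot$-subterms by $\frac12$ --- must be carried out strictly inside the $PMV(\frac12)$-fragment through Lemma \ref{BL2}, since $\vdash_{PMV(\frac12)}$ has no $\sqrt{}$ operation of its own and $T_D$ is consumed only at the atomic variable base case.
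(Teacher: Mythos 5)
Your proposal is correct and follows essentially the same route as the paper's proof: a simultaneous induction on the complexity of $\alpha$ in which the variable base case is supplied verbatim by $T_1,\dots,T_4$, the constant and binary cases collapse to $((\frac{1}{4}\bullet w)\oplus\frac{1}{8})\rightarrow(\frac{1}{4}\oplus\frac{1}{8})\rightarrow s$ via P1, P3 and Lemma \ref{BL2}-7, and the $\neg$ and $\sqrt{\cdot}$ cases permute the four items exactly as you say, by $(1\,2)(3\,4)$ and the four-cycle $(1\,3\,2\,4)$ respectively. The only cosmetic difference is that the paper obtains $(\frac{1}{4}\oplus\frac{1}{8})\rightarrow s$ from the $x=1$ instance inside $T_D$, whereas you derive it directly from the constant axioms C4--C6; both are sound.
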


\begin{proof}
We use induction on complexity of $\alpha$. The case $\alpha \in V
\cup \{0,\frac{1}{2} ,1\}$ is immediate from $T_1$. In particular if
$\alpha$ is $1$ then we have that $(((\frac {1}{4} \bullet 1)
\oplus (\frac {1}{4} \bullet \sqrt 1)) \rightarrow
s)\hspace{0.1cm} _t = ((\frac {1}{4} \bullet 1) \oplus(\frac {1}{4}
\bullet \frac {1}{2})) \rightarrow  s$. Therefore it is not very
hard to see that: $$T_D \vdash_{PMV(\frac{1}{2})} (\frac {1}{4} \oplus \frac
{1}{8} ) \rightarrow  s$$

\noindent
Suppose $\alpha$ is $\alpha_1 \star \alpha_2$ such that $\star \in \{\oplus, \bullet \}$. \\

By Axiom C3 and Lemma \ref{BL2} it follows that $\vdash_{PMV(\frac{1}{2})}
\alpha^1_t \Longleftrightarrow \alpha^4_t $ and $\vdash_{PMV(\frac{1}{2})} \alpha^2_t \Longleftrightarrow
\alpha^3_t $. Taking into account that $\alpha^1_t$ is $((\frac
{1}{4} \bullet \alpha_t) \oplus \frac {1}{8}) \rightarrow  s $ and
$\alpha^3_t$ is $((\frac {1}{4} \bullet \neg \alpha_t) \oplus
\frac{1}{8}) \rightarrow  s $, we consider the term $((\frac
{1}{4} \bullet \beta) \oplus \frac {1}{8}) \rightarrow s $ where
$\beta \in \{\alpha_t, \neg \alpha_t \}$. Therefore we need to see
that $T_D \vdash_{PMV(\frac{1}{2})} ((\frac {1}{4} \bullet \beta) \oplus
\frac{1}{8}) \rightarrow s $. In fact:

\begin{enumerate}
\item[(1)]
$T_D \vdash_{PMV(\frac{1}{2})} (\frac {1}{4} \oplus \frac {1}{8} ) \rightarrow  s $

\item[(2)]
$\vdash_{PMV(\frac{1}{2})} (\frac {1}{4} \bullet \beta) \rightarrow \frac{1}{4}$   \hspace{0.1 cm} {\footnotesize by Ax P3}

\item[(3)]
$\vdash_{PMV(\frac{1}{2})} ((\frac {1}{4} \bullet \beta) \rightarrow \frac{1}{4})
\rightarrow (((\frac {1}{4} \bullet \beta) \oplus
\frac{1}{8})\rightarrow (\frac{1}{4} \oplus \frac{1}{8})) $
\hspace{0.2 cm} {\footnotesize by Lemma \ref{BL2}-7 }

\item[(4)]
$\vdash_{PMV(\frac{1}{2})} ((\frac {1}{4} \bullet \beta) \oplus
\frac{1}{8})\rightarrow (\frac{1}{4} \oplus \frac{1}{8})$
\hspace{0.2 cm} {\footnotesize by MP 2,3}

\item[(5)]
$\vdash_{PMV(\frac{1}{2})} ((\frac{1}{4} \oplus \frac{1}{8}) \rightarrow s)
\rightarrow (((\frac {1}{4} \bullet \beta) \oplus \frac{1}{8}))
\rightarrow s) $ \hspace{0.2 cm} {\footnotesize by Ax W2, 4 and MP}

\item[(6)]
$T_D \vdash_{PMV(\frac{1}{2})} ((\frac {1}{4} \bullet \beta) \oplus \frac{1}{8}) \rightarrow s $ \hspace{2.5 cm} {\footnotesize MP 1,5}.

\end{enumerate}

\noindent
Suppose $\alpha$ is $\neg{\beta}$.

\begin{enumerate}
\item
$\alpha^1_t = (((\frac {1}{4} \bullet \neg \beta) \oplus (\frac
{1}{4} \bullet \sqrt {\neg  \beta})) \rightarrow \overline
s)\hspace{0.1cm} _t = ((\frac {1}{4} \bullet \neg  \beta_t) \oplus
(\frac {1}{4} \bullet \neg (\sqrt \beta)_t)) \rightarrow \overline s
= $

$(((\frac {1}{4} \bullet \neg  \beta) \oplus (\frac {1}{4} \bullet
\neg (\sqrt \beta))) \rightarrow \overline s)\hspace{0.1cm} _t =
\beta^2_t$. Then, by inductive hypothesis we have that, $T_D
\vdash_{PMV(\frac{1}{2})}\beta^2_t$.

\item
$\alpha^2_t = (((\frac {1}{4} \bullet \neg \neg \beta) \oplus (\frac
{1}{4} \bullet \neg \sqrt {\neg  \beta})) \rightarrow \overline
s)\hspace{0.1cm} _t = $

$((\frac {1}{4} \bullet \neg \neg \beta_t) \oplus (\frac {1}{4}
\bullet \neg \neg (\sqrt \beta)_t)) \rightarrow \overline s $. By
Proposition \ref{BL2} we have:

$\vdash_{PMV(\frac{1}{2})} ((\frac {1}{4} \bullet \neg \neg \beta_t) \oplus
(\frac {1}{4} \bullet \neg \neg (\sqrt \beta)_t)) \rightarrow
\overline s \Longleftrightarrow ((\frac {1}{4} \bullet \beta_t) \oplus (\frac
{1}{4} \bullet (\sqrt \beta)_t)) \rightarrow \overline s $

and $((\frac {1}{4} \bullet \beta_t) \oplus (\frac {1}{4} \bullet
(\sqrt \beta)_t)) \rightarrow \overline s = \beta^1_t$. Then, by
inductive hypothesis we have that, $T_D \vdash_{PMV(\frac{1}{2})}\beta^1_t$.

\item
$\alpha^3_t = (((\frac {1}{4} \bullet \neg \neg \beta) \oplus (\frac
{1}{4} \bullet \sqrt {\neg  \beta})) \rightarrow \overline
s)\hspace{0.1cm} _t =  ((\frac {1}{4} \bullet \neg \neg \beta_t)
\oplus (\frac {1}{4} \bullet \neg (\sqrt \beta)_t)) \rightarrow
\overline s $

By Proposition \ref{BL2} we have:

$\vdash_{PMV(\frac{1}{2})} ((\frac {1}{4} \bullet \neg \neg \beta_t) \oplus
(\frac {1}{4} \bullet \neg (\sqrt \beta)_t)) \rightarrow \overline s
\Longleftrightarrow ((\frac {1}{4} \bullet \beta_t) \oplus (\frac {1}{4} \bullet
\neg (\sqrt \beta)_t)) \rightarrow \overline s $

where $((\frac {1}{4} \bullet \beta_t) \oplus (\frac {1}{4} \bullet
\neg (\sqrt \beta)_t)) \rightarrow \overline s = \beta^4_t$. Then,
by inductive hypothesis we have $T_D \vdash_{PMV(\frac{1}{2})}\beta^4_t$.

\item
$\alpha^4_t = (((\frac {1}{4} \bullet  \neg \beta) \oplus (\frac
{1}{4} \bullet \neg \sqrt {\neg  \beta})) \rightarrow \overline
s)\hspace{0.1cm} _t = ((\frac {1}{4} \bullet \neg \beta_t) \oplus
(\frac {1}{4} \bullet \neg \neg (\sqrt \beta)_t)) \rightarrow
\overline s $ By Proposition \ref{BL2} we have:

$\vdash_{PMV(\frac{1}{2})} ((\frac {1}{4} \bullet \neg \beta_t) \oplus (\frac
{1}{4} \bullet \neg \neg (\sqrt  \beta)_t)) \rightarrow \overline s
\Longleftrightarrow ((\frac {1}{4} \bullet \neg \beta_t) \oplus (\frac {1}{4}
\bullet (\sqrt \beta)_t)) \rightarrow \overline s $

where $((\frac {1}{4} \bullet \neg \beta_t) \oplus (\frac {1}{4}
\bullet (\sqrt \beta)_t)) \rightarrow \overline s = \beta^3_t$.
Then, by inductive hypothesis we have $T_D \vdash_{PMV(\frac{1}{2})}\beta^3_t$.

\end{enumerate}

\noindent
Suppose $\alpha$ is $\sqrt \beta$.

\begin{enumerate}

\item
$\alpha^1_t = (((\frac {1}{4} \bullet  \sqrt \beta) \oplus (\frac
{1}{4} \bullet \sqrt {\sqrt  \beta})) \rightarrow \overline
s)\hspace{0.1cm} _t =  ((\frac {1}{4} \bullet \sqrt \beta_t) \oplus
(\frac {1}{4} \bullet \neg \beta_t)) \rightarrow \overline s $

But using Proposition \ref{BL2}

$\vdash_{PMV(\frac{1}{2})} ((\frac {1}{4} \bullet \sqrt \beta_t) \oplus (\frac
{1}{4} \bullet \neg \beta_t)) \rightarrow \overline s    \Longleftrightarrow
((\frac {1}{4} \bullet \neg \beta_t) \oplus (\frac {1}{4} \bullet
\sqrt \beta_t)) \rightarrow \overline s $

where $ ((\frac {1}{4} \bullet \neg \beta_t) \oplus (\frac {1}{4}
\bullet \sqrt \beta_t)) \rightarrow \overline s = \beta^3_t$. By inductive hypothesis we have $T_D \vdash_{PMV(\frac{1}{2})}\beta^3_t$. \\

For the rest of this case, i.e. $\alpha^2_t, \alpha^3_t,
\alpha^4_t$, it follows in a similar way.

\end{enumerate}

\qed
\end{proof}

\begin{theo}\label{DEDTRAS}
Let $T$ be a theory and $\alpha$ be a term both in $Term_{\cal IP}$. Then we have:

$$T \vdash \alpha \hspace{0.4 cm} iff \hspace{0.4 cm} T_t \cup T_D  \vdash_{PMV(\frac{1}{2})} \alpha_t $$

\end{theo}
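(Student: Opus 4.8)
The plan is to prove both implications by transferring proofs between the two calculi, using Proposition~\ref{EQUIVTRAS} as the bridge between a term $\alpha$ and its translation $\alpha_t$, and Proposition~\ref{PRETRAS} to absorb the only axiom schema that genuinely involves the square root, namely sQ5.

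For the left-to-right implication I would argue by induction on the length of a proof of $\alpha$ from $T$ in the ${\cal IP}$-calculus, showing at each stage that $T_t\cup T_D\vdash_{PMV(\frac{1}{2})}(\cdot)_t$. Two cases are immediate: a hypothesis $\gamma\in T$, whose translation $\gamma_t$ lies in $T_t$; and a modus ponens step, since the translation $t$ commutes with $\neg$ and with every binary connective (in particular $(\beta\rightarrow\alpha)_t=\beta_t\rightarrow\alpha_t$), so a modus ponens in the full calculus becomes a modus ponens in the fragment. It then remains to check that the translation of each axiom is provable in the fragment from $T_D$. The {\L}ukasiewicz, constant and product axioms W1--W4, C1--C6, P1--P5 are already axioms of the fragment, and their translations are again instances of these schemas (reading each $\sqrt{x}$ as a fresh variable), hence theorems. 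The axioms sQ1--sQ4 collapse under $t$ to reflexive biconditionals $\psi\Longleftrightarrow\psi$, which are provable by Lemma~\ref{BL2} (items 1 and 2); for instance $(\sqrt{\sqrt{\alpha}})_t=(\neg\alpha)_t=\neg\alpha_t$, so sQ1 translates to $\neg\alpha_t\Longleftrightarrow\neg\alpha_t$. The single substantive case is sQ5, whose translation is exactly the term denoted $\alpha^{1}_t$ in Proposition~\ref{PRETRAS}, and that proposition gives $T_D\vdash_{PMV(\frac{1}{2})}\alpha^{1}_t$.

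For the right-to-left implication I would first reduce the goal $T\vdash\alpha$ to $T\vdash\alpha_t$: by Proposition~\ref{EQUIVTRAS} we have $\vdash\alpha\Longleftrightarrow\alpha_t$, so Lemma~\ref{BL2}-3 yields $\vdash\alpha_t\rightarrow\alpha$, and one application of modus ponens closes the reduction. To obtain $T\vdash\alpha_t$ I would lift the given fragment proof of $\alpha_t$ from $T_t\cup T_D$ into the full calculus. This works because every fragment axiom (W1--W4, C1--C6, P1--P5) is an axiom of the ${\cal IP}$-calculus and modus ponens is the common rule, so only the extra hypotheses must be discharged. Each $\gamma_t\in T_t$ satisfies $T\vdash\gamma_t$, since $\gamma\in T$ gives $T\vdash\gamma$ and Proposition~\ref{EQUIVTRAS} gives $\vdash\gamma\Longleftrightarrow\gamma_t$. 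Each member of $T_D$ is in fact a theorem of the ${\cal IP}$-calculus: instantiating sQ5 at $x$, $\neg x$, $\sqrt{x}$ and $\neg\sqrt{x}$ respectively and rewriting the inner square roots by sQ1--sQ2 and commutativity of $\oplus$ (using the congruence lemmas collected in Lemma~\ref{BL2}) produces precisely the four quadrant forms $T_1,\dots,T_4$. Splicing these full-calculus derivations in place of the hypotheses turns the fragment proof into an ${\cal IP}$-proof of $\alpha_t$ from $T$.

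The main obstacle is matching the translation of sQ5 with the quadrant theory $T_D$: this is exactly where the two systems differ, since in the fragment $\sqrt{x}$ is an inert variable and the geometric content of the Poincar\'e inequality cannot be reproduced from the $\sqrt{,}$ axioms but must be imported as the hypotheses $T_D$. In the forward direction this is absorbed by Proposition~\ref{PRETRAS}, and in the backward direction by the observation that all four quadrant instances become derivable theorems once genuine $\sqrt{,}$ axioms are available; confirming that these two uses of $T_D$ are mutually inverse is the heart of the argument. The remaining bookkeeping---that $t$ commutes with the connectives and that the collapsing of sQ1--sQ4 is correct---is routine.
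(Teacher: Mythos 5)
Your proposal is correct and follows essentially the same route as the paper's proof: induction on proof length in the forward direction, with Proposition~\ref{PRETRAS} absorbing axiom sQ5 and Lemma~\ref{BL2} handling the collapsed axioms sQ1--sQ4, and in the backward direction a lift of the fragment proof into the full calculus, discharging $T_t$ via Proposition~\ref{EQUIVTRAS} and the members of $T_D$ as theorems obtained from sQ5. Your explicit check that the four quadrant theories $T_1,\dots,T_4$ arise by instantiating sQ5 at $x$, $\neg x$, $\sqrt{x}$ and $\neg\sqrt{x}$ and rewriting with sQ1--sQ2 is a detail the paper dismisses with ``it is not very hard to see,'' but the underlying argument is the same.
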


\begin{proof}
{\rm Suppose that $T \vdash \alpha$. We use induction on the length
of the proof of $\alpha$ noted by $Length(\alpha)$. If
$Length(\alpha) = 1$  then we have the following possibility:

\begin{enumerate}
\item
$\alpha$ is one of axioms W1, $\cdots$ , W4, \hspace {0.2 cm} C1 $\cdots$
C6 \hspace {0.2 cm} P1 , $\cdots$ , P5 . In this case $\alpha_t$ result an axiom of the $PMV(\frac{1}{2})$-fragment.

\item
$\alpha$ is one of the axioms sQ1, $\cdots$, sQ4. In this case
$\alpha_t$ looks like $\beta \Longleftrightarrow \beta$ in the $PMV$-fragment and
by Proposition \ref{BL2} 1 and 9 this terms are $PMV$-theorems .

\item
If $\alpha$ is an axiom sQ5 then we use Proposition \ref{PRETRAS} resulting $T_{Q5}\vdash \alpha_t$

\item
If $\alpha \in T$ it is clear that $\alpha_t \in T_t$.

\end{enumerate}

Suppose that the theorem is valid for $Length(\alpha) < n$. We consider  $Lengh(\alpha) = n$.
Thus we have a proof of $\alpha$ from $T$ as follows  $$\alpha_1, \cdots, \alpha_m \rightarrow
\alpha, \cdots, \alpha_m , \cdots, \alpha_{n-1}, \alpha$$ obtaining $\alpha$ by MP
from $\alpha_m \rightarrow \alpha$ and $\alpha_m$. Using inductive hypothesis we have $T_t \cup T_D
\vdash_{PMV_{\frac{1}{2}}} (\alpha_m \rightarrow \alpha)_t$ and  $T_t \cup T_D  \vdash_{PMV(\frac{1}{2})} (\alpha_m)_t$. Taking
into account that $(\alpha_m \rightarrow \alpha)_t$ is $(\alpha_m)_t \rightarrow \alpha_t$, by MP we have $T_t \cup T_D  \vdash_{PMV(\frac{1}{2})} \alpha_t$. \\

For the converse, suppose that $T_t \cup T_D  \vdash_{PMV(\frac{1}{2})}
\alpha_t$. Then there exist two subsets $\{\beta_1, \cdots, \beta_n
\} \subseteq T$ and  $\{\gamma_1, \cdots, \gamma_m \} \subseteq T_D$
such that $$\{(\beta_1)_t, \cdots, (\beta_n)_t , \gamma_1, \cdots,
\gamma_m \} \vdash_{PMV(\frac{1}{2})} \alpha_t $$ Consequently $\{(\beta_1)_t,
\cdots, (\beta_n)_t , \gamma_1, \cdots, \gamma_m \} \vdash \alpha_t
$. By Lemma \ref{EQUIVTRAS} we have that $\vdash \alpha \equiv
\alpha_t $ and $\vdash \beta_i \equiv (\beta_i)_t $ for each $i \in
\{1, \cdots, n\}$. Moreover, by Axiom sQ5, it is not very hard to see
that $\vdash \gamma_j$ for each for each $j \in \{1, \cdots, m\}$.
Thus $\{\beta_1, \cdots, \beta_n \} \vdash \alpha$ and $T\vdash
\alpha$. }

\qed
\end{proof}

\begin{coro} \label{DEDTRASTEO}
Let $\alpha \in Term_{\cal IP}$. Then we have  $ \vdash \alpha$ iff $T_D  \vdash_{PMV(\frac{1}{2})} \alpha_t $

\qed
\end{coro}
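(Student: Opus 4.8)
The plan is to recognize that this corollary is simply the instance $T = \emptyset$ of Theorem \ref{DEDTRAS}, so that no genuinely new argument is required; the entire task is to unwind the definitions in that special case.

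First I would recall the statement of Theorem \ref{DEDTRAS}: for any theory $T$ and any term $\alpha$ in $Term_{\cal IP}$, one has $T \vdash \alpha$ if and only if $T_t \cup T_D \vdash_{PMV(\frac{1}{2})} \alpha_t$. Specializing to the empty theory, the left-hand side $\emptyset \vdash \alpha$ is by definition exactly $\vdash \alpha$, i.e.\ the assertion that $\alpha$ is a theorem of the ${\cal IP}$-calculus.

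Next I would simplify the right-hand side. By the definition of the $PMV(\frac{1}{2})$-translation over a theory, $T_t = \{\beta_t : \beta \in T\}$, and hence $\emptyset_t = \emptyset$. Consequently $T_t \cup T_D$ collapses to $T_D$, and the right-hand side of the biconditional becomes $T_D \vdash_{PMV(\frac{1}{2})} \alpha_t$. Putting the two observations together, the equivalence furnished by Theorem \ref{DEDTRAS} reads precisely $\vdash \alpha$ iff $T_D \vdash_{PMV(\frac{1}{2})} \alpha_t$, which is the claimed statement.

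There is no real obstacle here: all the substantive work—the inductive verification that deductions in the ${\cal IP}$-calculus correspond to deductions of the translated terms against the auxiliary theory $T_D$—was already carried out in the proof of Theorem \ref{DEDTRAS}, relying on the translation equivalence of Proposition \ref{EQUIVTRAS} and the provability of the $T_D$-terms via Proposition \ref{PRETRAS}. The corollary merely discharges the ambient theory to record the most useful consequence, namely a syntactic characterization of the \emph{theorems} of the ${\cal IP}$-calculus purely within the $PMV(\frac{1}{2})$-fragment. This is exactly why the statement is marked immediately with $\Box$: it is an unwinding of definitions rather than a fresh proof.
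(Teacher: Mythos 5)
Your proposal is correct and matches the paper exactly: the corollary is stated with an immediate $\Box$ precisely because it is the instance $T=\emptyset$ of Theorem \ref{DEDTRAS}, with $\emptyset_t=\emptyset$ collapsing $T_t\cup T_D$ to $T_D$. No further comment is needed.
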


\noindent
Let $S_A$ be a ${\cal S}^\Box$-algebra from the ${\cal PMV}_\frac{1}{2}$-chain $A$. Consider the ${\cal S}^\circ$-algebra given by the sub algebra $D_A$
of $S_A$. We introduce the following sets: $$E_{D_A} = \{\mbox{interpretations} \hspace{0.1cm} e: Term_{\cal IP} \rightarrow  D_A \}$$ $$V_D = \{\mbox{${\cal PMV}_\frac{1}{2}$-homorphisms}\hspace{0.1cm} v: Term^{\sqrt{V}}_{PMV(\frac{1}{2})} \rightarrow Reg(D_A) \hspace{0.2cm} s.t. \hspace{0.1cm} v(T_D)= 1 \}$$

\begin{prop} \label{SEMTRAS}
Let $e \in E_{D_A}$ and the restriction $v_e = e_p \mid_{Term^{\sqrt{V}}_{PMV(\frac{1}{2})}}$. Then the assignment $e \mapsto v_e$
is a bijection $E_{D_A} \rightarrow V_D$ such that $e_p(\alpha) = v_e(\alpha_t)$.
\end{prop}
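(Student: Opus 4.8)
The plan is to prove the three assertions in turn: first the pointwise identity $e_p(\alpha)=v_e(\alpha_t)$, then that $v_e$ genuinely lies in $V_D$, and finally that $e\mapsto v_e$ is injective and surjective. I would handle the identity and the well-definedness together. Since $e:Term_{\cal IP}\rightarrow D_A$ is an ${\cal IP}$-homomorphism and $p:D_A\rightarrow Reg(D_A)$ preserves $\oplus,\bullet,\neg$ and the constants $0,\frac12,1$, the composite $e_p=pe$ preserves exactly the fragment operations; hence its restriction $v_e$ to $Term^{\sqrt V}_{PMV(\frac12)}$ (with each $\sqrt x$ read as an atom) is a ${\cal PMV}_{\frac12}$-homomorphism whose image lies in $Reg(D_A)$ under the identification of Proposition \ref{ALGCIRC}. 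For the identity, Proposition \ref{EQUIVTRAS} gives $\vdash\alpha\Longleftrightarrow\alpha_t$, so by soundness (Proposition \ref{SOUDNES1}) $e_p(\alpha\Longleftrightarrow\alpha_t)=1$, which in the $PMV$-algebra $Reg(D_A)$ forces $e_p(\alpha)=e_p(\alpha_t)=v_e(\alpha_t)$, the last step because $\alpha_t$ is a fragment term. To see $v_e(T_D)=1$, I would note that each element of $T_D$ is, up to provable equivalence in the fragment, the translation $\beta_t$ of an sQ5 instance $\beta$: the $T_1$ elements are the translations of sQ5 with $\alpha=x$, while $T_2,T_3,T_4$ arise from sQ5 with $\alpha=\neg x$, $\alpha=\sqrt x$ and $\alpha=\neg\sqrt x$, using sQ1, sQ2 and the commutativity of $\oplus$. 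Every sQ5 instance is an axiom, hence a tautology, so $e_p(\beta)=1$, and by the identity $v_e(\beta_t)=e_p(\beta)=1$. Thus $v_e\in V_D$.

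Injectivity is then immediate from Proposition \ref{EQPRB}: if $v_e=v_{e'}$, then in particular $e_p$ and $e'_p$ agree on every atomic term $x$ and on every $\sqrt x$, which is precisely the hypothesis guaranteeing $e=e'$.

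The real work is surjectivity. Given $v\in V_D$, I would define $e$ on variables by $e(x)=(v(x),v(\sqrt x))$, identifying $Reg(D_A)$ with the $PMV_{\frac12}$-chain $A$, and let $e$ be its unique ${\cal IP}$-homomorphic extension. The crux is to verify $e(x)\in D_A$, i.e.\ that the point $(a,b)=(v(x),v(\sqrt x))$ lands in one of the quadrants $Q1,\dots,Q4$. Because $A$ is a chain, $(a,b)$ belongs to exactly one of the regions $A^{\llcorner},A^{\lrcorner},A^{\urcorner},A^{\ulcorner}$ according to the position of $a,b$ relative to $\frac12$, and I would match each region to the corresponding hypothesis $v(T_i)=1$. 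Evaluating a $T_1$ element under $v$ yields $((\tfrac14\bullet a)\oplus(\tfrac14\bullet b))\rightarrow\overline s=1$ for every $s$ with $\overline s\geq\frac{1+\sqrt2}{4\sqrt2}$, that is $(\tfrac14\bullet a)\oplus(\tfrac14\bullet b)\leq s$ for all such $s$; by the density of $G_A(\frac12)$ (Lemma \ref{DENS} and Theorem \ref{MEDIOREP1}) this is exactly the $Q1$ inequality, and likewise $v(T_3),v(T_2),v(T_4)$ deliver the $Q2,Q3,Q4$ inequalities. Hence in every region the matching quadrant condition holds and $(a,b)\in D_A$. Finally, since $p$ is the first projection, $v_e(x)=p(v(x),v(\sqrt x))=v(x)$ and $v_e(\sqrt x)=p(\sqrt{(v(x),v(\sqrt x))})=p(v(\sqrt x),\neg v(x))=v(\sqrt x)$, so $v_e$ and $v$ agree on all atoms and therefore coincide. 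I expect the main obstacle to be precisely this quadrant bookkeeping in the surjectivity step: one must align the four families $T_1,\dots,T_4$ with the four quadrant inequalities and invoke the density of $G_A(\frac12)$ to pass from the universally quantified bound over $s$ to genuine membership in $D_A$.
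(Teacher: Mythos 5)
Your proof is correct, and its skeleton (well-definedness, injectivity via Proposition \ref{EQPRB}, surjectivity by setting $e(x)=(v(x),v(\sqrt x))$) matches the paper's, but two steps are handled genuinely differently. First, for the identity $e_p(\alpha)=v_e(\alpha_t)$ the paper runs a case-by-case induction on the complexity of $\alpha$ (with separate subcases for $\sqrt{\neg\alpha_1}$, $\sqrt{\sqrt{\alpha_1}}$, $\sqrt{\alpha_2\star\alpha_3}$), whereas you derive it in one stroke from $\vdash\alpha\Longleftrightarrow\alpha_t$ (Proposition \ref{EQUIVTRAS}) plus soundness, using that $u\odot w=1$ forces $u=w=1$ and that the order on the $MV$-algebra $Reg(D_A)$ is antisymmetric; this is shorter and reuses machinery already established, at the cost of leaning on the syntactic translation lemma where the paper's argument is purely semantic. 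Similarly your verification of $v_e(T_D)=1$ routes through sQ5 instances and the translation, while the paper evaluates the $T_D$ terms directly against axiom P2 of the ${\cal IP}$-algebra $D_A$; both work. Second, in the surjectivity step you explicitly check that $(v(x),v(\sqrt x))$ lands in $D_A$ by matching $T_1,T_3,T_2,T_4$ to the quadrant conditions of $Q1,Q2,Q3,Q4$ respectively --- a verification the paper's proof silently omits even though it is exactly where the hypothesis $v(T_D)=1$ is consumed, so your version actually fills a small gap. One remark: the appeal to density of $G_{[0,1]}(\frac12)$ is superfluous here, since by the convention of Remark \ref{CONVENC} the quadrant inequality $\leq\frac{1+\sqrt2}{4\sqrt2}$ in an abstract chain is \emph{defined} as the universally quantified family of inequalities $\leq s$ over all admissible $s\in G_A({\bf\frac12})$, which is literally what $v(T_i)=1$ delivers; no passage to real numbers is needed.
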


\begin{proof}
We will see that $e \rightarrow v_e$ is well defined in the sense that $v_e \in V_D$.
Let $\alpha \in T_D$. Since $T_D \subseteq Term^{\sqrt{V}}_{PMV(\frac{1}{2})}$ then
$v_e(\alpha) = e_p(\alpha)$ and  $v_e(\alpha) =
(\frac{a}{4} \oplus \frac{\sqrt a}{4}) \rightarrow \overline s$ for some $a \in
D_A$ and $\overline s \geq \frac{ 1 + \sqrt 2}{4 \sqrt 2}$. Since $D_A \in {\cal IP}$ then,
$(\frac{a}{4} \oplus  \frac{\sqrt a}{4}) \leq \frac{ 1 + \sqrt 2}{4 \sqrt 2}$ resulting $v_e(\alpha) = 1$. Hence $v_e(T_D) = 1$. \\

Suppose that $v_{e_1} = v_{e_2}$. Let $t$ be an atomic term in
$term_{\cal IP}$. Then we have that ${e_1}_p(t) = v_{e_1}(t) =
v_{e_2}(t) = {e_2}_p(t)$ and ${e_1}_p(\sqrt{t}) = v_{e_1}(\sqrt{t})
= v_{e_2}(\sqrt{t}) = {e_2}_p(\sqrt{t})$. Therefore by Proposition
\ref{EQPRB}, $e_1 = e_2$ and $e \mapsto v_e$ is injective.

Now we will prove the surjectivity. Let $v \in V_D$. For each atomic
term $t$ in $Term_{\cal IP}$ we define the interpretation
$e:Term_{\cal IP} \rightarrow D_A$: such that $e(t) = (v(t), v(\sqrt
t))$ for each atomic term $t$. By induction on complexity of terms
we prove that $v_e = v$. For atomic terms in ${\cal IP}$ it follows
by definition of $e$. If $t$ is an atomic terms then $e(\sqrt t) =
(v(\sqrt t), \neg v(t))$ and we have that $v_e(\sqrt t) = v(\sqrt
t)$. That constitutes the base of the induction in the language
$Term^{\sqrt{V}}_{PMV(\frac{1}{2})}$. Now let our claim hold
whenever the complexity of term is less than $n$ and $\alpha$ have
complexity $n$

\begin{itemize}
\item
if $\alpha \in Term^{\sqrt{V}}_{PMV(\frac{1}{2})}$ is $\alpha_1
\star \alpha_2$ where $\star \in \{\oplus, \bullet\}$ then we have
that $e(\alpha) = e(\alpha_1)\star e(\alpha_1) = (v(\alpha_1)\star
v(\alpha_2), \frac{1}{2})$ and $v_e(\alpha) = v(\alpha_1)\star
v(\alpha_2) = v(\alpha_1 \star \alpha_2) = v(\alpha)$.

\item
if $\alpha \in Term^{\sqrt{V}}_{PMV(\frac{1}{2})}$ is $\neg
\alpha_1$ then we have that $e(\alpha) = \neg e(\alpha_1)$ and
$v_e(\alpha) = \neg v_e(\alpha_1) = \neg v(\alpha_1) = v(\alpha)$.
\end{itemize}

\noindent
Thus $v = v_e$ and $e \mapsto v_e$ is a bijection from $E_{D_A}$ onto $V_D$.

Let $e\in E_{D_A}$. By induction on complexity of terms we prove
that for each $\alpha \in Term_{\cal IP}$, $e_p(\alpha) =
v_e(\alpha_t)$. Let $\alpha$ be an atomic term then $e_p(\alpha) =
e_p(\alpha_t) = v_e(\alpha_t)$. Now let our claim hold whenever the
complexity of term is less than $n$ and $\alpha$have complexity $n$.
If $\alpha$ is $\alpha_1 \star \alpha_2$ where $\star \in \{\oplus,
\bullet\}$ or $\alpha$ is $\neg \alpha_1$, this case is routine.
Suppose that $\alpha$ is $\sqrt{\alpha_1}$. Let us consider the
following cases:

\begin{itemize}
\item
If $\alpha_1$ is an atomic term. Then its follows from the fact that $({\sqrt{\alpha_1}})_t = \sqrt{\alpha_1}$.

\item
$\alpha$ is $\sqrt{\neg \alpha_1}$. Then $e_p(\alpha) =
e_p(\sqrt{\neg \alpha_1}) = \neg e_p(\sqrt{\alpha_1}) = \neg
v_e((\sqrt{\alpha_1})_t) = v_e(\neg (\sqrt{\alpha_1})_t) =
v_e((\sqrt{\neg \alpha_1})_t) = v_e(\alpha_t)$.

\item
$\alpha$ is $\sqrt{\sqrt{\alpha_1}}$. Then $e_p(\alpha) =
e_p(\sqrt{\sqrt{\alpha_1}}) = \neg e_p(\alpha_1) = \neg
v_e({{\alpha_1}}_t) = v_e(\neg {{\alpha_1}}_t) =
v_e((\sqrt{\sqrt{\alpha_1}})_t) = v_e(\alpha)$.

\item
$\alpha_1$ is $\sqrt{\alpha_2 \star \alpha_3 }$ where $\star \in
\{\oplus, \bullet\}$. Then $e_p(\alpha) = e_p(\sqrt{\alpha_2 \star
\alpha_3 }) = (\frac{1}{2}, \frac{1}{2}) = e_p{(\frac{1}{2})} =
v_e((\sqrt{\alpha_2 \star \alpha_3})_t) = v_e(\alpha_t)$.

\end{itemize}

\noindent
Hence $e_p(\alpha)= v_e(\alpha_t)$ for each $\alpha \in Term_{\cal IP}$.

\qed
\end{proof}

\begin{theo} \label{COMPLETENESS}
Let $T$ be a theory and $\alpha$ be a term both in $Term_{\cal IP}$ then $$T \models \alpha \hspace{0.4 cm} iff \hspace{0.4 cm} T  \vdash \alpha$$

\end{theo}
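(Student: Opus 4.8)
The plan is to prove the two implications separately, with soundness being immediate and completeness carrying all the weight via the translation machinery of this section. For soundness, suppose $T \vdash \alpha$ and let $e$ be any model of $T$. By Proposition \ref{SOUDNES1} the axioms are tautologies and modus ponens preserves the value $1$, so $e_p(\alpha) = 1$; since this holds for every model of $T$, we obtain $T \models \alpha$.

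For the converse I would argue by contraposition: assuming $T \not\vdash \alpha$, I will exhibit an interpretation $e$ that is a model of $T$ but satisfies $e_p(\alpha) \neq 1$, witnessing $T \not\models \alpha$. First I would transport the problem into the $PMV(\frac{1}{2})$-fragment through Theorem \ref{DEDTRAS}: from $T \not\vdash \alpha$ we get $T_t \cup T_D \not\vdash_{PMV(\frac{1}{2})} \alpha_t$. Since the fragment $\langle Term^{\sqrt{V}}_{PMV(\frac{1}{2})}, \vdash_{PMV(\frac{1}{2})}\rangle$ is an ordinary $PMV$-calculus, its standard completeness apparatus applies: by Lemma \ref{EXTCOMP} the theory $T_t \cup T_D$ extends to a consistent complete theory $T'$ that still fails to prove $\alpha_t$, and by Theorem \ref{LINDE} the associated Lindenbaum algebra $L_{T'}$ is a $PMV_{\frac{1}{2}}$-chain. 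The canonical valuation $v(\beta) = [\beta]$ into $L_{T'}$ satisfies $v(\gamma) = 1$ iff $T' \vdash_{PMV(\frac{1}{2})} \gamma$ (using Lemma \ref{BL2} and axiom C1); hence $v(T_D) = 1$ and $v(\beta_t) = 1$ for every $\beta \in T$, while $v(\alpha_t) \neq 1$ because $T' \not\vdash_{PMV(\frac{1}{2})} \alpha_t$.

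It then remains to lift this fragment valuation to a genuine ${\cal IP}$-interpretation. I would set $A = L_{T'}$, the $PMV_{\frac{1}{2}}$-chain just obtained, and form the standard ${\cal S}^{\circ}$-algebra $D_A$ of Proposition \ref{ALGCIRC}, which is an irreversible Poincar\'{e} algebra with $Reg(D_A)$ being ${\cal PMV}$-isomorphic to $A$. Identifying $L_{T'}$ with $Reg(D_A)$ through this isomorphism, $v$ becomes a ${\cal PMV}_{\frac{1}{2}}$-homomorphism from $Term^{\sqrt{V}}_{PMV(\frac{1}{2})}$ into $Reg(D_A)$ with $v(T_D) = 1$, so $v \in V_D$. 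By the bijection of Proposition \ref{SEMTRAS} there is an interpretation $e \in E_{D_A}$ with $e_p(\beta) = v(\beta_t)$ for all $\beta \in Term_{\cal IP}$. Consequently $e_p(\beta) = v(\beta_t) = 1$ for each $\beta \in T$, so $e$ is a model of $T$, whereas $e_p(\alpha) = v(\alpha_t) \neq 1$. Thus $T \not\models \alpha$, closing the contrapositive.

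I expect the delicate point to be precisely the interface between the fragment and the full calculus, namely that the valuation living in the purely $PMV(\frac{1}{2})$-world must validate the auxiliary theory $T_D$ (the syntactic encoding of axiom sQ5) so as to qualify as a member of $V_D$ and be pulled back through Proposition \ref{SEMTRAS}. This is exactly what forces the target to be the ${\cal S}^{\circ}$-algebra $D_A$ rather than the larger $S_A$: only in $D_A$ is the Poincar\'{e} constraint recorded in $T_D$ automatically satisfied by every interpretation, so that $V_D$ matches the ${\cal IP}$-interpretations on $D_A$. This correspondence is the heart of the ``non standard'' algebrization and the reason the completeness theorem holds with respect to ${\cal IP}$.
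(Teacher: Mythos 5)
Your proposal is correct and follows essentially the same route as the paper's proof: reduce to the $PMV(\frac{1}{2})$-fragment via Theorem \ref{DEDTRAS}, extend to a complete theory by Lemma \ref{EXTCOMP}, pass to the Lindenbaum chain $L_{T'}$ of Theorem \ref{LINDE}, and pull the valuation back to an interpretation in $D_{L_{T'}}$ through Proposition \ref{SEMTRAS}. The only cosmetic difference is that you argue by contraposition where the paper argues by contradiction, and you spell out more explicitly why $v(T_D)=1$ and $v(T_t)=1$, which the paper leaves implicit.
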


\begin{proof}
We assume that $T$ is consistent. Suppose that $T \models \alpha$
but $T$ does not prove $\alpha$. By Theorem \ref{DEDTRAS} $T_t \cup
T_D$ does not prove $\alpha_t$ in the $PMV(\frac{1}{2})$-fragment. In view of Lemma
\ref{EXTCOMP} and Theorem \ref{LINDE} there exists a theory $T'$ in
the $PMV(\frac{1}{2})$-fragment such that $T_t \cup T_D \subseteq T'$, $T'$ does
not prove $\alpha_t$ and $L_{T'}$ is a totally ordered $PMV_{\frac{1}{2}}$-algebra.
Thus $[\alpha_t] \not= 1$. If we consider the natural $PMV_{\frac{1}{2}}$-valuation
$v:  Term^{\sqrt{V}}_{PMV(\frac{1}{2})} \rightarrow L_{T'}$ then $[\alpha_t]  = v(\alpha_t)  \not= 1$. By
Proposition \ref{SEMTRAS}, there exits an interpretation $e: Term_{\cal IP} \rightarrow
D_{L_{T'}}$ such that $e_p(\alpha) = v(\alpha_t) \not = 1$ which is
a contradiction since $e_p(T) = 1$. The converse is immediate.

\qed
\end{proof}

\vspace{0.2cm}

Now we can establish a compactness theorem for the quantum gates
logic

\begin{theo} \label{COMPAC}
Let $T$ be a theory and $\alpha$ be a term both in $Term_{\cal IP}$. Then we have:

$$T\models \alpha \hspace{0.2cm} \mbox{iff} \hspace{0.2cm} \exists \hspace{0.1cm} T_0
\subseteq T \mbox{finite} \hspace{0.1cm} \mbox{such that}
\hspace{0.2cm} T_0\models \alpha $$

\end{theo}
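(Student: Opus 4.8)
The plan is to reduce the semantic statement to a purely syntactic one by invoking the strong completeness Theorem \ref{COMPLETENESS}, and then to exploit the elementary fact that Hilbert-style proofs are finite objects. Recall from the definitions that $T \models \alpha$ means $e_p(\alpha) = 1$ for every interpretation $e$ with $e_p(T) = 1$, and that a proof from $T$ is by definition a \emph{finite} sequence of terms each of which is an axiom, a member of $T$, or obtained from earlier terms by \emph{modus ponens}. These two ingredients are all that the argument requires.

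First I would dispose of the backward implication, which is immediate and needs no completeness. Suppose there exists a finite $T_0 \subseteq T$ with $T_0 \models \alpha$. If $e$ is any interpretation such that $e_p(T) = 1$, then in particular $e_p(\beta) = 1$ for every $\beta \in T_0$, since $T_0 \subseteq T$; hence $e$ is a model of $T_0$, and by $T_0 \models \alpha$ we get $e_p(\alpha) = 1$. As $e$ was arbitrary, $T \models \alpha$. This is simply the monotonicity of $\models$ with respect to the theory.

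For the forward implication, assume $T \models \alpha$. By Theorem \ref{COMPLETENESS} this yields $T \vdash \alpha$, so there is a proof $\alpha_1, \dots, \alpha_n = \alpha$ from $T$. Because this sequence is finite, only finitely many of its members belong to $T$; collecting exactly those members gives a finite subset $T_0 \subseteq T$. The very same sequence is then a proof of $\alpha$ from $T_0$, so $T_0 \vdash \alpha$, and applying Theorem \ref{COMPLETENESS} once more (in the direction from provability to semantic consequence) we conclude $T_0 \models \alpha$. This establishes the existence of the required finite $T_0$ and completes the equivalence.

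I do not anticipate a genuine obstacle here: the compactness theorem is a routine corollary of strong completeness together with the finitary character of the calculus, and the only point deserving care is to invoke Theorem \ref{COMPLETENESS} in both directions (soundness-style for $T_0 \vdash \alpha \Rightarrow T_0 \models \alpha$, completeness-style for $T \models \alpha \Rightarrow T \vdash \alpha$) rather than re-proving any semantic fact from scratch.
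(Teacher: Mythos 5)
Your argument is correct and is essentially the same as the paper's: both directions reduce to Theorem \ref{COMPLETENESS}, with the forward implication extracting the finitely many members of $T$ occurring in a proof of $\alpha$ and the backward implication being immediate. The only difference is that you spell out the monotonicity of $\models$ and the second application of completeness (soundness direction) explicitly, where the paper leaves these implicit.
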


\begin{proof}
If $T\models \alpha$ by Theorem \ref{COMPLETENESS} there exists a proof of
$\alpha$, $\alpha_1, \cdots \alpha_n, \alpha$ from $T$. If we
consider $T_0 = \{ \alpha_k \in T : \alpha_k \in  \{\alpha_1, \cdots
\alpha_n\}\}$ then $T_0\models \alpha $. The converse is immediate.
\qed
\end{proof}

``

{\small \noindent Hector Freytes e-mail: hfreytes@gmail.com \hspace{0.3cm } hfreytes@dm.uba.ar }

{\small \noindent Graciela Domenech e-mail: domenech@iafe.uba.ar }

\end{document}